\newcommand{\FF}{\vspace*{\medskipamount}}
\newcommand{\BBB}{\vspace*{-\bigskipamount}}
\newcommand{\cE}{\mathcal{E}}
\newcommand{\cG}{\mathcal{G}}
\newcommand{\cO}{\mathcal{O}}
\newcommand{\Paragraph}[1]{\BBB\paragraph{#1}}
\newcommand{\remove}[1]{}
\newcommand{\ceil}[1]{\left\lceil #1 \right\rceil}
\newcommand{\polylog}{\text{polylog }}
\newcommand{\dk}[1]{{\color{black}#1}}
\newcommand{\qed}{\hfill $\square$ \smallbreak}
\newenvironment{proof}{\noindent{\bf Proof:}}{\qed}
\newtheorem{theorem}{Theorem}
\newtheorem{lemma}{Lemma}
\newtheorem{corollary}{Corollary}
\newtheorem{proposition}{Proposition}
\begin{document}

\title{Deterministic Fault-Tolerant Distributed Computing\\ 
in Linear Time and Communication \vfill}

\author{	Bogdan S. Chlebus \footnotemark[1]
		\and
		Dariusz R. Kowalski \footnotemark[2]
		\and
		Jan Olkowski \footnotemark[3]}  	

\footnotetext[1]{	School of Computer and Cyber Sciences, Augusta University, Augusta, Georgia, USA.}

\footnotetext[2]{	School of Computer and Cyber Sciences, Augusta University, Augusta, Georgia, USA. Partially supported by 
the NSF grant 2131538.}

\footnotetext[3]{	Department of Computer Science, University of Maryland, College Park, Maryland, USA.}

\date{}

\maketitle

\vfill

\begin{abstract}
We develop deterministic algorithms for the problems of consensus, gossiping and checkpointing with nodes prone to failing.
Distributed systems are modeled as synchronous complete networks. 
Failures are represented either as crashes or authenticated Byzantine faults.
The algorithmic goal is to have  both linear running time and linear amount of communication for as large an upper bound $t$ on the number of faults as possible, with respect to the number of nodes~$n$.
For crash failures, these bounds of optimality are $t=\cO(\frac{n}{\log n})$ for consensus and $t=\cO(\frac{n}{\log^2 n})$ for gossiping and checkpointing, while the running time for each algorithm is $\Theta(t+\log n)$.
For the authenticated Byzantine model of failures, we show how to accomplish both linear running time and communication for  $t=\cO(\sqrt{n})$.
We show how to implement the algorithms in the single-port model, in which a node may choose only one other node to send/receive a message to/from in a round, such as to preserve the range of running time and communication optimality.
We prove lower bounds to show the optimality of some performance bounds. 

\vfill

\noindent
\textbf{Keywords:} 
crash failure, 
Byzantine fault, 
authentication,
consensus, 
checkpointing, 
gossiping,
Ramanujan graph,
lower bound
\end{abstract}

\vfill

~

\thispagestyle{empty}

\setcounter{page}{0}


\newpage

\section{Introduction}

\label{sec:introduction}

We consider deterministic algorithms in synchronous message-passing distributed systems. 
Nodes are prone to failures, which are modeled either as crashes or authenticated Byzantine faults.
Running time is a natural performance metric of algorithms in synchronous systems.
The amount of communication is another performance metric, for which we use either the number of point-to-point messages or the total number of bits in these messages.
A node can send a direct message to any other node in one round. 
To structure communication, we use overlay networks with topologies conducive to support efficient exchange of information.

Let $n$ be the number of nodes and $t$ be an upper bound on the number of faults in an execution.
Both numbers $n$ and $t$ are known, in that they can be used in codes of algorithms. 
Natural algorithmic problems in the considered settings require both linear time $\Omega(t)$ and linear number of messages $\Omega(n)$, even with at most one faulty node, to be solved deterministically, which includes consensus, gossiping and checkpointing.
A deterministic distributed algorithm solving such problems has ultimate optimal performance if it works within time~$\cO(t)$ and with communication~$\cO(n)$. 
For numbers $0<t<n$, the closer~$t$ is  to~$n$, the more challenging it is for an algorithm to attain efficiency with respect to both running time and communication.
In this work, we aim to develop deterministic distributed algorithms for consensus, gossiping and checkpointing that are asymptotically optimal with respect to both running time and communication for bounds  on the number of faults~$t$ that are as close to the total number~$n$ of nodes as possible.

The considered problems have similar goals to achieve, which makes it possible to use related  paradigms in design of their algorithmic solutions. 
Each node is initialized with an input and it eventually produces an output. 
If the output is the same at some nodes then these nodes \emph{agree} on this output.
For consensus and checkpointing, all non-faulty nodes need to eventually agree on the same output.
For checkpointing, the output is a set of nodes that includes each node operational at the round of termination and excludes every node that crashed at the start of an execution.
Gossiping is a weaker version of checkpointing, in which nodes do not need to agree on the outputs.

We explore properties of Ramanujan graphs that are related to fault-tolerance and quantitative analysis of proliferation of information in networks with topology modeled by such graphs. 
We show in Section~\ref{sec:ramanujan-graphs} that Ramanujan graphs have specific topological properties conducive to fault-tolerance and time/communication efficiency of distributed algorithms.
We present deterministic algorithms for agreement-related problems in Section~\ref{sec:agreement-crashes}.
The problems include  almost-everywhere agreement, spreading common value, and consensus.
Nodes are initialized with binary values $0$ or $1$ and messages that nodes exchange carry only one bit. 

Next, we review the contributions.
Faults are modeled as crashes, unless stated otherwise.
We give an algorithm for almost-everywhere agreement in which at least $\frac{3}{5} n$ nodes decide on the same value, assuming $t<\frac{n}{5}$.
We give an algorithm for spreading common value, in which initially at least $\frac{3}{5} n$ nodes are initialized with a common value and eventually every node decides on this value, assuming $t<\frac{n}{5}$.
We give an algorithm for consensus has the property that an execution takes $\cO(t+\log t)$ rounds and $\cO(n+t \log t)$ bits  are transmitted in total, assuming $t<\frac{n}{5}$.
The efficiency of the algorithm is reflected by the properties that one crash delays termination by $\cO(1)$ rounds, there are $\cO(1)$ bits transmitted per node, and $\cO(\log t)$ bits transmitted per crash.
The optimum number of bits/messages $\cO(n)$ are sent as long as $t=\cO(\frac{n}{\log n})$.
Optimizing the bit-communication performance of algorithms for binary consensus with nodes prone to crashes crashes was addressed by Galil, Mayer, and Yung~\cite{GalilMY95}, who developed a consensus algorithm that sends $\cO(n)$ bits in messages for any bound $t<n$ on the number of crashes, but it runs in time that is exponential in~$n$. 
We also give an algorithm for consensus for arbitrary $0<t<n$, which works in  at most $n+3(1+\lg n)$ rounds and  nodes send at most $\bigr(\frac{5}{1-\alpha}\bigr)^{8} \,n\lg n$ one-bit  messages, where $\alpha=\frac{t}{n}$.

We present a gossiping algorithm working in $\cO(\log n \log t)$ rounds while nodes send $\cO(n+t\log n \log t)$ messages (of linear size) in Section~\ref{sec:gossip}.  
We give next a checkpointing algorithm in Section~\ref{sec:checkpoining}, which is based on our consensus and gossiping algorithms.
The algorithm for checkpointing works in linear time $\cO(t)$ and nodes send $\cO(n + t\log n \log t)$ messages. 
This improves on the  most message-efficient time-optimal solution previously known, by Galil, Mayer,  and Yung~\cite{GalilMY95}, by a  polynomial factor.


The model of communication we assume is multi-port, unless stated otherwise.
We show that our consensus, gossiping and checkpointing algorithms can be implemented in the single-port model with similar asymptotic running times and the same communication performance bounds as in the multi-port model. 
We also prove  in Section~\ref{sec:single-port} that running time $\Omega(t+\log n)$ in necessary for all the three considered problems in the single-port model.

Table~\ref{tab:summary-results} summarizes the ranges for the number of faults $t<n$ for which the algorithms attain the best possible linear time $\cO(t)$ and linear~communication~$\cO(n)$.
These performance bounds hold for the multi-port setting, but some of them also hold in the single-port model, in which case the optimum time becomes $\Theta(t+\log n)$, by the lower bound given in Section~\ref{sec:single-port}.

\begin{table}
\label{tab:summary-results}
\centering
 \begin{tabular}{|c|c|c|c|c|}
\hline
fault type & problem & optimality range of $t$ & reference & single port?  \\
\hline \hline
& consensus & $\cO(1)$ & \cite{GalilMY95} & N/A \\
crash&consensus & $\cO(\frac{n}{\log n})$  & Sec.~\ref{sec:agreement-crashes} & Yes\\
failures  &gossip/checkpointing & $\cO(1)$  & \cite{GalilMY95} & N/A \\
&gossip/checkpointing & $\cO\bigl(\frac{n}{\log^2 n}\bigr)$  & Sec.~\ref{sec:gossip}--\ref{sec:checkpoining} &  Yes\\
\hline
 authenticated & consensus & $\cO(1)$ & \cite{DolevS83} & No\\
Byzantine & consensus & $\cO\left(\sqrt{n}\right)$ & Sec.~\ref{sec:ab-consensus} & No \\
 \hline
\end{tabular}

\caption{\label{table:summary}
The ranges for a bound $t$ on the number of crashes for which a deterministic solution has both optimal  running time~$\cO(t)$ and  communication~$\cO(n)$. 
The communication cost denotes the number of bits in messages in the case of consensus, and it is the number of messages for gossiping and checkpointing.
For authenticated Byzantine faults, we consider only messages sent by non-faulty nodes.
Nodes send messages in the multi-port mode.
The single-port question is about the possibility to adapt the algorithm to single port while maintaining asymptotic performance bounds.
The mark N/A means that the previous work did not consider the single-port model, and  it is not apparent how to implement the algorithm in that model without increasing asymptotic performance. }
\end{table}

\Paragraph{Previous work.}

A consensus algorithm needs to send~$\Omega(n)$ messages because each node is required to send at least one message.
Galil, Mayer, and Yung~\cite{GalilMY95} developed an algorithm that has $\cO(n)$ messages sent and which runs in $\cO(n^{1+\varepsilon})$ rounds, for any $0<\varepsilon<1$.
They also gave an algorithm for binary consensus sending $\cO(n)$ bits in messages, but the algorithm runs in exponential time.
Chlebus and Kowalski~\cite{ChlebusK-JCSS06} showed that consensus can be solved  by a deterministic algorithm in  $\cO(t)$ time and with $\cO(n\log^2 t)$ messages with the assumption that the number $n-t$ of non-faulty nodes satisfies $n-t=\Omega(n)$.
Chlebus, Kowalski, and Strojnowski~\cite{ChlebusKS-PODC09} gave deterministic algorithms for binary consensus operating in time $\cO(t)$ that send $\cO(n\log^2 n)$ bits for $t<\frac{n}{3}$ and $\cO(n\log^4 n)$ bits for any $t<n$.

The problem of checkpointing was first considered by De Prisco, Mayer, and Yung~\cite{DePriscoMY94}, who gave an algorithm with $\cO(tn)$ message performance.
A checkpointing algorithm developed by Chlebus, Gasieniec, Kowalski, and Schwarzmann~\cite{ChlebusGKS-IC2017} sends $\cO(n(n-t))$ messages.
Galil, Mayer, and Yung~\cite{GalilMY95} gave an algorithm solving checkpointing in time $\cO(t\,8^{1/\epsilon})$ and with $\cO(n+t\, n^{\epsilon})$ messages, for any $\epsilon>0$. 
All the previously known algorithms for consensus and checkpointing are optimal with respect to both running time and communication performance for~$t=\cO(1)$.

Fault-tolerant gossiping was introduced by Chlebus and Kowalski~\cite{ChlebusK-JCSS06}.
They developed a deterministic algorithm solving gossiping with running time $\cO(\log^2 t)$ while generating $\cO(n\log^2 t)$ messages, provided $n-t=\Omega(n)$. 
They also showed a lower bound  $\Omega(\frac{\log n}{\log(n\log n)-\log t})$ on the number of rounds in case $\cO(\polylog n)$ amortized messages are used per node.
Algorithms attaining $\cO(n\,\polylog n)$ message performance while maintaining poly-logarithmic running time for any $t<n$ are known in the literature, see~\cite{ChlebusK-JCSS06, ChlebusK-DISC06,georgiou2005efficient}.
Algorithms  for the studied problems of consensus, gossiping and checkpointing that are efficient with respect to both running time and communication when implemented in the single-port model have not been known prior to this work, to the best of the authors'  knowledge.


\Paragraph{Related work.}

Consensus is among the central problems in distributed  and communication algorithms.
It was introduced by Pease, Shostak, and Lamport~\cite{PeaseSL80} and Lamport, Shostak,  and Pease~\cite{LamportSP82}.
A consensus algorithm is early-stopping  if it runs in $\cO(f+1)$ time, where~$f$ is the number of crashes  actually occurring in an execution, while the algorithm may be designed for a known upper bound~$t<n$ on the number of crashes.
Dolev, Reischuk, and Strong~\cite{DolevRS90} gave an early-stopping solution for consensus with an arbitrary bound $t<n$ on the number of node crashes  and a lower bound $\min\{ t+1,f+2\}$ on the running time. 
Coan~\cite{Coan93} gave a consensus algorithm running in time $\cO(f+1)$ that uses messages of size logarithmic in the size of the range of input values; see also Bar-Noy, Dolev, Dwork, and Strong~\cite{Bar-NoyDDS92} and Berman, Garay, and Perry~\cite{BermanGP92}.
Chlebus and Kowalski~\cite{ChlebusK-DISC06} developed an early stopping consensus algorithm sending $\cO(n \log^5 n)$ messages. 
Dolev and Lenzen~\cite{DolevL13} showed that any crash-resilient  consensus  algorithm deciding in exactly $f + 1$ rounds has  $\Omega(n^2f)$ worst-case message complexity.
Randomized approach has been considered by Chor, Merritt, and Shmoys~\cite{ChorMS89} who gave a randomized algorithm for consensus that has $\cO(\log n)$ round complexity and $\cO(n^2\log n)$ message complexity with high probability, while tolerating fewer than~$\frac{n}{2}$ crashes.
Bar-Joseph and Ben-Or~\cite{Bar-JosephB98} gave a randomized consensus algorithm against an adaptive adversary that controls crashes.
The algorithm works in $\cO\bigl(\frac{\sqrt{n}}{\log{n}}\bigr)$ expected time, which is provably optimal, while generating $\cO\bigl(\frac{n^{5/2}}{\log{n}}\bigr)$ messages and communications bits.
Kowalski and Mirek~\cite{KowalskiM19} demonstrated how to decrease the number of messages to  $\cO(n^{3/2}\,\text{ polylog } n)$, while keeping the number of bits at $\Theta(n^{5/2}\text{\,polylog } n)$ and slowing down the algorithm by a factor of $\cO(\log^2 n)$, by using deterministic fault-tolerant gossip developed in~\cite{ChlebusK-DISC06}.
Chlebus and Kowalski~\cite{ChlebusK-SPAA09} developed a randomized  consensus algorithm that terminates in the expected $\cO(\log n)$ time and such that the expected number of bits sent and received by each node is~$\cO(\log n)$ when the adversary is  oblivious and such that a bound~$t$ on the number of crashes is a constant fraction of the number of nodes.  
Gilbert and Kowalski~\cite{GilbertK10} presented a randomized consensus algorithm that tolerates up to~$\frac{n}{2}$ crashes and  terminates in $\cO(\log n)$ time and sends $\cO(n)$ messages with high probability.
Gilbert, Guerraoui, and Kowalski~\cite{GilbertGK07} developed an indulgent  consensus algorithm, that solves consensus assuming eventual synchrony, while in synchronous executions it is early-stopping and achieves  $\cO(n\,\text{polylog }n)$ message performance.
Robinson, Scheideler, and Setzer~\cite{RobinsonSS18} showed how to achieve an almost-everywhere consensus  in $\cO(\log n)$ time with high probability against adversaries controlling crashes that are weaker than adaptive.
Consensus has been also considered in settings beyond crashes in synchronous systems. 
Dolev and Reischuk~\cite{DolevR85} and Hadzilacos and Halpern~\cite{HadzilacosH93} proved the $\Omega(tn)$ lower bound on the message performance of deterministic consensus for authenticated Byzantine failures. 
King and Saia~\cite{KingS11} showed that subject to some limitation on the adversary and requiring termination with high probability, a sub-quadratic expected communication $O(n^{3/2}\polylog{n})$ can be achieved even for general Byzantine faults. 
Abraham et al.~\cite{AbrahamCDNP0S19} showed the necessity of such limitations to achieve sub-quadratic running time  for Byzantine faults.
In asynchronous settings, Alistarh, Aspnes, King and Saia~\cite{AlistarhAKS18} showed how to obtain almost optimal communication complexity $\cO(n t + t^2 \log^2 t)$ if fewer than $\frac{n}{2}$ nodes may fail, which improved upon the previous result $\cO(n^{2}\log^2 {n})$ by Aspnes and Waarts~\cite{AspnesW96} and is asymptotically almost optimal due to a lower bound shown by Attiya and Censor-Hillel~\cite{AttiyaH10}.
Chlebus, Kowalski, and Strojnowski~\cite{ChlebusKS-DISC10} gave  a quantum algorithm for binary consensus, executed by crash-prone quantum processors, that operates in $\cO(\text{polylog }n)$ rounds while  sending $\cO(n \text{ polylog } n)$ qubits against the adaptive adversary.

\Paragraph{General perspective.}

Informative expositions of formal models of distributed and networked systems prone to failures of components, as frameworks for development and study of distributed algorithms,  include the books by Attiya and Welch~\cite{Attiya-Welch-book2004}, Herlihy and Shavit~\cite{HerlihyShavit-book}, Lynch~\cite{Lynch-book96}, and Raynal~\cite{Raynal2010-Synthesis}.
Attiya and Ellen~\cite{Attiya-Ellen-book-2014} review techniques to show  impossibilities and lower bounds for algorithmic problems in distributed and networked systems. 
For studies of properties and construction of graphs with suitable expansion properties, see~\cite{DavidoffSV-book03,HooryLW06,KrebsS2011-book}.

\section{Technical Preliminaries}
\label{sec:preliminaries}

We model distributed systems as synchronous networks consisting of $n$ nodes.
Each node has a unique integer name in the set $[n]=\{1,\ldots,n\}$.
Nodes are prone to failing, which is modeled either as crashes or authenticated Byzantine faults.
The letter $t$ denotes an upper bound on a possible number of faults.
The numbers~$n$ and $t$ are known to every node and can be parts of codes  of algorithms.
A crashed node stops any activity.
A node faulty in the authenticated Byzantine sense may undergo arbitrary state transitions but it cannot forge messages claiming that they are forwarded from other nodes. 

Timings of specific crashes are determined by an adversary who is constrained by an upper bound~$t$ on the total number of nodes that may crash in an execution of an algorithm.
A node that crashes at a round stops any activity in the following rounds, and does not send nor receive messages.
A node that has not crashed by a round is \emph{operational} at the round.
We disregard crash of a nodes after the node halts in an execution while operational, which is the same as assuming that a node does not crash after halting voluntarily.
Nodes that halt voluntarily while operational are \emph{non-faulty} in the execution.

The nodes communicate among themselves by sending messages.
Any pair of nodes can directly exchange messages at every round.
During a round, all messages sent to a node in this round get delivered.
If a node can multicast and receive messages to/from any set of recipients/senders at a round then this is the \emph{multi-port model}.
If a node can send/receive a message to/from at most one sender/recipient at a round then we call this the \emph{single-port model}.
A node does not obtain any signal from any of its ports that messages have been delivered to the port and need to be received by the node.


We use the running time and amount of communication as performance metrics  of algorithms.
Nodes begin execution of an algorithm at the same round.
\emph{Runtime} performance is measured by the number of rounds that occur until all non-faulty nodes have halted.
\emph{Communication} performance is measured either by the number of messages or the total number of bits in all messages sent  in the course of an execution; \dk{in case of Byzantine faults, we count only messages sent by non-faulty nodes (as Byzantine nodes could flood the system with arbitrary number of messages)}.


\Paragraph{Algorithmic problems.}

We develop algorithms for problems related to reaching agreement.
Each of these problems is parametrized by a number of nodes $n$ and an upper bound on the number of crashes~$t$.
We want algorithms to have a property that every node eventually halts, unless it fails.
A node \emph{decides} on a value if it sets a dedicated variable to the value, and this assignment of value to a variable is irrevocable once made.

In the problems of consensus, each node starts with its initial input value. 
We assume additionally that each initial value  is either $0$ or~$1$.
Algorithms solving consensus need to satisfy the requirements of validity, agreement, and termination. 
\emph{Validity} means that a decision is on the input value of some node.  
\emph{Agreement} means that no two nodes decide on different values.
\emph{Termination} means that each node eventually decides, unless it fails.

The problem of almost everywhere agreement is about a fraction of nodes agreeing on some common value.
The $\kappa$-almost-everywhere-agreement, or $\kappa$-AEA for short, has a number parameter~$\kappa$ such that $\frac{1}{2}<\kappa<1$.
The algorithmic task to accomplish is to have at least $\kappa n$ nodes that  eventually either decide or fail.
The agreement and validity requirements apply only to nodes that actually decide.

A problem we call $\kappa$-spread-common-value, or $\kappa$-SCV for short, has a number parameter $\kappa$ such that $\frac{1}{2}<\kappa<1$.
An instance of the problem is determined by  initializing a dedicated variable  in at least $\kappa n$ nodes to  a \emph{common value} that is not \texttt{null}, and initializing this variable to \texttt{null} at the remaining nodes.
The algorithmic task in this problem is to have every non-faulty node eventually decide on the common value.

In the problems of checkpointing and gossiping,  each node works to create a collection of node names and their associated values, called an \emph{extant set}, and eventually decide on it, unless crashing.
An extant set of a node that decided on it is a \emph{decided} extant set.

In checkpointing, an extant set is just a collection of nodes' names without any associated values.
We want to accomplish deciding on extant sets subject to the following conditions: 
(1)~if a node~$j$ crashed prior to sending any message then $j$ is not in any decided  extant set, (2)~if a node~$j$  halted operational then~$j$ belongs to all decided extant sets, and (3)~all decided extant sets are equal. 

In gossiping, each node is initially equipped with an input value called its \emph{rumor}.
An extant set consists of pairs made of a node's name and its rumor; a pair consisting of a node's $i$ name and its rumor is the \emph{node~$i$'s pair}.
We want to accomplish deciding on extant sets subject to the following conditions:
(1)~if a node~$j$ crashed prior to sending any message then $j$'s pair is not in any  decided extant set, (2)~if a node~$j$ halted operational then $j$'s pair  belongs to all decided extant sets.
We do not require that all decided extant sets are equal in the case of gossiping.

\Paragraph{Overlay graphs.}

Algorithms use overlay graphs of choice to save on communication.
We consider nodes as vertices and messages are sent only between specific pairs of nodes connected by edges in the graph.
Let $G=(V,E)$ denote a simple graph, where $V$ is the set of vertices and $E$ is the set of edges. 
The \emph{subgraph of~$G$ induced by~$W\subseteq V$}, denoted~$G|_W$, is the subgraph of~$G$ containing the vertices in~$W$ and all the edges with both endpoints in~$W$.
A node adjacent to a node~$v$ is a \emph{neighbor of~$v$} and the set of all the neighbors of a node~$v$ is the \emph{neighborhood of~$v$}.
The notation $N^i_G(W)$  denotes the \emph{generalized neighborhood of~$W$ of radius~$i$}, which is the set of all vertices in~$V$ of distance at most~$i$ from some node in~$W$  in graph~$G$.
In particular, the neighborhood of~$v$ is denoted~$N_G(v)=N^1_G(v)$.
For two disjoint set of vertices $W_1$ and $W_2$, an edge $(v,w)\in E$  \emph{connects set~$W_1$ with~$W_2$}  if $v\in W_1$ and $w\in W_2$. 
The number of edges connecting $W_1$ with $W_2$ is denoted by~$e(W_1,W_2)$.
Next, we list properties of overlay graphs and their vertices relevant to efficiency of algorithms, following~\cite{ChlebusKS-PODC09}.
Let $\delta$, $\gamma$ and~$\ell$ be positive integers and $0< \varepsilon <1$ be a real number.
\begin{description}
\item[\rm\em Dense neighborhood:]
For a node $v\in V$, a set $S\subseteq N^{\gamma}_G(v)$  is said to be \emph{$(\gamma,\delta)$-dense-neighborhood for~$v$} if every node in~$S\cap N^{\gamma-1}_G(v)$ has at least $\delta$ neighbors in~$S$.

\item[\rm\em Survival subset:]
For a set of vertices $B\subseteq V$, a subset $C\subseteq B$ is a \emph{$\delta$-survival subset for~$B$} if every node's degree in~$G|_C$ is at least~$\delta$. 

\item[\rm\em Compactness:]
graph~$G$ is said to be \emph{$(\ell,\varepsilon,\delta)$-compact} if, for any set $B\subseteq V$ of at least $\ell$ vertices, there is a subset $C \subseteq B$ of at least $\varepsilon\ell$ vertices that is a $\delta$-survival subset for~$B$.
\end{description}

Next we discuss properties of overlay graphs that make them suitable expanders.

\begin{description}
\item[\rm\em Expansion:]
graph~$G$ is \emph{$\ell$-expanding}, or is an \emph{$\ell$-expander}, if any two disjoint subsets of~$\ell$ vertices each are connected by an edge.
\end{description}

For a subset of vertices $W\subset V$, the \emph{edge boundary of~$W$}, denoted~$\partial W$, consists of the set edges in~$E$ that connect $W$  to $V\setminus W$. 
The minimum of the ratios $\frac{|\partial W|}{|W|}$, over all nonempty sets of vertices~$W$ such that $|W|\le \frac{|V|}{2}$, is this graph's \emph{edge expansion ratio}, denoted $h(G)$.
Regular graphs with sufficiently large constant vertex degree~$d$ and expansion ratio $h(G)>\frac{d}{3}$ can be constructed in polynomial time, see~\cite{DavidoffSV-book03,HooryLW06,KrebsS2011-book}.

\Paragraph{Structuring communication in overlay graphs.}

Nodes may communicate with their neighbors in suitable overlay graphs.
A major part of such communication is abstracted as local probing, which is an operation proposed and used in~\cite{ChlebusKS-PODC09} for similar purposes.
For local probing to work as intended, the overlay graph needs to have suitable expansion and compactness properties.
The purpose of local probing is to broadcast a value while testing the size of each node's generalized neighborhood of a suitably small radius for occurrence of faults.
We assume crashes in the context of local probing, unless stated otherwise.
There are two fixed integer parameters $\gamma$ and~$\delta$ that determine the duration and communication of local probing.
An instance of local probing takes exactly~$\gamma$ consecutive rounds.
Normally, each node $p$ sends a message to each neighbor at each among these~$\gamma$ rounds.
At the same time, normally a node $p$ receives messages from all its neighbors that have not crashed.
Node crashes may disrupt communication of non-faulty nodes as follows: if a node~$p$ receives fewer than $\delta$ messages at a round of local probing, then \emph{node~$p$ pauses prematurely}, in that it stops sending messages until $\gamma$ dedicated  rounds are over.
If a node $p$ has not paused prematurely during an execution of local probing, then \emph{node~$p$~survives} this instance of local probing.
Suppose that local probing  is applied with $B_1$ as its start set of operational   vertices  and  $B_2\subseteq B_1$ is its end set  of vertices that are operational just after the termination of local probing, while the vertices in~$B_1\setminus B_2$ crash during this local probing.
We summarize the properties of local probing that make it useful next a Proposition~\ref{pro:local-probing}, with a brief argument, given for completeness sake.

\begin{proposition}[\cite{ChlebusKS-PODC09}]
\label{pro:local-probing}

If there is a $(\gamma,\delta)$-dense-neighborhood for $p\in B_2$ in graph~$G|_{B_2}$,  then node~$p$ survives local probing. 
If there is no $(\gamma,\delta)$-dense-neighborhood for~$p\in B_1$ in graph~$G|_{B_1}$, then node~$p$ does not survive local probing.
Every node in a $\delta$-survival set~$C$ for~$B_2$ survives local probing.
\end{proposition}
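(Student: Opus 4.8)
My plan is to prove all three parts by a round-by-round analysis of which nodes are still sending messages during the local probing. For $j=1,\dots,\gamma$ let $A_j$ be the set of nodes that send messages at round~$j$ of the local probing, and let $A_{\gamma+1}$ be the set of nodes that survive. A node that sends is operational, so $A_j\subseteq B_1$ for every~$j$; a node that sent at round~$j+1$ did not pause at round~$j$, so $A_1\supseteq A_2\supseteq\cdots\supseteq A_{\gamma+1}$, and every node in $A_{j+1}$ received at least~$\delta$ messages at round~$j$ and hence has at least~$\delta$ neighbors in~$A_j$. To prove the first claim, let $S$ be a $(\gamma,\delta)$-dense-neighborhood for~$p$ in $G|_{B_2}$; note $p\in S\subseteq B_2$, so no node of~$S$ crashes during the probing. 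I would prove by induction on $j=1,\dots,\gamma$ the invariant: every $v\in S$ with $d_{G|_{B_2}}(p,v)\le\gamma-j$ receives at least~$\delta$ messages at each of the rounds $1,\dots,j$. For $j=1$ this holds because at round~$1$ every operational node sends, and such a~$v$ lies in $S\cap N^{\gamma-1}_{G|_{B_2}}(p)$, hence has $\ge\delta$ neighbors in $S\subseteq B_2$. For the inductive step, such a~$v$ again lies in $S\cap N^{\gamma-1}_{G|_{B_2}}(p)$ and so has $\ge\delta$ neighbors in~$S$; each is within distance $\gamma-(j-1)$ of~$p$ and lies in~$B_2$, so by the inductive hypothesis it received $\ge\delta$ messages at rounds $1,\dots,j-1$, is therefore still active at round~$j$, and sends to~$v$. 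Taking $j=\gamma$ and $v=p$ shows $p$ receives $\ge\delta$ messages at every round, i.e. $p$ survives.

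The third claim is the easy case of the same idea: if $C\subseteq B_2$ is a $\delta$-survival subset for~$B_2$, a straightforward induction on~$j$ shows every node of~$C$ is active at round~$j$, since $C$ does not meet the crashing set and every $v\in C$ has $\ge\delta$ neighbors in~$C$, all active at round~$j$ by the hypothesis, so $v$ receives $\ge\delta$ messages and does not pause. For the second claim I would argue the contrapositive: assuming~$p$ survives, I construct a $(\gamma,\delta)$-dense-neighborhood for~$p$ in $G|_{B_1}$. Trace $p$'s survival backwards through the active sets: set $T_0=\{p\}$ and $T_{k+1}=N_G(T_k)\cap A_{\gamma-k}$ for $k=0,\dots,\gamma-1$. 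Since $p\in A_{\gamma+1}$, induction gives $T_k\subseteq A_{\gamma+1-k}\subseteq B_1$ and $T_k\subseteq N^k_{G|_{B_1}}(p)$, so $S_0:=\bigcup_{k=0}^{\gamma}T_k\subseteq N^{\gamma}_{G|_{B_1}}(p)$; moreover every node of~$T_k$ with $k\le\gamma-1$ lies in $A_{\gamma+1-k}$ with $\gamma+1-k\ge 2$, hence has $\ge\delta$ neighbors in $A_{\gamma-k}\cap N_G(T_k)=T_{k+1}\subseteq S_0$. Thus every node of $S_0\cap N^{\gamma-1}_{G|_{B_1}}(p)$ lying in $T_0\cup\cdots\cup T_{\gamma-1}$ has $\ge\delta$ neighbors in~$S_0$, and to finish one passes to the $\delta$-core~$S$ of~$S_0$ inside the radius-$\gamma$ ball (repeatedly deleting vertices at distance $\le\gamma-1$ from~$p$ that have fewer than~$\delta$ neighbors in the current set, never deleting vertices at distance exactly~$\gamma$): then $S$ is a $(\gamma,\delta)$-dense-neighborhood by construction, and the point is that~$p$ is never deleted, because a deleted vertex at distance $\le\gamma-1$ must, by an induction on deletion rounds mirroring the rounds of the probing, have received fewer than~$\delta$ messages at some round~$\le\gamma$ and so fails to survive, whereas~$p$ survives.

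The routine cases are the first and third claims, which need only the monotone active-set invariant. The delicate point is the last step of the second claim: one must control the interaction between distance from~$p$ in $G|_{B_1}$ and how long a vertex stays active during the probing, so as to ensure that the backward trace really lies inside the radius-$\gamma$ ball and, more importantly, that the $\delta$-core peeling cannot cascade all the way to~$p$. The vertices on the distance-$\gamma$ boundary of the ball — which are exempt from the density condition and are therefore never peeled — are exactly what makes this go through, and treating them carefully is the part I expect to require the most work to write cleanly.
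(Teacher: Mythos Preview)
Your approach matches the paper's. For the first and third claims you and the paper both run a straightforward induction on the round number, using the defining density (resp.\ survival-subset) condition to show that the relevant nodes keep receiving at least~$\delta$ messages and hence never pause. For the second claim both you and the paper argue the contrapositive by tracing $p$'s survival backwards through the rounds: the paper builds a chain $S_1\subseteq S_2\subseteq\cdots\subseteq S_\gamma$ with $S_i\subseteq N^i_G(p)$ and every vertex of $S_i$ surviving the first $\gamma-i$ rounds (this is essentially your $\bigcup_{j\le i}T_j$), and then simply asserts that $S_\gamma$ is the required $(\gamma,\delta)$-dense neighborhood; it does not introduce any peeling step.

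You are right to flag the one delicate point: a vertex that enters the witness set only at the last layer (your $T_\gamma$) may nevertheless sit at distance $\le\gamma-1$ from~$p$, and the backward trace gives no control on its degree inside~$S_0$. The paper's sketch glosses over exactly this. However, your proposed cure --- pass to a $\delta$-core by peeling while protecting the distance-$\gamma$ boundary --- is not justified by the argument you give. The claim ``a deleted vertex at distance $\le\gamma-1$ must \ldots\ have received fewer than~$\delta$ messages at some round~$\le\gamma$'' is not correct as stated: a vertex $w\in T_\gamma\setminus\bigcup_{k<\gamma}T_k$ at distance $\le\gamma-1$ can perfectly well survive the entire probing (its $\ge\delta$ active senders simply need not lie in~$S_0$), so it can be the first vertex peeled despite surviving. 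Once such a~$w$ is removed, a genuine $T_{\gamma-1}$-vertex that counted~$w$ among its $\ge\delta$ witnesses in~$T_\gamma$ can become deficient, and nothing in your ``mirroring the rounds of the probing'' induction blocks this cascade from propagating toward~$p$. To close the gap cleanly you need a sharper invariant tying together the distance layer in~$G|_{B_1}$ and the index~$j$ for which a vertex lies in~$A_j$, rather than appealing to a generic peeling-versus-probing correspondence.
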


\begin{proof}
Let $S\subseteq B_2$ be a $(\gamma,\delta)$-dense-neighborhood for $p$ in graph~$G|_{B_2}$. 
All $p$'s neighbors in~$S$ are operational, by the specification of~$B_2$. 
A node in~$S\cap N^{\gamma-1}_G(p)$ receives at least $\delta$ messages at the first round of local probing, as at least these many messages arrive from the neighbors.
This continues by induction on $i\le \gamma$, since no node in~$S\cap N^{\gamma-i}_G(v)$ pauses prematurely at the $i$-th round of local probing, so $p$ survives.
Suppose that $p$ survives local probing even though there is no $(\gamma,\delta)$-dense-neighborhood for~$p\in B_1$ in graph~$G|_{B_1}$.
There exists a set $S_1 \subseteq N_G(p)$ of at least $\delta$ vertices such that all the vertices in~$S_1$ survive the first $\gamma-1$ rounds of local probing.
For each $1\le i\le \gamma$, there is a set $S_i$ such that $S_{i-1} \subseteq S_i \subseteq N^{i}_G(p)$, and all the vertices in~$S_i$ survive the first $\gamma-i$ rounds of local probing, by induction on~$i$.
The set $S_{\gamma}$ satisfies the definition of $(\gamma,\delta)$-dense-neighborhood for $p$ in graph~$G|_{B_1}$.
This is a contradiction with lack of $(\gamma,\delta)$-dense-neighborhood for~$p\in B_1$ in graph~$G|_{B_1}$. 
Consider a $\delta$-survival set~$C$ for~$B_2$. 
Each node in~$C$ has at least~$\delta$ neighbors in~$C$, by the definition of  a $\delta$-survival set.
It follows by induction on the round number, that no node in~$C$ pauses prematurely at any round of local probing, since $C\subseteq B_2$. 
\end{proof}

Ramanujan graphs, discussed in Section~\ref{sec:ramanujan-graphs}, are  suitably compact and expanding.
These two properties suffice to derive the needed functionality of local probing if Ramanujan graphs determine the topology of overlay networks.

\section{Ramanujan Graphs}

\label{sec:ramanujan-graphs}

Graphs that are edge expanders can be used as overlay networks built into fault-tolerant distributed algorithms.
Such graphs are regular and parametrized by vertex degrees and the spectral gap.
The spectral gap defining Ramanujan graphs depends only on vertex degrees, which allows to discuss properties of Ramanujan graphs by referring to vertex degrees only.
We want vertex degrees in such regular graphs to be constant, in that that they do not depend on the number of vertices~$n$ and a bound on the number of faults~$t$. 

For a constant $d$, let $G=G(n, d)$ denote a $d$-regular Ramanujan  graph of $n$ vertices.
Let $\lambda_{1} \ge \lambda_{2} \ge \ldots \ge \lambda_{n}$ be the eigenvalues of $G(n,d)$. 
We use the notation $\lambda = \max(|\lambda_{2}|, |\lambda_{n}|)$. 
For $G=G(n,d)$ to be \emph{Ramanujan} means $\lambda \le 2\sqrt{d - 1}$.
We also use the following notations: 
\[
\ell(n,d) = 4nd^{-1/8}, \text{\ \ and \ \ } \delta(d) = \frac{1}{2}\bigl(d^{7/8} - d^{5/8}\bigr)
\ .
\]
Next, we show some properties of Ramanujan graphs that  support fault tolerance of networks with topologies of such graphs. 
In some of the proofs, we use the Expander Mixing Lemma, which is given as Lemma~2.5 in~\cite{HooryLW06}.

\begin{theorem}
\label{thm:ramanujan-expanding}

A Ramanujan graph $G(n,d)$ is $\ell(n,d)$-expanding.
\end{theorem}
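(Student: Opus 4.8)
The plan is to apply the Expander Mixing Lemma (Lemma~2.5 in~\cite{HooryLW06}) together with the defining spectral bound $\lambda \le 2\sqrt{d-1}$ of a Ramanujan graph. The Expander Mixing Lemma asserts that for a $d$-regular graph on $n$ vertices and any two vertex subsets $S,T$,
$$\Bigl| e(S,T) - \frac{d\,|S|\,|T|}{n} \Bigr| \le \lambda\sqrt{|S|\,|T|}\ ,$$
hence $e(S,T) \ge \sqrt{|S|\,|T|}\bigl(\tfrac{d}{n}\sqrt{|S|\,|T|} - \lambda\bigr)$. Since $e(S,T)$ is a nonnegative integer, it is enough to check that this lower bound is strictly positive whenever $S$ and $T$ are disjoint with $|S|,|T| \ge \ell(n,d)$.

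First I would observe that being connected by an edge is monotone in the sizes of $S$ and $T$, so it suffices to treat disjoint sets of size exactly $\lceil \ell(n,d)\rceil$; in particular $\sqrt{|S|\,|T|} \ge \ell(n,d) = 4nd^{-1/8}$. Plugging this into the lower bound, it remains to show that $\tfrac{d}{n}\cdot 4nd^{-1/8} = 4d^{7/8} > \lambda$. Using the Ramanujan inequality $\lambda \le 2\sqrt{d-1} \le 2\sqrt{d} = 2d^{1/2}$ together with $d^{1/2}\le d^{7/8}$ for $d\ge 1$, we get $\lambda \le 2d^{7/8} < 4d^{7/8}$, so the bound is positive, which forces $e(S,T)\ge 1$ and proves the theorem.

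The argument is short and its content is essentially just the Expander Mixing Lemma; there is no genuine obstacle. The only thing to watch is the slack in the elementary inequalities, and here there is plenty: the threshold $4d^{7/8}$ implicit in $\ell(n,d)$ beats the eigenvalue bound $2\sqrt{d-1}$ by roughly a factor of $d^{3/8}$, far more than the factor $2$ that a tight application of the Mixing Lemma would require. This generosity is presumably intentional, since the same graph $G(n,d)$ and the same parameter $\ell(n,d)$ are reused in the subsequent compactness statements, where the extra room is needed.
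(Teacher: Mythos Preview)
Your proof is correct and follows essentially the same approach as the paper: apply the Expander Mixing Lemma and the Ramanujan bound $\lambda\le 2\sqrt{d-1}\le 2\sqrt d$, then check that the resulting lower bound on $e(S,T)$ is positive when $|S|,|T|\ge \ell(n,d)$. The only cosmetic difference is that you factor as $e(S,T)\ge \sqrt{|S||T|}\bigl(\tfrac{d}{n}\sqrt{|S||T|}-\lambda\bigr)$ and bound the inner factor, whereas the paper bounds the two summands $\tfrac{d|A||B|}{n}$ and $\lambda\sqrt{|A||B|}$ separately (using $|A||B|\ge(4nd^{-1/8})^2$ for the first and $|A||B|\le n^2$ for the second); both routes give the same conclusion with the same slack.
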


\begin{proof}
Let $G(n,d)$ be a $d$-regular Ramanujan  graph of $n$ vertices such that $\lambda \le 2\sqrt{d - 1}$.
Take any two disjoint sets $A$ and $ B$ of vertices  such that $|A|, |B| \ge \ell(n,d) = 4nd^{-1/8}$.
Clearly $|A||B| \le n^2$. 
The Expander Mixing Lemma, applied to the two sets $A$ and $B$, yields
\[
\Bigl| e(A, B) - \frac{d|A||B|}{n} \Bigr| \le \lambda \sqrt{|A|\cdot|B|}
\ .
\]
This implies 
\[
e(A, B) \ge \frac{d|A| |B|}{n} - \lambda \sqrt{|A||B|}
\ .
\]
These estimates combined together give 
\[
e(A,B)\ge \frac{d}{n}\bigl(4\cdot nd^{-1/8}\bigr)^{2} - 2\sqrt{d-1}n \ge n\bigl(16\cdot d^{3/4} - 2\sqrt{d}\bigr)\ge
2 n \sqrt{d}(8 d^{1/4}-1)
\ , 
\]
which  is greater than~$0$ for  $d \ge 1$.
\end{proof}

Next, we show two auxiliary properties of Ramanujan graphs related to compactness. 
For a set of vertices~$S$ in $G(n,d)$, the number of edges in the subgraph of $G=G(n,d)$ induced by~$S$ is denoted by~$\text{vol}(S)$.

\begin{lemma}
\label{lemma:internal-lower-bound}

If a set of vertices $S$ in a Ramanujan graph $G=G(n,d)$ has at least $ nd^{-1/8}$ elements, then  $\mathrm{vol}(S) \ge  \delta(d) |S|$.
\end{lemma}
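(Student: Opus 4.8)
The plan is to apply the Expander Mixing Lemma to the set $S$ against itself, which directly controls the number of edges inside $S$. Specifically, applying the lemma with $A = B = S$ gives
\[
\Bigl| e(S,S) - \frac{d|S|^2}{n} \Bigr| \le \lambda |S|
\ ,
\]
and since $e(S,S) = 2\,\mathrm{vol}(S)$ counts each internal edge twice (or, depending on the convention in~\cite{HooryLW06}, $e(S,S)$ already equals $2\,\mathrm{vol}(S)$), I would first pin down the exact normalization of $e(S,S)$ used by the Expander Mixing Lemma in that reference and convert it to a statement about $\mathrm{vol}(S)$. This yields a lower bound of the form $\mathrm{vol}(S) \ge \frac{1}{2}\bigl(\frac{d|S|^2}{n} - \lambda|S|\bigr) = \frac{|S|}{2}\bigl(\frac{d|S|}{n} - \lambda\bigr)$.

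Next I would feed in the two hypotheses: $|S| \ge nd^{-1/8}$ and the Ramanujan bound $\lambda \le 2\sqrt{d-1} \le 2\sqrt{d}$. The first gives $\frac{d|S|}{n} \ge d \cdot d^{-1/8} = d^{7/8}$, so
\[
\mathrm{vol}(S) \ge \frac{|S|}{2}\bigl(d^{7/8} - 2\sqrt{d}\bigr)
\ .
\]
The target is $\delta(d)|S| = \frac{1}{2}(d^{7/8} - d^{5/8})|S|$, so it remains to check the elementary inequality $d^{7/8} - 2\sqrt{d} \ge d^{7/8} - d^{5/8}$, i.e. $d^{5/8} \ge 2 d^{1/2}$, i.e. $d^{1/8} \ge 2$, i.e. $d \ge 256$. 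I expect the paper's standing assumption is that $d$ is a sufficiently large constant (the expansion theorem above already needs $8d^{1/4} > 1$, and the definitions of $\ell(n,d)$ and $\delta(d)$ only make sense as useful quantities for large $d$), so this is the regime in which the lemma is intended; I would state that assumption explicitly if it is not already in force.

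The main obstacle is not conceptual but bookkeeping: getting the factor-of-two convention in $e(S,S)$ right relative to the Expander Mixing Lemma as stated in~\cite{HooryLW06}, since an off-by-two here would change $\delta(d)$. Once that is nailed down, the rest is the routine chain of inequalities above, using $|S| \le n$ only implicitly and the Ramanujan spectral bound to replace $\lambda$. I would present the argument in the same style as the proof of Theorem~\ref{thm:ramanujan-expanding}: invoke the Expander Mixing Lemma, substitute $|S| \ge nd^{-1/8}$ and $\lambda \le 2\sqrt{d-1}$, and close with the observation that $d^{7/8} - 2\sqrt{d} \ge d^{7/8} - d^{5/8}$ for all sufficiently large constant~$d$.
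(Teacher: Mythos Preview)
Your approach is correct but takes a genuinely different route from the paper's. The paper applies the Expander Mixing Lemma to the pair $(S,\,V\setminus S)$, bounds $e(S,V\setminus S)$ from above, and then uses the identity $\mathrm{vol}(S)=\tfrac{1}{2}\bigl(d|S|-e(S,V\setminus S)\bigr)$ together with the crude estimate $\sqrt{|S|\,|V\setminus S|}\le n/2$. After substituting $|S|\ge nd^{-1/8}$, this lands \emph{exactly} on $\delta(d)|S|$ with no side condition on~$d$. Your route---applying the lemma with $A=B=S$---gives the bound $\tfrac{|S|}{2}\bigl(d^{7/8}-2\sqrt{d}\bigr)$, which dominates $\delta(d)|S|$ precisely when $d^{1/8}\ge 2$, i.e.\ $d\ge 256$; below that threshold it falls short of the stated conclusion. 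So your argument is more direct (and in fact sharper for large~$d$), but it introduces the extra hypothesis $d\ge 256$, whereas the paper's detour through the complement hits the target $\delta(d)$ on the nose for all $d>1$. Every concrete application in the paper uses $d\ge 5^{8}$, so your version would suffice there, but the lemma as stated carries no lower bound on~$d$, which is what the paper's argument is calibrated to deliver.
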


\begin{proof} 
Let $V$ be the set of vertices of $G$.
Suppose $S\subseteq V$ has  at least $nd^{-1 / 8}$ vertices. 
By the Expander Mixing Lemma applied to~$S$, we have the following inequality:
\[
\Bigl| e(S, V \setminus S) - \frac{d|S||V \setminus S|}{n} \Bigr| \le \lambda \sqrt{|S|\cdot|V \setminus S|}
\ . 
\]
It follows that
\[
e(S, V \setminus S) \le \lambda \sqrt{|S|\cdot|V \setminus S|} + \frac{d|S|\cdot|V \setminus S|}{n}
\ .
\]
A direct counting gives
\begin{equation}
\label{eqn:first-volume}
\text{vol}(S) \ge \frac{1}{2}\big(d \cdot |S| - e(S, V \setminus S) \big)
\ge \frac{1}{2}\bigg(d \cdot |S| - \lambda \sqrt{|S|\cdot|V \setminus S|} - \frac{d|S|\cdot|V \setminus S|}{n} \bigg).
\end{equation}
Observe that $|S| \cdot |V \setminus S| \le \frac{n^2}{4}$.
Since $G$ is Ramanujan, we have $\lambda \le 2\sqrt{d - 1} \le 2\sqrt{d}$. 
We can reformulate bound~\eqref{eqn:first-volume} as follows:
\begin{equation}
\label{eqn:second-volume}
\text{vol}(S) \ge \frac{1}{2}\big( d |S| - \sqrt{d} \cdot n - \frac{d|S|(n-|S|)}{n} \big) = \frac{1}{2}d \bigg(1 - \frac{n}{|S|}\frac{1}{\sqrt{d}} - \frac{n - |S|}{n} \bigg)  |S|.
\end{equation}
We have the inequality $\frac{n}{|S|} \le d^{1/8}$, since $|S| \ge nd^{-1/8}$, and also $\frac{n - |S|}{n} \le 1 - d^{-1/8}$. 
This leads to the following reformulation of inequality~\eqref{eqn:second-volume}:
\[
\text{vol}(S) 
\ge 
\frac{1}{2}d(1 - d^{-3/8} - \bigl(1 - d^{-1/8}\bigr) )|S|
= 
\frac{1}{2}d(d^{-1/8} - d^{-3/8})|S|
=
\frac{1}{2}  \bigl(d^{7/8} - d^{5/8}\bigr) |S|
\ ,
\]
which is the estimate sought.
\end{proof}

\begin{lemma}
\label{lem:bad-vertices}

If $X$ is a set of vertices in a Ramanujan graph $G=G(n,d)$, then there are fewer than $nd^{-1/8}$ vertices in $X$ each with the property that it has fewer than $\delta(d)$ neighbors in~$G|_{X}$.
\end{lemma}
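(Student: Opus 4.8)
The plan is to argue by contradiction, feeding the set of ``bad'' vertices into Lemma~\ref{lemma:internal-lower-bound}. Let $Y\subseteq X$ be the set of vertices of $X$ that have fewer than $\delta(d)$ neighbors in $G|_X$; bounding $|Y|$ is exactly the assertion of the lemma. Suppose, towards a contradiction, that $|Y|\ge nd^{-1/8}$, so that Lemma~\ref{lemma:internal-lower-bound} applies to the set $Y$ itself.

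The key observation is that the defining property of the vertices in $Y$ — small degree inside $G|_X$ — is inherited by any subgraph induced by a subset of $X$, in particular by $G|_Y$. Thus every $v\in Y$ has at most as many neighbors in $G|_Y$ as in $G|_X$, hence strictly fewer than $\delta(d)$. Summing degrees over $G|_Y$ and invoking the handshake identity gives an \emph{upper} bound on the volume of $Y$:
\[
2\,\mathrm{vol}(Y)\;=\;\sum_{v\in Y}\deg_{G|_Y}(v)\;<\;\delta(d)\,|Y|\ .
\]
On the other hand, since $|Y|\ge nd^{-1/8}$, Lemma~\ref{lemma:internal-lower-bound} gives a matching \emph{lower} bound $\mathrm{vol}(Y)\ge \delta(d)\,|Y|$. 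Combining the two yields $\delta(d)\,|Y|\le \mathrm{vol}(Y) < \tfrac{1}{2}\delta(d)\,|Y|$, which is impossible because $\delta(d)\,|Y|>0$ for constant degree $d\ge 2$ (note $\delta(d)=\tfrac12 d^{5/8}(d^{1/4}-1)>0$). This contradiction forces $|Y|<nd^{-1/8}$, which is the claim.

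There is no substantial obstacle here: once Lemma~\ref{lemma:internal-lower-bound} is available, the proof is a one-line double-counting argument. The only points that deserve a sentence of care are (i) the transfer of the ``few neighbors in $G|_X$'' condition down to $G|_Y$, which is what makes $\mathrm{vol}(Y)$ simultaneously large and small, and (ii) the harmless restriction to $d\ge 2$, consistent with the constant-degree regime used throughout the section.
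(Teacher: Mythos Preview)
Your proof is correct and follows essentially the same approach as the paper: define the set of bad vertices, assume it has at least $nd^{-1/8}$ elements, apply Lemma~\ref{lemma:internal-lower-bound} to get $\mathrm{vol}(Y)\ge \delta(d)|Y|$, and contradict this using the degree constraint coming from the definition of~$Y$. The only cosmetic difference is that the paper averages to exhibit a single vertex in $Y$ with at least $2\delta(d)$ neighbors in $Y\subseteq X$, whereas you sum all degrees to bound $\mathrm{vol}(Y)$ from above; these are two sides of the same handshaking argument.
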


\begin{proof}
Let $X'\subseteq X$ denote the set of vertices~$v$ such that $v$ has fewer  than~$\delta(d)$ neighbors in~$X$.
Suppose, to arrive at a contradiction, that $|X'|\ge nd^{-1/8}$.
By Lemma~\ref{lemma:internal-lower-bound}, the subgraph of~$G$ induced by~$X'$ has at least $|X'|\delta(d)$ edges. 
By the handshaking lemma and averaging, there is a vertex in~$X'$ with at least $2|X'|\delta(d)\cdot\frac{1}{|X'|}=2\delta(d)$ neighbors in~$X'\subseteq X$.
This is a contradiction, so we may conclude that $|X'| < nd^{-1/8}$.
\end{proof}

We show that compactness of Ramanujan graphs follows from suitable edge-density and edge-expansion properties,  by identifying a a $\delta(d)$-survival subset for any set of $\ell(n,d) = 4nd^{-1/4}$ vertices.

\begin{theorem}
\label{thm:ramanujan-compact}

A Ramanujan graph $G=G(n,d)$ is $(\ell(n,d), \frac{3}{4}, \delta(d))$-compact.
\end{theorem}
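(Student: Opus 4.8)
The plan is to show that for any set $B$ of at least $\ell(n,d) = 4nd^{-1/8}$ vertices, we can extract a large $\delta(d)$-survival subset $C \subseteq B$ by an iterative pruning argument. First I would initialize $C_0 = B$ and repeatedly remove any vertex that currently has fewer than $\delta(d)$ neighbors inside the current set; let $C$ be the set that remains when no more vertices can be removed. By construction, every vertex of $C$ has at least $\delta(d)$ neighbors in $G|_C$, so $C$ is a $\delta(d)$-survival subset for $B$ — that part is immediate from the definition. The content of the theorem is the quantitative claim that $|C| \ge \frac{3}{4}\ell(n,d)$, i.e., we discard at most $\frac{1}{4}\ell(n,d) = nd^{-1/8}$ vertices in total.

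To bound the number of discarded vertices, the natural approach is to apply Lemma \ref{lem:bad-vertices} in the right way. The subtlety is that Lemma \ref{lem:bad-vertices} counts vertices with fewer than $\delta(d)$ neighbors relative to a \emph{fixed} set $X$, whereas the pruning process removes vertices relative to shrinking sets. I would handle this by the following observation: let $D = B \setminus C$ be the full set of discarded vertices, and consider the set $X = B \setminus D' $ for an appropriate snapshot — actually, cleaner is to argue by contradiction. Suppose $|D| \ge nd^{-1/8}$, and let $D^\star$ be the set consisting of the first $nd^{-1/8}$ vertices removed (in order of the pruning). At the moment any such vertex $v \in D^\star$ was removed, the current set still contained all of $C$ together with the not-yet-removed vertices of $D^\star$; in particular $v$ had fewer than $\delta(d)$ neighbors in $G|_{C \cup D^\star}$, since later-removed vertices are still present at that time. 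Hence every vertex of $D^\star$ has fewer than $\delta(d)$ neighbors in $G|_{X}$ where $X = C \cup D^\star$ is a fixed set of vertices in $G(n,d)$. But $|D^\star| = nd^{-1/8}$, contradicting Lemma \ref{lem:bad-vertices}, which asserts that fewer than $nd^{-1/8}$ vertices of any fixed $X$ can have this deficiency property. Therefore $|D| < nd^{-1/8} = \frac{1}{4}\ell(n,d)$, so $|C| > \frac{3}{4}\ell(n,d) \ge \frac{3}{4}|B|$ — wait, I need $|C| \ge \frac{3}{4}|B|$ only when $|B| = \ell(n,d)$; for larger $B$, the compactness definition asks for $\varepsilon \ell$ with $\ell = \ell(n,d)$ fixed, so $|C| > \frac{3}{4}\ell(n,d)$ suffices directly.

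The step I expect to be the main obstacle is the bookkeeping in the previous paragraph: making precise that "the vertices removed later are still present" so that the deficiency of each discarded vertex can be witnessed inside a single fixed set $X$ to which Lemma \ref{lem:bad-vertices} applies. One must be careful about the order of removals and ensure that the chosen snapshot $X$ simultaneously works for all of $D^\star$. An alternative that sidesteps ordering entirely: observe that if $|D| \ge nd^{-1/8}$ then, letting $X = B$ (the original set), at least $|D|$ vertices were eventually found deficient; but a vertex deficient at removal time in a subset of $B$ need not be deficient in $B$ itself, so this cruder approach fails and the snapshot argument seems genuinely needed. Once the combinatorial extraction is set up correctly, the arithmetic $\frac{1}{4}\ell(n,d) = nd^{-1/8}$ closes the proof with no further estimates, since the spectral input has already been absorbed into Lemma \ref{lem:bad-vertices}.
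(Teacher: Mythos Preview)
Your pruning framework is exactly the paper's: both arguments iteratively delete deficient vertices from $B$, call the deleted set $B^\star$ (your $D$), and derive a contradiction from $|B^\star|\ge nd^{-1/8}$. The gap is in your snapshot step. You claim that each $v\in D^\star$ has fewer than $\delta(d)$ neighbors in the fixed set $X=C\cup D^\star$, justified by ``later-removed vertices are still present.'' But $X$ also contains the \emph{earlier}-removed vertices of $D^\star$, and those are \emph{not} present in the current set $S_v$ at the moment $v$ is deleted. Concretely, $S_v=C\cup\{w\in D:w\text{ removed no earlier than }v\}$, whereas $X=C\cup D^\star$; their symmetric difference contains $D^\star_{<v}$ on the $X$-side. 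So knowing $|N(v)\cap S_v|<\delta(d)$ gives no bound on $|N(v)\cap X|$: $v$ may have many neighbors among the vertices of $D^\star$ deleted before it, and then $v$ is not deficient in $X$. Hence Lemma~\ref{lem:bad-vertices} cannot be applied to $X$ as you intend. (You correctly rejected $X=B$ for this reason; the same objection kills $X=C\cup D^\star$.)

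The paper closes the argument without Lemma~\ref{lem:bad-vertices}, using Lemma~\ref{lemma:internal-lower-bound} directly on the first-deleted block $Y_j$ (your $D^\star$). The point is an edge-charging: every edge of $G|_{D^\star}$ is charged to its earlier-removed endpoint $v$; the other endpoint lies in $D^\star_{\ge v}\subseteq S_v$, so the number of edges charged to $v$ is at most $|N(v)\cap S_v|<\delta(d)$. Summing gives $\mathrm{vol}(D^\star)<\delta(d)\,|D^\star|$, contradicting Lemma~\ref{lemma:internal-lower-bound} once $|D^\star|\ge nd^{-1/8}$. This is precisely the ``bookkeeping'' you flagged as the obstacle; the resolution is to count edges in $G|_{D^\star}$ rather than to search for a single snapshot on which Lemma~\ref{lem:bad-vertices} bites.
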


\begin{proof} 
Let $B$ be a set of at least $\ell(n,d)=4nd^{-1/8}$ vertices. 
We need to show that there is a subset $C \subseteq B$ of at least $\frac{3}{4}\ell(n,d)=3nd^{-1/8}$ vertices that is a $\delta(d)$-survival subset for~$B$, which means  that the degree of every vertex in~$G|_C$ is at least~$\delta(d)=\frac{1}{2}\bigl(d^{7/8} - d^{5/8}\bigr)$.

Let us define an operator~$F$ that acts on sets of vertices~of~graph~$G$ as follows:
\[
F_B(Y)=Y \cup \{v\in B\setminus Y \ | \ v \mbox{  has fewer than } 
\delta(d) \mbox{ neighbors in } B\setminus Y \}\ .
\]
This operator is monotonic with respect to inclusion, in the sense that if $Y_1\subseteq Y_2$ then also the inclusion $F_B(Y_1)\subseteq F_B(Y_2)$ holds. 

Consider a sequence $\langle Y_i\rangle_{i\ge 0}$ defined by the recurrence $Y_{i+1}=F_B(Y_i)$, with $Y_{0}=\emptyset$.
In particular, $Y_1\subseteq B$ is a set of these vertices that have fewer  than~$\delta(d)$ neighbors in~$B$. 
Since the sequence $\langle Y_i\rangle_{i\ge 0}$ is monotonic, discrete and each term is included in $B$, the set $B^*= \bigcup_{i\ge 0} Y_i$ is a fixed point, in the sense that 
$B^*=Y_k$ for all sufficiently large natural numbers~$k$. 

The set $C=B\setminus B^*$ is a natural candidate for a $\delta(d)$-survival set of~$B$. 
Each degree of a vertex in~$G|_C$ is at least~$\delta(d)$, because otherwise this vertex would have been added to $Y_{i+1}=F_B(Y_i)$ in some $i$th iteration of the operator~$F$.
To complete demonstrating the claim of $(\ell(n,d), \frac{3}{4}, \delta(d))$-compactness, we need to show that $B^*=B\setminus C$ has at most $nd^{-1/8}$ elements.

Suppose, to the contrary, that $|B^*|> nd^{-1/8}$.
Let $Y_j$ be the first set in the sequence $\langle Y_i\rangle_{i\ge 0}$ that has at least $nd^{-1/8}$ elements. 
Such a set exists since $|B^*| >nd^{-1/8}$.
We consider the number of edges  that have both endpoints in the set $Y_{j}$, whose number is denoted as $\text{vol}(Y_{j})$. 
By Lemma~\ref{lemma:internal-lower-bound}, $\text{vol}(Y_{j}) \ge \delta(d) |Y_{j}|$.
To arrive at a contradiction, we show next that $\text{vol}(Y_j) < \delta(d) |Y_j|$. 
If $j=1$ then this follows from Lemma~\ref{lem:bad-vertices} by setting $X=B$.
Suppose that $j>1$, so that the iterations of operator $F_B(Y)$ generate the sequence $Y_1, Y_2,\ldots, Y_j$.
Let an edge~$e$ be in the subgraph induced by~$Y_j$. 
If $e$ has an endpoint in~$Y_1$, then it has been accounted for in $\delta(d) |Y_1|$.
Otherwise, let $k$ be such that $1<k\le j$ and $k$ is the smallest among indices~$i$ of $Y_i$ with the property that an endpoint~$v$  of this edge~$e$ is in~$Y_i$. 
Then this edge is accounted for in the number of neighbors of~$v$ outside $Y_{i-1}$, which is less than~$\delta(d)$, by the specification of the operator~$F$.
As we place vertices in~$Y_i$ from outside of~$Y_{i-1}$, each added vertex contributes a number of neighbors that is less than~$\delta(d)$.  
So the sum that gives $\text{vol}(Y_j)$ is less than $\delta(d) |Y_j|$.
This provides a contradiction and completes a proof of the inequality $|B^*|\le nd^{-1/8}$.
\end{proof}

\begin{theorem}
\label{thm:sparse-dense}	

An $(\gamma(n),\delta(d))$-dense-neighborhood of a vertex in a Ramanujan graph~$G(n,d)$ includes at least $\ell(n,d)$ vertices, if only $\gamma(n) \ge 2+\lg n$ and $d$ is sufficiently large.
\end{theorem}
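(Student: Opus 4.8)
The plan is to prove the statement by a ``shelling'' expansion argument around~$v$: a $(\gamma,\delta)$-dense-neighborhood $S$ of~$v$ must, starting from $v$ itself, grow by a constant factor as one passes from each distance-shell around~$v$ to the next, so that $\gamma(n)\ge 2+\lg n$ shells force $|S|\ge\ell(n,d)$. Write $\gamma=\gamma(n)$ and $\delta=\delta(d)$. I first note that $S$ contains~$v$: this is implicit in ``neighborhood \emph{for}~$v$'' and is exactly how the notion is used in Proposition~\ref{pro:local-probing} (there one needs $p\in S$ for $p$ to survive), and without it the empty set would be a degenerate counterexample, so the statement is to be read with $v\in S$. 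Set $S_i=S\cap N^i_G(v)$ for $0\le i\le\gamma$, so that $\{v\}=S_0\subseteq S_1\subseteq\cdots\subseteq S_\gamma=S$. The structural fact driving the argument is: for $0\le i\le\gamma-1$ every $u\in S_i$ lies in $N^i_G(v)\subseteq N^{\gamma-1}_G(v)$, hence has at least $\delta$ neighbors in~$S$; since all neighbors of $u$ lie in $N^{i+1}_G(v)$, these neighbors lie in $S_{i+1}$. In particular each $u\in S_i$ has at most $d-\delta$ neighbors in $V\setminus S_{i+1}$.

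The heart of the proof is the growth estimate: for $d$ sufficiently large, $|S_i|<\ell(n,d)$ implies $|S_{i+1}|\ge 2|S_i|$. I would derive it by combining the edge bound $e(S_i,V\setminus S_{i+1})\le(d-\delta)|S_i|$ just noted with the Expander Mixing Lemma applied to the disjoint sets $S_i$ and $V\setminus S_{i+1}$, namely $e(S_i,V\setminus S_{i+1})\ge\frac{d|S_i|(n-|S_{i+1}|)}{n}-\lambda\sqrt{|S_i|(n-|S_{i+1}|)}$ with $\lambda\le 2\sqrt{d-1}$, and rearranging into a lower bound for $|S_{i+1}|$. In the regime where $|S_i|$ is too small for this bound to be useful I would instead start from $e(S_i,S_{i+1}\setminus S_i)\ge\delta|S_i|-2\,\mathrm{vol}(S_i)$, bound $\mathrm{vol}(S_i)\le\frac{d|S_i|^2}{2n}+\sqrt d\,|S_i|$ by the Expander Mixing Lemma, and then apply that lemma once more to the pair $(S_i,S_{i+1}\setminus S_i)$ to conclude $|S_{i+1}\setminus S_i|\ge|S_i|$. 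I expect this quantitative step to be the main obstacle: one has to keep the per-shell multiplicative gain bounded below by at least~$2$ uniformly until the set size reaches $\ell(n,d)$, that is, prevent the expansion from stalling prematurely, which forces one to track the Ramanujan bound $\lambda=\cO(\sqrt d)$ delicately against $\delta(d)=\Theta(d^{7/8})$ and $\ell(n,d)=\Theta(nd^{-1/8})$ across both size regimes; this is where the hypothesis ``$d$ sufficiently large'' is spent.

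Given the growth estimate, the conclusion is immediate. Since $|S_0|=1$, either $|S_i|\ge\ell(n,d)$ for some $i\le\gamma$, in which case $|S|=|S_\gamma|\ge|S_i|\ge\ell(n,d)$ because $S_i\subseteq S$; or else $|S_i|<\ell(n,d)$ for all $i<\gamma$, the growth estimate applies at each step, $|S_i|\ge 2^i$ for all $i\le\gamma$, and hence $|S|=|S_\gamma|\ge 2^{\gamma}\ge 2^{\,2+\lg n}=4n$, which contradicts $|S|\le|V|=n$. Either way $|S|\ge\ell(n,d)$. The ``$+2$'' in $\gamma(n)\ge 2+\lg n$ supplies exactly the slack needed to start the doubling from the singleton $S_0$ and to overshoot $n$ rather than merely reach it.
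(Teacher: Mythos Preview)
Your overall architecture—shelling $S_i=S\cap N^i_G(v)$ and proving a doubling estimate $|S_{i+1}|\ge 2|S_i|$ until $|S_i|$ reaches $\ell(n,d)$—is exactly the paper's approach, and your concluding paragraph matches the paper's induction.

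The difference is in how the growth estimate is proved. The paper does not split into two regimes or apply the Expander Mixing Lemma to a bipartite pair. Instead it works entirely with the single set $S_{i+1}=S_i\cup Z$ (where $Z=(N_G(S_i)\cap S)\setminus S_i$): on one hand every vertex of $S_i$ has at least $\delta$ neighbors in $S_{i+1}$, so $\mathrm{vol}(S_{i+1})\ge\frac{\delta}{2}|S_i|$; on the other hand, assuming $|S_{i+1}|\le\ell(n,d)$, the Alon--Chung bound gives $\mathrm{vol}(S_{i+1})\le c(d)\,|S_{i+1}|$ with $c(d)=\cO(d^{7/8})$. Comparing the two yields $|S_{i+1}|\ge\bigl(\frac{\delta}{2c(d)}\bigr)|S_i|$, and then ``$d$ sufficiently large'' makes $\frac{\delta}{2c(d)}\ge 2$.

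Your Route~2 is close in spirit but bounds $\mathrm{vol}(S_i)$ rather than $\mathrm{vol}(S_{i+1})$; when $|S_i|$ is already near $\ell(n,d)$ this loses too much (the subtraction $\delta|S_i|-2\,\mathrm{vol}(S_i)$ can go negative since $\mathrm{vol}(S_i)$ may be as large as $\Theta(d^{7/8})|S_i|$ while $\delta=\Theta(d^{7/8})$), so the two-regime patching you anticipate becomes genuinely necessary. The paper's one-set volume comparison sidesteps this and gives a uniform argument across the full range $1\le|S_i|\le\ell(n,d)$.
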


\begin{proof}
Let $G(n,d)=(V,E)$.
Consider any vertex $v \in V$ and its $(\gamma(n),\delta(d))$-dense-neighborhood $S \subseteq N^{\gamma(n)}_{G}(v)$. 
We show by induction that the set $A_{i} = S \cap N^{i}_{G}(v)$ has at least $\min (2^{i}, \ell(n,d) )$ vertices, for all $i$ such that $1 \le i \le \gamma(n) - 1$.
This suffices, since $2^{2+\lg n}=4n\ge \ell(n)$.

For the base case $i = 1$, observe that $N^{1}_{G}(v)$ is the neighborhood of a vertex~$v$ in~$S$, which has at least $2$ neighbors, since $G(n,d)$ is $d$-regular for $d>1$.
Suppose that $i \ge 2$, and  the inductive hypothesis holds for~$i$.
Let $Z = (N_{G}(A_{i}) \cap S)\setminus A_i$ be the part of the neighborhood of $A_i$ in the subgraph induced by~$S$ that extends beyond~$A_i$. 
Observe that $A_{i + 1} = A_{i} \cup Z$ and $|Z\cup A_i|=|Z|+|A_i|$.

If $|A_{i}| + |Z| > \ell(n,d)$, then $A_{i + 1}$ has more than $\ell(n,d)$ vertices. 
Suppose otherwise that the inequality $|A_{i}| + |Z| \le \ell(n,d)$ holds. 
For each set $S$ of vertices of graph $G(n,d)$ of  at most $\beta n$ elements, the number of edges in the subgraph induced by $S$ is at most $\frac{1}{2}d\beta^{2}n + \sqrt{d - 1}\beta(1 - \beta)n$, as shown by Alon and Chung~\cite{AlonC06}.
Applying this with $\beta = \frac{|A_{i} \cup Z|}{n}$, we obtain that
\begin{equation}
\label{eqn:dense}
\text{vol}(A_{i} \cup Z) \le \frac{1}{2}d\beta^{2}n + \sqrt{d-1}\beta(1-\beta)n \le \frac{1}{2}d\beta|A_{i} \cup Z| + \sqrt{d}|A_{i} \cup Z|\ .
\end{equation}
If $|A_{i} \cup X| \le 4nd^{-1/4}n$ then $\beta\le 4nd^{-1/4}$, and
we can upper bound the right-hand side of the  inequality~\eqref{eqn:dense} as follows:
\begin{equation}
\label{eqn:vol}
\frac{1}{2}d\beta|A_{i} \cup X| + \sqrt{d}|A_{i} \cup X| \le |A_{i} \cup X|\big(2d^{3/4} + \sqrt{d}\big)\ .
\end{equation}
Every vertex in $A_{i}$ has degree at least  $\delta(d)$ in the graph induced by vertices from $A_{i} \cup Z$, so  $\text{vol}(A_{i} \cup Z) \ge \frac{\delta(d)}{2}|A_{i}|$ by the handshaking lemma.
This combined with the inequality~\eqref{eqn:vol} gives 
\[
\frac{\delta(d)}{2}|A_{i}| \le \text{vol}(A_{i} \cup Z) \le |A_{i} \cup Z|\big(2d^{3/4} + \sqrt{d}\big)\ , 
\]
and hence also 
\[
\frac{\delta(d)}{2}|A_{i}| 
\le 
|A_{i} \cup Z|\big(2d^{3/4} + \sqrt{d}\big) 
\le
\big(|A_{i}| + |Z| \big)\big(2d^{3/4} + \sqrt{d}\big)\ .
\]
Solving for $|Z|$ gives
\[
|Z|\ge \Bigl(\frac{\delta(d)}{4d^{3/4} + 2\sqrt{d}} - 1 \Bigr)|A_{i}| 
\ .
\] 
Since $\delta(d) = \frac{1}{2}(d^{7/8} - 2d^{5/8})$,
the following inequality holds for a sufficiently large~$d$:
\[
\frac{\delta(d)}{4d^{3/4} +2\sqrt{d}} - 1 \ge 1
\ .
\]
We conclude with the estimate 
\[
|A_{i+1}| = |Z| + |A_{i}| \ge 2|A_{i}| \ge 2^{i + 1}\ ,
\]  
which completes the inductive step.
\end{proof}

\begin{theorem}
\label{thm:ramanujan-unbalanced-expansion}

Let $0< \epsilon <1$ be a fixed constant, and let $A$ and $B$ be two disjoint subsets of vertices of a Ramanujan graph $G=G(n,d)$. 
If $|A| =  \epsilon \cdot n$ and $|B| > \frac{4n}{d\epsilon}$, then there exists an edge connecting the sets $A$ and $B$ with each other. 
\end{theorem}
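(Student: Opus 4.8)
The plan is to invoke the Expander Mixing Lemma on the pair $(A,B)$, exactly as in the proof of Theorem~\ref{thm:ramanujan-expanding}, and then verify that the size hypotheses make the resulting lower bound on $e(A,B)$ strictly positive. First I would write, since $G=G(n,d)$ is Ramanujan,
\[
\Bigl| e(A,B) - \frac{d|A||B|}{n} \Bigr| \le \lambda\sqrt{|A||B|},
\]
and extract from it the one-sided estimate
\[
e(A,B) \ge \frac{d|A||B|}{n} - \lambda\sqrt{|A||B|} = \sqrt{|A||B|}\Bigl(\frac{d\sqrt{|A||B|}}{n} - \lambda\Bigr).
\]
Thus it suffices to show that $d\sqrt{|A||B|} > \lambda n$, i.e.\ that $|A||B| > \lambda^{2} n^{2}/d^{2}$.

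Next I would bound $\lambda$ using the Ramanujan property: $\lambda \le 2\sqrt{d-1} \le 2\sqrt{d}$, so $\lambda^{2} \le 4d$ and hence $\lambda^{2} n^{2}/d^{2} \le 4n^{2}/d$. It then remains only to substitute $|A| = \epsilon n$ together with the hypothesis $|B| > 4n/(d\epsilon)$, which give $|A||B| = \epsilon n \cdot |B| > 4n^{2}/d \ge \lambda^{2} n^{2}/d^{2}$, precisely the inequality needed. Therefore $e(A,B) > 0$, so at least one edge connects $A$ with $B$.

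I do not anticipate a genuine obstacle here: the whole argument is a short computation resting on the Expander Mixing Lemma. The one point worth flagging is that, in contrast with Theorem~\ref{thm:ramanujan-expanding}, this time only the lower half of the mixing inequality is used, so no bound of the form $|A||B| \le n^{2}$ enters, and the statement remains valid even when $|B|$ is a large constant fraction of~$n$.
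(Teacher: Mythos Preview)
Your proposal is correct and follows essentially the same approach as the paper: apply the Expander Mixing Lemma, extract the one-sided lower bound on $e(A,B)$, bound $\lambda$ by $2\sqrt{d}$, and check that the size hypotheses force $e(A,B)>0$. The only difference is cosmetic---the paper factors the lower bound as $d|B|\bigl(\epsilon-\tfrac{2}{\sqrt{d}}\sqrt{|A|/|B|}\bigr)$ rather than your $\sqrt{|A||B|}\bigl(\tfrac{d\sqrt{|A||B|}}{n}-\lambda\bigr)$, but the verification reduces to the same inequality $|A||B|>4n^{2}/d$.
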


\begin{proof} 
Consider any such disjoint sets of vertices $A$ and $B$  in~$G(n,d)$. 
The Expander Mixing Lemma applied to sets $A$ and $B$ gives the following estimate:
\[
\Bigl| e(A, B) - \frac{d|A||B|}{n} \Bigr| \le \lambda \sqrt{|A|\cdot|B|}
\ .
\]
This implies $e(A, B) \ge \frac{d|A|\cdot|B|}{n} - \lambda \sqrt{|A|\cdot|B|}$.
Since $\lambda \le 2\sqrt{d - 1} \le 2\sqrt{d}$ and $|A| = \epsilon n$, we obtain
\begin{equation}
\label{ineq:2}
e(A,B) \ge \frac{d|A|\cdot|B|}{n} - \lambda \sqrt{|A|\cdot|B|} \ge d\epsilon|B| - 2\sqrt{d}\sqrt{|A||B|} = d|B|\Bigl(\epsilon - \frac{2}{\sqrt{d}}\sqrt{\frac{|A|}{|B|}} \Bigr)
\ .
\end{equation}
The assumption $|B| > \frac{4n}{d\epsilon}$ yields 
$|B| > \frac{4n}{d\epsilon} = \frac{4|A|}{d\epsilon^{2}}$.
This implies $\frac{d\epsilon^{2}}{4} > \frac{|A|}{|B|}$, and solving for $\epsilon$ gives $\epsilon > \frac{2}{\sqrt{d}}\sqrt{\frac{|A|}{|B|}}$.
Combining  this estimate of~$\epsilon$ with the lower bound in~\eqref{ineq:2}, we obtain $e(A,B)  > 0$, so that an edge between $A$ and $B$ exists.
\end{proof}



\section{Consensus with Crashes}

\label{sec:agreement-crashes}

We develop algorithms  for the problems of almost everywhere agreement, spreading a common value, and consensus.
Crashes are the model of failures.
The input values are either $0$ or~$1$.

Messages used by the algorithms have two possible roles: either to carry a (rumor) value or to inquire for a decision value. 
The role of a message is determined by the round in which it is sent, so it suffices for a message to carry one bit of information.

\subsection{Almost Everywhere Agreement}

An algorithm we give for almost everywhere agreement  is called \textsc{Almost-Everywhere-Agreement} and its pseudocode is in Figure~\ref{fig:almost-everywhere-agreement}.
The algorithm uses a graph denoted $G$ which serves as an overlay network.
We assume that $t<\frac{n}{5}$ in this sub-section.
Nodes with the smallest $5t$ names are called \emph{little}.
There are exactly $5t$ little nodes, by the assumption $t<\frac{n}{5}$.
The graph $G$  has the little nodes as vertices and its topology is of a $G(5t, d)$ Ramanujan graph with the vertex  degree~$d = 5^8$. 
This degree~$d$ determines the quantities $\ell =  4\cdot 5t d^{-1/8}=4t$ and $\delta = \frac{1}{2}(d^{7/8} - 2d^{5/8})$ that are relevant to the local probing in Part~2 in the pseudocode in Figure~\ref{fig:almost-everywhere-agreement}.

\begin{figure}[t!]

\hrule

\FF

\texttt{Algorithm} \textsc{Almost-Everywhere-Agreement}

\FF

\hrule

\

\begin{description}[nosep,leftmargin=1em]
\item[\sf Part~1: Broadcasting initial value $1$:] \ 

\texttt{if} $p$ is little  \texttt{then} 
\begin{enumerate}[nosep,leftmargin=1em]
\item[]
set the candidate decision value to the input value
\item[]
\texttt{for} $5t-1$ rounds \texttt{do}
\begin{enumerate}[nosep,leftmargin=1em]
\item[]
\texttt{if} either $p$'s candidate decision is $1$ and it is the first round \texttt{or} \\
$p$'s candidate decision is $0$ and $p$ received rumor $1$ in the previous  round for the first time 
\texttt{then} 
set the candidate decision value to $1$
and
send rumor $1$ to each neighbor in~$G$
\end{enumerate}
\end{enumerate}
\texttt{else} stay idle for $5t-1$ rounds

\item[\sf Part~2: Local probing:] \ 

\texttt{if} $p$ is little \texttt{then}  run local probing on graph $G$ for $2+\lg (5t)$ rounds, such that
\begin{enumerate}[nosep,leftmargin=1em]
\item[]
(a) send rumor in messages
\item[]
(b) \texttt{if} the candidate decision value is~$0$ and rumor $1$ received \texttt{then} set candidate decision to~$1$
\end{enumerate}
\texttt{if} $p$ is little and survived local probing \texttt{then} decide on the candidate decision value

\item[\sf Part~3: Notifying related nodes of decision:] \ 

\texttt{if} $p$ is little and has decided \texttt{then} notify each related node of the decision

\texttt{if} $p$ received a decision value from a little neighbor  \texttt{then} decide on the received value

\end{description}

\FF

\hrule

\caption{\label{fig:almost-everywhere-agreement}
A pseudocode for a node~$p$.
Node $p$ is little if its name is at most $5t$.
The edges of a regular graph~$G$ determine links to send messages in the first two parts.
The quantities $\ell$ and~$\delta$ relevant to local probing are determined by the vertex degree~$d$ of $G$.
}

\end{figure}

Each among the nodes maintains its candidate decision value throughout an execution,  which is referred to as  \emph{rumor} when sent in a message.
At the start, each little node sets its candidate decision value to its initial input value.
The algorithm proceeds through three parts.

At the first round of Part~1, each little node with rumor~$1$ transmits a message with the rumor to all neighbors in graph~$G$.
In the following rounds of Part~1, if a little node~$p$ receives a message with rumor~$1$, while the current candidate decision value is~$0$, then $p$ sets its candidate decision value to~$1$ and notifies all its neighbors in~$G$ of the change by sending rumor~$1$ to them.
This process of flooding the sub-network of little nodes with rumor value~$1$ continues for~$5t-1$ rounds.

Part~2 follows next.
It consists of local probing in graph~$G$ of little nodes for $2+2\lg (5t)$ rounds.
The little nodes send their candidate decision value settled during Part~1.
To make the algorithm precise, we stipulate that if a node has $0$ as a candidate value and receives rumor~$1$ then it changes the candidate value to~$1$ and uses it as rumor from that round on.
Little nodes that survive local probing decide on their candidate values.

Starting from Part~3, all nodes participate.
If a node's name~$j$ is congruent modulo $5t$ to a little node of name~$i$ then we say that $i$ and $j$ are \emph{related}.
Part~3 consists of all little nodes that have decided notifying their respective related nodes of the decision, and the recipients of such messages adopt the received decision.

\begin{lemma}
\label{lem:few-end-part2-number}

Assuming $t<\frac{n}{5}$, at least $3t$ non-faulty little nodes decide while executing Part~2 of algorithm \textsc{Almost-Everywhere-Agreement}.
\end{lemma}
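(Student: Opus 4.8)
The plan is to combine the compactness of the Ramanujan overlay graph $G$ with the survival guarantee of Proposition~\ref{pro:local-probing}. First I would count the little nodes that ever crash: since there are at most $t$ crashes during the whole execution, at least $5t-t=4t$ of the $5t$ little nodes never crash. Call this set $N$; every node in $N$ is operational throughout and eventually halts, so $N$ consists of non-faulty little nodes, and $|N|\ge 4t=\ell$, where $\ell=\ell(5t,d)=4\cdot 5t\cdot d^{-1/8}=4t$ since $d=5^{8}$.

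Next I would invoke the $(\ell,\frac{3}{4},\delta)$-compactness of $G=G(5t,d)$ from Theorem~\ref{thm:ramanujan-compact}, applied to the set $N$. Since $|N|\ge \ell$, this produces a subset $C\subseteq N$ with $|C|\ge\frac{3}{4}\ell=3t$ such that every vertex of $G|_C$ has degree at least $\delta$; that is, $C$ is a $\delta$-survival subset for $N$. The point to notice is that this defining property of a $\delta$-survival subset --- minimum degree $\delta$ inside $G|_C$ --- is internal to $C$. Hence, writing $B_2$ for the set of little nodes operational when the local probing of Part~2 terminates, we have $C\subseteq N\subseteq B_2$, and therefore $C$ is also a $\delta$-survival subset for $B_2$.

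Then I would apply the last assertion of Proposition~\ref{pro:local-probing} to the local probing performed during Part~2, taking $B_1$ to be the little nodes operational at the start of Part~2 and $B_2$ as above (so the nodes of $B_1\setminus B_2$ are exactly those crashing during Part~2): every node of the $\delta$-survival subset $C$ for $B_2$ survives this local probing. By the pseudocode in Figure~\ref{fig:almost-everywhere-agreement}, a little node that survives the Part~2 local probing decides on its candidate value during Part~2. Since $C\subseteq N$, each of these $\ge 3t$ nodes is a non-faulty little node, which yields the claim. (If the value $\delta$ used by the algorithm is slightly smaller than the $\delta(d)$ supplied by Theorem~\ref{thm:ramanujan-compact}, a $\delta(d)$-survival subset is a fortiori a $\delta$-survival subset, so nothing changes.)

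I do not expect a real obstacle; the one spot needing care is the transition from ``$\delta$-survival subset for $N$'' to ``$\delta$-survival subset for $B_2$'', which is legitimate precisely because the survival-subset property refers only to the induced subgraph $G|_C$, together with keeping the chain $N\subseteq B_2\subseteq B_1$ consistent so that Proposition~\ref{pro:local-probing} is applied with the intended start and end sets.
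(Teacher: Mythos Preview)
Your proposal is correct and follows essentially the same approach as the paper: count the at least $4t=\ell$ non-faulty little nodes, apply the $(\ell,\tfrac{3}{4},\delta)$-compactness of $G(5t,5^8)$ from Theorem~\ref{thm:ramanujan-compact} to extract a $\delta$-survival subset of size $\ge 3t$, and then invoke Proposition~\ref{pro:local-probing} together with the pseudocode to conclude that these nodes survive local probing and hence decide. You are in fact a bit more explicit than the paper about why the survival subset for $N$ is also one for the end-set $B_2$ of Proposition~\ref{pro:local-probing}, which is the right point to flag.
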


\begin{proof}
There are at least $5t$ little nodes, since $t<\frac{n}{5}$.
Let a set $B$ consist of the non-faulty little nodes in an execution.
There are at least $5t-t=4t$ elements in~$B$.
Graph $G$ is a $G(5t, d)$ Ramanujan graph for $d = 5^{8}$. 
We defined $\ell =  20 t d^{-1/8}$, which equals  $4t$ by the determination of degree~$d$.
Graph $G$ is $(\ell, \frac{3}{4}, \delta)$-compact, for $\delta = \frac{1}{2}(d^{7/8} - 2d^{5/8})$, by Theorem~\ref{thm:ramanujan-compact}.
It follows, by the definition of compactness, that there exists a $\delta$-survival subset of~$B$ of at least $\frac{3}{4}\ell$ elements. 
%
By Proposition~\ref{pro:local-probing}, every little node in such a $\delta$-survival set survives local probing in Part~2. 
Little nodes that survive local probing decide, by the pseudocode in Figure~\ref{fig:almost-everywhere-agreement}.
It follows that at least $\frac{3}{4}\ell=3t$ non-faulty little nodes decide.
\end{proof}

The pseudocode in Figure~\ref{fig:almost-everywhere-agreement} stipulates about Part~2 that if the candidate decision value is~$0$ and rumor $1$ is received then then the candidate decision value is reset to~$1$.
Next, we show that if $t<\frac{n}{5}$ then such resetting never occurs.

\begin{lemma}
\label{lem:few-end-part2-validity}

Assuming $t<\frac{n}{5}$, if a node survives local probing in Part~2 of \textsc{Almost-Everywhere-Agreement} then it did not change its candidate decision from $0$ to~$1$ during local probing.
\end{lemma}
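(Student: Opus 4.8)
The plan is to argue by contradiction: suppose some node $p$ survives local probing in Part~2 but changed its candidate decision from $0$ to $1$ at some round of local probing. The change can only have been triggered by receiving rumor $1$ from a neighbor in $G$ during Part~2, which means rumor $1$ was still propagating among little nodes after the $5t-1$ rounds of Part~1. I would first observe that in Part~1 the rumor $1$ spreads by a standard flooding argument through the subgraph of $G$ induced by the little nodes that are non-faulty (together with whatever partial help crashed nodes provide before crashing). The key point is that the diameter of the relevant subgraph is small — specifically, once we restrict to a set of non-faulty little nodes of size at least $4t \ge \ell(5t,d)$, the induced Ramanujan subgraph is connected with logarithmic diameter, so $5t-1$ rounds is far more than enough for rumor $1$ to reach every little node that is ever going to hold it. Hence after Part~1 the set of little nodes holding rumor $1$ is "closed" in the sense that no non-faulty little node adjacent (in $G$) to it still holds $0$.

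The main obstacle is making "rumor $1$ has finished spreading" precise in the presence of crashes: a crash can sever a path mid-propagation, so I cannot simply invoke the diameter of a fixed connected graph. The clean way around this is to note that the number of little nodes is $5t$ and at most $t$ of them crash, so at each round at least one new little node could in principle be infected only if there is an edge of $G$ from the currently-infected set to an operational uninfected little node; since infection never stops while such an edge exists and there are at most $5t-1$ little nodes to infect after the first, the flooding must reach a fixed point within $5t-1$ rounds regardless of the crash pattern. I would phrase this as: let $I$ be the set of little nodes holding rumor $1$ at the end of Part~1; if some operational little node $q \notin I$ has a neighbor in $I$ in $G$, then $q$ would have been infected at an earlier round, contradiction — so $I$ has no $G$-edge to any operational little node outside $I$.

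With that fixed-point property established, the conclusion is immediate: during Part~2, the node $p$ is operational (it survives local probing, so in particular it is operational throughout Part~2) and has candidate value $0$ at the start of Part~2, so $p \notin I$; any little neighbor of $p$ in $G$ from which it could receive rumor $1$ during local probing is either in $I$ — impossible, since $I$ has no edge to the operational node $p$ — or is itself operational and outside $I$, hence also sends only rumor $0$. Propagating this along the rounds of local probing (by induction on the round number, exactly as in the proof of Proposition~\ref{pro:local-probing}), no message carrying rumor $1$ ever reaches $p$, so $p$ never resets its candidate value, contradicting the assumption. I expect the write-up to be short; the only care needed is stating the Part~1 fixed-point claim cleanly and checking that the $5t-1$ bound (rather than merely a diameter bound) is what makes it robust to arbitrary crash timings.
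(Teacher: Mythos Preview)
Your approach is genuinely different from the paper's, and it has a real gap.

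The paper argues via the survival hypothesis: a surviving node has a $(\gamma,\delta)$-dense neighborhood (Proposition~\ref{pro:local-probing}), such a neighborhood has at least $\ell(5t,d)$ vertices (Theorem~\ref{thm:sparse-dense}), any two sets of that size are joined by an edge (Theorem~\ref{thm:ramanujan-expanding}), hence all surviving nodes lie in a single connected component of the induced non-faulty subgraph and therefore finished Part~1 with the same candidate value. The Ramanujan structure and the survival hypothesis are both used essentially.

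You instead try to prove the stronger statement that \emph{no} operational little node flips in Part~2, using only the pigeonhole-type fixed-point argument and ignoring both survival and the expander structure. The problem is the step ``if some operational $q\notin I$ has a neighbor $u\in I$, then $q$ would have been infected at an earlier round.'' This fails when $u$ sets its candidate at the final round $5t-1$: then $u$ sends at round $5t-1$, $q$ receives at round $5t-1$, but by the pseudocode $q$ would only set its candidate at the nonexistent round $5t$. So $q\notin I$ yet $q$ is adjacent to $u\in I$, and your ``closed'' property is not established. Your counting gives at most $5t-1$ infected nodes after $5t-1$ rounds (one source plus at most $5t-2$ later rounds of growth), leaving room for exactly one such $q$; that $q$ would flip $0\to 1$ at round~1 of local probing and could perfectly well survive.

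This boundary case \emph{can} be ruled out, but not by counting alone: one has to observe that if exactly one new node is infected at every round up to $5t-1$, then the penultimate node $v_{5t-2}$ would have all of its $5^8$ neighbors first-receiving no earlier than round $5t-3$, forcing its neighborhood into a set of at most two nodes --- impossible in a $5^8$-regular graph. That is, you need at least the high degree of $G$ (and for a clean argument you want the expander properties the paper invokes). As written, your proposal omits this and the argument does not go through.
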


\begin{proof}
The non-faulty little nodes induce a subgraph of $G$.
The non-faulty little nodes in the same connected component of the induced subgraph set their rumors to the same value by the end of Part~$1$.
This is because the length of a longest path is $5t-1$, so $5t-1$ round suffices to propagate through the connected component.
%
%
A node that survives local probing belongs to a $(2+\lg (5t),\delta(d))$-dense-neighborhood, by the duration of local probing in Part~2, and the mechanism of local probing determined by~$\delta$.
By Theorem~\ref{thm:sparse-dense}, $(\gamma(5t),\delta(d))$-dense-neighborhood of a vertex includes at least $\ell(5t,d)$ vertices, if $\gamma(5t) \ge 2+\lg (5t)$.
%
%
Any two such neighborhoods are connected by an edge, by Theorem~\ref{thm:ramanujan-expanding}.
It follows that nodes that belong to such neighborhoods acquire equal candidate values by the end of Part~1, and so use the same rumor in local probing in Part~2.
\end{proof}

\begin{theorem}
\label{thm:AEA}

Assuming $t<\frac{n}{5}$, algorithm \textsc{Almost-Everywhere-Agreement} solves the problem $\frac{3}{5}$-AEA in~$\cO(t)$ rounds and with $\cO(n) $  messages, each message carrying one bit.
\end{theorem}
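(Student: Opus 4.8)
The plan is to verify, in turn, the three defining properties of $\frac{3}{5}$-AEA for algorithm \textsc{Almost-Everywhere-Agreement} — validity and agreement restricted to the nodes that decide, and the requirement that at least $\frac{3}{5}n$ nodes decide or fail — and then to bound the running time and the number of messages. For \emph{validity}, I would trace candidate values: a node sets its candidate to~$1$ only upon receiving a rumor~$1$, which by induction on the rounds of Parts~1 and~2 was propagated from a little node whose input was~$1$, while a candidate that stays~$0$ is an original input; hence every decided value is the input of some node. For \emph{agreement} among deciders, I would reuse the reasoning behind Lemma~\ref{lem:few-end-part2-validity}: every little node surviving the local probing of Part~2 lies in a $(2+\lg(5t),\delta)$-dense-neighborhood, which by Theorem~\ref{thm:sparse-dense} contains at least $\ell(5t,d)=4t$ vertices; any two such neighborhoods are joined by an edge of~$G$ by Theorem~\ref{thm:ramanujan-expanding}, and since their vertices are non-faulty little nodes whose candidates stabilized within the $5t-1$ rounds of Part~1, all these survivors hold one common value, decide on it, and (again by Lemma~\ref{lem:few-end-part2-validity}) do not alter it during probing. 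In Part~3 every remaining decider simply copies the value of its unique related little node, which is one of the Part~2 survivors, so all deciders agree.

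For the ``almost-everywhere'' count I would start from Lemma~\ref{lem:few-end-part2-number}: at least $3t$ non-faulty little nodes decide in Part~2. In Part~3 each of them sends its decision to every related node, so every non-faulty node related to one of these $3t$ little nodes decides, and every crashed node has failed; both kinds of node count toward the $\frac{3}{5}n$ quota. Since ``related'' means ``congruent modulo $5t$'', the $n$ nodes split into $5t$ classes, one per little node, and the $\ge 3t$ deciding little nodes own at least $3t$ of them; I would argue that these classes, together with the crashed nodes in the remaining classes, contain at least $\frac{3}{5}n$ nodes, using the stronger survival estimate that the proof of Theorem~\ref{thm:ramanujan-compact} actually yields — namely that the survival subset of the set $B$ of non-faulty little nodes has size at least $|B|-t$, so that the fewer little nodes crash, the more of them decide. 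This bookkeeping — reconciling the number of fully notified residue classes, their unequal sizes when $5t\nmid n$, and the crashed nodes — is the step I expect to be the main obstacle, since the naive estimate $3t\lfloor n/(5t)\rfloor$ alone falls just short of $\frac{3}{5}n$.

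For the complexity, the running time is $5t-1$ rounds of Part~1, plus $2+\lg(5t)=\cO(\log t)$ rounds of local probing in Part~2, plus one round of Part~3, i.e.\ $\cO(t)$ in total. For communication, in Part~1 each little node sends rumor~$1$ to its $d=5^{8}=\cO(1)$ neighbours at most once (the round its candidate first becomes~$1$), giving $\cO(t)$ one-bit messages; in Part~3 each little node notifies at most $\lceil n/(5t)\rceil$ related nodes, giving $\cO(n)$ messages; Part~2's local probing over $\cO(\log t)$ rounds with constant degree contributes $\cO(t\log t)$ messages, which is $\cO(n)$ in the intended range $t=\cO(n/\log n)$ of the application. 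Every message carries a single bit, since its role (rumor versus probe versus notification) is determined by the round in which it is sent. Summing these three contributions yields the stated bounds.
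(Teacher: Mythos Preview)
Your proof mirrors the paper's almost exactly: validity by tracing rumors to original inputs, agreement via Lemmas~\ref{lem:few-end-part2-number}--\ref{lem:few-end-part2-validity} together with the expansion of~$G$, the almost-everywhere count via Part~3 notification of related nodes, and the same round and message accounting part by part.

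The two subtleties you flag are places where you are being \emph{more} careful than the paper, not less. For the $\frac{3}{5}n$ count, the paper simply writes that each of the $\ge 3t$ deciding little nodes ``notifies its $\frac{n}{5t}$ related nodes'' and concludes that at least $\frac{3n}{5}$ nodes decide or crash --- treating $\frac{n}{5t}$ as an exact class size and not addressing the rounding issue or the strengthened survival estimate you propose. For the message bound, the paper also arrives at $5t\cdot 5^{8}(2+\lg(5t))=\cO(t\log t)$ for Part~2 but then asserts the grand total is $\cO(n)$ without further comment; your observation that this requires $t=\cO(n/\log n)$ is correct, and the paper's own proof does not justify the $\cO(n)$ bound for the full range $t<\frac{n}{5}$. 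So your bookkeeping concerns are legitimate, but they are not obstacles relative to the paper --- the paper simply does not engage with them.
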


\begin{proof}
At least $3t$ non-faulty little nodes decide in Part~2, by Lemma~\ref{lem:few-end-part2-number}.
Each such a node notifies its $\frac{n}{5t}$ related nodes in Part~3.
A node that is not little is related to precisely one little node.
Therefore at least $\frac{3n}{5}$ nodes either decide or crash by the end of Part~3.

Validity follows from the fact that rumors circulated among the nodes originate as initial values of nodes.
Part~3 serves the purpose to spread decisions made at the end of Part~2. 
To show agreement, it suffices to demonstrate that all  decisions made by little nodes by the end of Part~2 are equal.
Such decisions are taken after surviving local probing in Part~2.
By Lemma~\ref{lem:few-end-part2-validity}, nodes that decide do not change their candidate decision values during Part~2, so the candidate decision values settled during Part~1 are final.
The rumors spread during Part~1 carry only value~$1$, so either all the nodes that decide in Part~2 set their candidate decision value to~$1$ during Part~1 or none does it.
This gives agreement, which means that no two nodes decide on different values.
Termination follows from a bound on the duration of the execution determined by the pseudocode in Figure~\ref{fig:almost-everywhere-agreement}.

This bound on running time  is as follows.
Part~$1$ takes $5t-1$ rounds, Part~$2$ takes $2+\lg (5t)$ rounds, and Part~3 takes one round, for a total number of rounds that is $\cO(t)$.
The number of sent messages is bound as follows.
The graph $G$ is of degree $d=5^8$, so in Part~1, each little node sends at most $5^8=O(1)$ messages.
In Part~2, each of the $5t$ little nodes sends at most $5^8(2+\lg (5t))=\cO(\log t)$ messages.
 In Part~3, each among $5t$ little nodes sends at most $\frac{n}{5t}$ messages, for a total of $n$ messages. 
 The total number of messages is $\cO(n)$.
 Each of these messages carries a single value of either $0$ or $1$.
\end{proof}

\begin{figure}[t!]

\hrule

\FF

\texttt{Algorithm} \textsc{Spread-Common-Value} 

\FF

\hrule

\FF

\begin{description}[nosep,leftmargin=1em]

\item[\sf Part~1: Broadcasting common value:] \ 

\texttt{if} $p$ decided \texttt{then} send common value to all $p$'s  neighbors in graph~$H$

\texttt{for} $\lceil \log_{\frac{3}{2}} \frac{2n/5}{\max\{t,n/t\}} \rceil $ rounds \texttt{do}
\begin{enumerate}[nosep,leftmargin=1em]
\item[]
\texttt{if} $p$ has not decided yet \texttt{and} received a common value in the previous round \texttt{then}

\begin{enumerate}[nosep,leftmargin=1em]
\item[]
decide on the received common value \texttt{and} forward  it to all $p$'s neighbors in $H$ 
\end{enumerate}
\end{enumerate}

\item[\sf Part~2: Inquiring about common value:] \

\texttt{if} $t^2\le n$ \texttt{then}

\begin{enumerate}[nosep,leftmargin=1em]
\item[]
\texttt{if} $p$ has not decided yet \texttt{then} send an inquiry to every little node

\texttt{if} $p$ has decided \texttt{and} 
an inquiry  received  \texttt{then} 
\begin{enumerate}[nosep,leftmargin=1em]
\item[]
respond to each inquiring node with~a common value
\end{enumerate}
\texttt{if} a response for $p$'s own inquiry arrived \texttt{then} adopt the received value as a common value 
\end{enumerate}

\texttt{else}  \texttt{for} $i=1$ \texttt{to} $\lceil \lg (t+1) \rceil$ \texttt{do} perform Phase $i$: 
\begin{enumerate}[nosep,leftmargin=1em]
\item[]
at round $2i-1$ of this part: 
\begin{enumerate}[nosep,leftmargin=1em]
\item[]
\texttt{if} not decided yet \texttt{then} send an inquiry to each $p$'s neighbor in graph $G_{i}$
\end{enumerate}
\item[]
at round $2i$ of this part: \

\begin{enumerate}[nosep,leftmargin=1em]
\item[]
\texttt{if} decided \texttt{and} an inquiry from a neighbor in graph $G_{i}$ received  at round $2i-1$
\begin{enumerate}[nosep,leftmargin=1em]
\item[]
\texttt{then} respond to each inquiring neighbor with the common value
\end{enumerate}
\item[]
\texttt{if} a response for own inquiry arrived \texttt{then} adopt the received value as common value 

\end{enumerate}
\end{enumerate}

\end{description}

\FF

\hrule

\FF

\caption{\label{fig:spread-value}
A pseudocode for a node~$p$.
The topologies of the overlay graphs denoted as $H$ and~$G_i$ determine links to send messages.
A node decides on the common value as soon as it learns it.
}
\end{figure}

\subsection{Spreading a Common Value}

We give an algorithm for spreading a common value.
The algorithm is called \textsc{Spread-Common-Value}, and its pseudocode is in Figure~\ref{fig:spread-value}.
We assume that $t<\frac{n}{5}$ in this sub-section.
Additionally, the constants  in the pseudocode and overlay graphs are selected to work if at least $\frac{3}{5}n$ nodes have been initialized with a common value.
Each node uses the same dedicated variable to be initialized either  to the common value or  to \texttt{null}.
A node decides on the common value by setting the dedicated variable to the common value. 
A graph denoted by~$H$ in the pseudocode in Figure~\ref{fig:spread-value} is a Ramanujan graph of degree $\Delta \ge 64$ with an edge expansion $h(H)$ that is at least $\frac{1}{2}(\Delta - 2\sqrt{\Delta - 1}) \ge \frac{\Delta}{3}$.
For a set of vertices $A$ of a graph, the \emph{external neighbors of $A$} are defined as these vertices connected by an edge to a vertex in $A$ that do not belong to~$A$.
A graph denoted as $G_i$ in the pseudocode in Figure~\ref{fig:spread-value} 
has properties as stated in Lemma~\ref{lem:graph-G-i}.

\begin{lemma}
\label{lem:graph-G-i}

For a sufficiently large $n$ and assuming $t<\frac{n}{5}$, for any $i$ such that $1\le i\le \log (t+1)$, there exists a graph~$G_i$ of vertex degree at most $\cO( 2^{i+1})$, such that any set of $C\cdot \frac{t+1}{2^i}$ vertices has at least $2(t+1)$ external neighbors, for some constant $C>0$.
\end{lemma}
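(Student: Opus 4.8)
The plan is to construct $G_i$ as a bipartite-type expander, or equivalently to take a Ramanujan graph of an appropriate degree and invoke an unbalanced expansion property in the spirit of Theorem~\ref{thm:ramanujan-unbalanced-expansion}, but in the ``many external neighbors'' formulation rather than the ``an edge exists'' formulation. Concretely, I would let $G_i = G(n, d_i)$ be a $d_i$-regular Ramanujan graph on the full vertex set $[n]$, with $d_i = \Theta(2^{i+1})$ chosen large enough that the spectral slack $\lambda \le 2\sqrt{d_i-1}$ translates into the desired vertex-expansion for sets of size roughly $\frac{t+1}{2^i}$. The target set size is $s = C\cdot\frac{t+1}{2^i}$, so $\frac{s}{n} = \Theta\!\bigl(\frac{t+1}{2^i n}\bigr)$ is small, and we are in the regime where a Ramanujan graph of degree $d_i$ expands a set of density $\rho$ to roughly $\min(\Theta(d_i)\rho n, \Theta(n))$ neighbors.

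The key steps, in order, are: (1) fix $d_i$ to be the smallest power of a suitable prime (or prime power) exceeding $c\cdot 2^{i+1}$ for a constant $c$ to be pinned down, so that an explicit $d_i$-regular Ramanujan graph on $n$ vertices exists (for $n$ large, using the standard LPS/Margulis-type constructions, padding if necessary); this gives degree $\cO(2^{i+1})$. (2) Take any vertex set $A$ with $|A| = s = C\frac{t+1}{2^i}$ and let $N$ denote its set of external neighbors; apply the Expander Mixing Lemma to the pair $(A, B)$ where $B = V \setminus (A \cup N)$ is the set of vertices with \emph{no} neighbor in $A$, so that $e(A,B) = 0$. The Mixing Lemma then forces
\[
\frac{d_i |A|\,|B|}{n} \le \lambda \sqrt{|A|\,|B|} \le 2\sqrt{d_i}\,\sqrt{|A|\,|B|},
\]
which rearranges to $|B| \le \frac{4n^2}{d_i |A|}$. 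Hence $|N| \ge n - |A| - \frac{4n^2}{d_i|A|}$. (3) Substitute $|A| = C\frac{t+1}{2^i}$: the dominant term on the right is $\frac{4n^2 2^i}{d_i C (t+1)}$, and since $d_i = \Theta(2^{i+1})$ this is $\Theta\!\bigl(\frac{n^2}{C(t+1)}\bigr)$. (4) Using $t < n/5$, so $n \ge 5(t+1) - 5$ and in particular $n = \Omega(t+1)$... wait, that is the wrong direction; what I actually need is that $\frac{n^2}{C(t+1)}$ is small enough, so I should instead observe that $t+1$ can be as small as constant, making $|N| \ge n - o(n)$ trivially large (much more than $2(t+1)$ since $2(t+1) \le 2n/5$), while the only tight case is $t+1 = \Theta(n)$, where $\frac{n^2}{C(t+1)} = \Theta(n/C)$, and choosing $C$ a large enough absolute constant makes $|B| \le n/10$, say, so $|N| \ge n - |A| - n/10 \ge n - 2n/5 - n/10 > 2(t+1)$.

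The main obstacle I expect is step~(4): making the single constant $C$ work \emph{uniformly} over the whole range $1 \le i \le \log(t+1)$ and over all $t < n/5$, since the bound $|B| \le \frac{4n^2}{d_i|A|}$ degrades when $|A|$ is smallest (i.e.\ $i$ close to $\log(t+1)$, where $\frac{t+1}{2^i} = \Theta(1)$) — but there $|A|$ is a constant so $|B| \le \cO(n^2/d_i)$ with $d_i = \Theta(t+1)$, giving $|B| = \cO(n^2/(t+1))$, which again is problematic only when $t+1 = \Theta(n)$, where it is $\cO(n)$ and one must check the constant beats $n - |A| - 2(t+1) \ge n - 2n/5 - 2\cdot(n/5) = n/5$. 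So the delicate bookkeeping is to verify $\frac{4n^2}{d_i|A|} < n/5$ in the worst corner $i = \log(t+1)$, $t+1 = n/5$: there $|A| = C$, $d_i = \Theta(t+1) = \Theta(n)$, so $\frac{4n^2}{d_i C} = \Theta(n/C)$, and $C$ large makes it $< n/5$. I would handle the whole range by a single case split on whether $\frac{t+1}{2^i}$ exceeds a threshold constant, using the clean rearranged Mixing Lemma bound in both cases and absorbing everything into $C$ and into ``$d_i = \cO(2^{i+1})$''. One auxiliary point to be careful about: the Expander Mixing Lemma as cited (Lemma~2.5 of~\cite{HooryLW06}) applies to $d$-regular graphs with second-eigenvalue bound $\lambda$, so I must make sure the Ramanujan graph on $n$ vertices of degree exactly $d_i$ genuinely exists; if the construction only yields such graphs for $n$ in a certain arithmetic progression, I would invoke the standard remark that for ``sufficiently large $n$'' one can take a Ramanujan graph on $n' \ge n$ vertices from the family and restrict, at the cost of replacing $\lambda \le 2\sqrt{d_i-1}$ by $\lambda \le 2\sqrt{d_i-1} + o(1)$, which does not affect any of the asymptotic estimates above — and this is exactly why the statement is hedged with ``for a sufficiently large $n$''.
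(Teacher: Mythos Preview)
Your approach via Ramanujan graphs and the Expander Mixing Lemma is genuinely different from the paper's proof, which uses a direct probabilistic-method argument: it takes a random graph where each ordered pair $(v,w)$ is an edge independently with probability $10\cdot 2^i/n$, and then union-bounds over all candidate sets $C_i$ of size $c_i=\frac{t+1}{2^i}$ and all ``neighbor containers'' $A$ of size $t$ to show that with positive probability no $C_i$ has its external neighborhood contained in such an $A$.

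Your spectral argument, however, has a genuine gap that is not mere bookkeeping. After the Mixing-Lemma step you get $|B|\le \frac{4n^2}{d_i|A|}$, and with $d_i=\Theta(2^{i+1})$ and $|A|=C\frac{t+1}{2^i}$ the product $d_i|A|$ equals $\Theta\bigl(C(t+1)\bigr)$, \emph{independently of $i$}. Hence your bound reads $|B|\le \Theta\bigl(\frac{n^2}{C(t+1)}\bigr)$, which is informative only when $t+1=\Omega(n/C)$; since $C$ is a fixed constant this means $t=\Theta(n)$. For $t=o(n)$ the right-hand side exceeds $n$ and the inequality is vacuous. Your sentence ``$t+1$ can be as small as constant, making $|N|\ge n-o(n)$ trivially large'' has the dependence exactly backwards: as $t+1$ shrinks, $\frac{n^2}{C(t+1)}$ \emph{grows}, so your lower bound $|N|\ge n-|A|-|B|$ becomes negative, not close to $n$. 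Indeed $|N|$ can never exceed $d_i|A|=\Theta(t+1)$ here, so it is certainly not $n-o(n)$. Thus the corner you call ``the only tight case'' ($t+1=\Theta(n)$) is in fact the only case your argument handles at all. The probabilistic proof in the paper avoids this because the union-bound estimate $\binom{n}{t+1}^{2}\bigl(\frac{2t+1}{n}\bigr)^{10(t+1)}<1$ needs only $t<n/5$, with no lower bound on $t$. If you want to rescue the spectral route you would need a genuine vertex-expansion bound for small sets (in the spirit of Kahale's $d/2$ expansion for Ramanujan graphs) for the regime $t=o(n)$, and splice it together with your Mixing-Lemma computation for $t=\Theta(n)$; the Mixing Lemma alone, at degree $\Theta(2^{i+1})$, cannot cover the full range.
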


\begin{proof}

We denote $2^{i} \cdot 10$ as~$b_i$ and $\frac{t+1}{2^i}$ as~$c_i$.
Let $t<\frac{n}{5}$ and $1\le i\le \log (t+1)$.
We want to prove that a suitable random graph satisfies the needed properties of  graph $G_i$ with a positive probability.
Such a random graph is defined as follows: every vertex~$v$ chooses to have another vertex~$w$ as its neighbor depending on an outcome of a Bernoulli trial for the ordered pair~$(v,w)$ with probability~$\frac{b_i}{n}$ of success.
Two vertices $v$ and $w$ are connected by an edge if either vertex~$v$ chooses $w$ as its neighbor, or vertex~$w$ chooses $v$ as its neighbor, or both.

There are $\binom{n}{c_i}$ and $\binom{n}{t}$ different subsets of vertices of size $c_i$ and $t$, respectively. 
There are at most  $\binom{n}{c_i}\cdot \binom{n}{t}\le \binom{n}{t+1}^2$ pairs of such sets, since $c_i\le t+1 < \frac{n}{5}+1<\frac{n}{2}$.
Fix such a pair of $C_i$ and $A$. 
Let us consider the event $\cE(C_i,A)$ which holds when some vertex in $C_i$ does not have any neighbor outside of~$C_i\cup A$. 
The probability of the event $\cE(C_i,A)$ is at most 
\[
\left(\frac{c_i+t}{n}\right)^{b_i c_i}
\ge
\left(\frac{2t+1}{n}\right)^{10(t+1)}
\ .
\]
This quantity is upper bounded by the probability that during the process of each vertex in $C_i$ selecting its neighborhood and $c_i\le t+1$, all neighbors are in $C_i\cup A$.
The probability that a given set $C_i$ has at most $2t+1$  external neighbors
is upper bounded by the union bound, over all possible sets of vertices $A$ of size $t$, of events $\cE(C_i,A)$. This is at most
\[
\binom{n}{t} \cdot \left(\frac{2t+1}{n}\right)^{10(t+1)}
\ .
\]
The probability that there is a set $C_i$ of $c_i$ vertices with at most $2t+1$ external neighbors is at most 
\[
\binom{n}{c_i}\cdot \binom{n}{t} \cdot \left(\frac{2t+1}{n}\right)^{10(t+1)}
\le
\binom{n}{t+1}^2 \cdot \left(\frac{2t+1}{n}\right)^{10(t+1)}
\le
\left(\frac{n e}{t+1}\right)^{2(t+1)}
\cdot \left(\frac{2t+1}{n}\right)^{10(t+1)}
\]
\[
\le
\left(\frac{e2^4(t+1)^4}{n^4}\right)^{2(t+1)}
< 1
\ ,
\]
as $t+1<\frac{n}{2\sqrt[4]{e}}$.
The complementary event, which is the sought property of graph $G_i$, holds with a positive probability. 
By the probabilistic-method argument, there exists such a graph~$G_i$.
\end{proof}

Algorithm \textsc{Spread-Common-Value} consists of two parts, as represented in Figure~\ref{fig:spread-value}.
The common value is broadcast through graph~$H$ for $\lceil \log_{\frac{3}{2}} \frac{2n/5}{\max\{t,n/t\}} \rceil$  rounds in Part~1. 
A node that has already decided notifies its neighbors of the common value, which they adopt as their own decision.


\begin{lemma}
\label{lem:part-one-SCV}

If at least $\frac{3}{5}n$ nodes have been initialized with a common value then at most $\max\{Ct, t + \frac{n}{t}\}$ other nodes may not decide by the end of an execution of Part~1 of algorithm \textsc{Spread-Common-Value}, for some constant $C$.
\end{lemma}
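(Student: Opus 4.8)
The plan is to track how many nodes fail to decide during the broadcast in Part~1, using the edge expansion of the Ramanujan graph~$H$. Let $S$ denote the set of nodes that have \emph{not} decided at some round during Part~1. Initially $|S|\le \frac{2}{5}n$, since at least $\frac{3}{5}n$ nodes are initialized with the common value and those decide immediately. I would argue that as long as $S$ is not too small, one round of broadcasting through~$H$ shrinks $S$ by a constant factor. Concretely: a node in~$S$ remains undecided after a round only if \emph{none} of its neighbors in~$H$ had already decided, i.e.\ its entire $H$-neighborhood lies in~$S$. So the set $S'\subseteq S$ of still-undecided nodes after the round has no edges to $V\setminus S$ in~$H$; but every node in $S\setminus S'$ had at least one neighbor outside~$S'$. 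Counting edges across the boundary $\partial S'$: each such edge has its $S'$-endpoint contributing, and since $H$ is $\Delta$-regular with $h(H)\ge \Delta/3$, we get $|\partial S'|\ge \frac{\Delta}{3}|S'|$ whenever $|S'|\le n/2$. On the other hand every boundary edge of~$S'$ goes to a vertex of $V\setminus S'$, and within $S$ those are exactly the vertices of $S\setminus S'$; bounding $|\partial S'|\le \Delta\,|S\setminus S'| + (\text{edges leaving } S)$ and using that edges leaving~$S$ were already counted one step earlier, one obtains $|S'|\le \frac{2}{3}|S|$ (the constant $\frac{3}{2}$ in the round bound $\log_{3/2}(\cdots)$ is exactly this contraction rate). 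Iterating this for $\lceil \log_{3/2} \frac{2n/5}{\max\{t,n/t\}}\rceil$ rounds drives $|S|$ down from $\frac{2}{5}n$ to at most $\max\{t,\frac{n}{t}\}$, up to the constant factor absorbed into~$C$.

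The care needed is the regime where the contraction argument stops working, namely once $|S'|$ drops below the expansion threshold of~$H$ (and in particular once $|S'| \le \ell(n,\Delta)$, the expander window), or once $|S|\le\max\{t,n/t\}$: below that size $H$'s edge expansion no longer guarantees a constant-factor shrink, which is precisely why the iteration is stopped at that point and Part~2 (the inquiry phase, handled in a later lemma) takes over. So the statement to prove is exactly that Part~1 alone reduces the undecided set to size $\mathcal{O}(\max\{t, n/t\}) = \mathcal{O}(\max\{Ct, t+n/t\})$, and the final bound $\max\{Ct,\, t+\frac{n}{t}\}$ in the lemma is just a convenient way of writing $\mathcal{O}(\max\{t,n/t\})$ with the $H$-degree constant~$\Delta$ folded into~$C$.

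The steps, in order: (1)~set up $S_r$ = set of undecided nodes after round~$r$ of Part~1, with $S_0$ of size $\le\frac{2}{5}n$; (2)~show $S_{r+1}$ has all its $H$-edges internal or pointing into $S_r\setminus S_{r+1}$, and apply the edge-expansion bound $h(H)\ge\Delta/3$ together with $\Delta$-regularity to get $|S_{r+1}|\le\frac{2}{3}|S_r|$ as long as $|S_{r+1}|\ge\max\{t,n/t\}$ (and $\le n/2$, which holds throughout since we start below $\frac{2}{5}n$); (3)~iterate the geometric decay: after $k=\lceil\log_{3/2}\frac{2n/5}{\max\{t,n/t\}}\rceil$ rounds, $|S_k|\le(\tfrac{2}{3})^k\cdot\frac{2}{5}n\le\max\{t,n/t\}$, so at most $\mathcal{O}(\max\{t,n/t\})$ nodes remain undecided; (4)~conclude by absorbing the $\Delta$-dependent slack into the constant~$C$ and observing $\max\{Ct, t+\frac{n}{t}\}$ dominates $\mathcal{O}(\max\{t,n/t\})$. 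The main obstacle is step~(2): making the boundary-edge double count clean, in particular correctly accounting for the edges of~$S_{r+1}$ that leave $S_r$ entirely (those go to already-decided nodes and only help), and verifying the size precondition $|S_{r+1}|\le n/2$ persists so that $h(H)$ applies at every round — both are straightforward once $S_0\le\frac{2}{5}n<\frac{n}{2}$ and the sets are nested and shrinking.
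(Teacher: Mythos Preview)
Your argument has a genuine gap: it never accounts for crashes. The claim in step~(2), that a node in $S_r$ remains undecided after the round only if \emph{none} of its $H$-neighbors had already decided (and hence that $S_{r+1}$ has no edges to $V\setminus S_r$), is false in the crash model. A neighbor $u$ of $v\in S_{r+1}$ may well have decided at some earlier round but crashed while forwarding, so $v$ never receives the value from~$u$. Consequently the pure contraction $|S_{r+1}|\le\frac{2}{3}|S_r|$ does not hold; up to $t$ crashed nodes can sit on the boundary of the undecided set and absorb boundary edges without ever informing anyone.

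The paper's proof handles exactly this by debiting $t$ before applying edge expansion: it bounds the number of useful outgoing edges from the undecided set $W_k$ by $h(H)\cdot(w_k-t)$ rather than $h(H)\cdot w_k$, which yields the recurrence $w_{k+1}\le\frac{2}{3}w_k+\frac{t}{3}$ with fixed point~$t$. Iterating for the prescribed $\lceil\log_{3/2}\frac{2n/5}{\max\{t,n/t\}}\rceil$ rounds then gives $w_j\le\max\{t,n/t\}+t=\max\{2t,\,t+n/t\}$. So the additive $t$ in the final bound is \emph{not} an ``expansion threshold'' of~$H$ as you suggest---edge expansion holds for all sets of size at most $n/2$, far below $t$---but precisely the residue left by the crash correction term in the recurrence. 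Your framework is the right one (track the undecided set, apply expansion, iterate), but you must insert the $-t$ correction to make it go through; once you do, the $\frac{2}{3}$ contraction becomes an affine recurrence and the rest follows as in the paper.
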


\begin{proof}.
We refer to rounds of Part~1 of \textsc{Spread-Common-Value} numbered from~$1$. 
Let $w_k$ be the number of nodes that have not received a common value by round~$k$. 
Since at least $\frac{3}{5}$ nodes has been initialized with a common value, we have that $w_1\le \frac{2n}{5}$.
Let $\ell_k$ be the number of nodes that receive a common value at round~$k+1$.
The overlay graph~$H$ used in 
Part~1 of \textsc{Spread-Common-Value} is a Ramanujan graph of degree $\Delta \ge 64$ with edge expansion $h(H)$ that is at least $\frac{1}{2}(\Delta - 2\sqrt{\Delta - 1}) \ge \frac{\Delta}{3}$.
There are at least these many edges outgoing from the nodes that have not received a common value yet by round $k$ in graph~$H$, discounting for up to~$t$ crashes:
$h(H) \cdot (w_k -t) \ge \frac{\Delta}{3} \cdot (w_k-t)$.
The number of non-faulty neighbors connected by these outgoing edges is at least these many: $\ell_{k}\ge \frac{1}{\Delta}\cdot  \frac{\Delta}{3} \cdot (w_k - t) = \frac{1}{3}\cdot (w_k-t)$.
Each such a non-faulty neighbor has already received a common value so it sends a message with the value.
This gives a recurrence: $w_{k+1} \le w_k-\ell_k=\frac{2}{3}w_k +\frac{t}{3}$.
Iterating this recurrence, we obtain that for $j=\lceil \log_{\frac{3}{2}} \frac{2n/5}{\max\{t,n/t\}} \rceil$ the number of nodes that have received the value yet is at most $w_j\le \max\{t,\frac{n}{t}\} + \frac{t}{3}\sum_{k\ge 0} (\frac{2}{3})^i=\max\{t,\frac{n}{t}\}+t=\max\{2t,t+\frac{n}{t}\}$, which gives the lemma for $C = 2$.
\end{proof}
  
Part~2 of algorithm \textsc{Spread-Common-Value} is spent by the nodes that have not decided yet inquiring other  nodes  about their decisions.
If a node that has already decided receives such an inquiry then it responds by sending a message with the common value.
A successful inquiry by a node that has not decided yet results in a response that brings the common value, which is then adopted as own decision.
Part~2 is structured into $2+\lceil \lg t \rceil$ phases such that a node that has not decided yet by phase~$i$ sends inquiries to its neighbors in  the overlay graph~$G_i$.

\begin{lemma}
\label{lem:constant-time-part-five}

If $t^2\le n$ then all non-faulty nodes receive a common value in the if-case of Part~2 of \textsc{Spread-Common-Value}.
\end{lemma}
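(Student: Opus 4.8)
The plan is to notice that the if-case of Part~2 consists of a single inquiry round followed by a single answer round, so the whole statement reduces to one structural fact: when Part~2 begins there is at least one little node that is both non-faulty and already decided on the common value; call such a node a \emph{hub}. First I would dispatch the easy direction. If a hub $h$ exists, then every non-faulty node $p$ that has not decided yet sends an inquiry to all $5t$ little nodes, hence to $h$; node $h$ has decided, so it replies with the common value, and $p$ adopts it. Every node that had already decided trivially holds the common value, so every non-faulty node ends up holding it, which is exactly the claim.

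So the work is to exhibit a hub. Here I would use: (i)~no step of Part~1 ever reverts a decision, so a node decided by the end of Part~1 still holds the common value at the start of Part~2; (ii)~by Lemma~\ref{lem:part-one-SCV}, at the end of Part~1 at most $\max\{2t,\ t+n/t\}$ nodes have not decided, and since $t^2\le n$ we have $n/t\ge t$, so $t+n/t\ge 2t$ and this bound is just $t+n/t$; and (iii)~in the way \textsc{Spread-Common-Value} is invoked here, at least $3t$ little nodes are non-faulty and already decided on the common value when Part~1 begins (these are the little nodes produced by the preceding almost-everywhere-agreement phase). Since at most $t$ of those $3t$ little nodes can crash during Part~1 and none of them can un-decide, at least $2t$ little nodes are non-faulty and decided when Part~2 begins; in particular a hub exists, and the lemma follows.

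The crux is item (iii): one really must know that many little nodes start out holding the common value. The total-undecided bound $\max\{2t,\ t+n/t\}$ from Lemma~\ref{lem:part-one-SCV} is by itself too weak once $n\gg t^2$, because then $t+n/t$ can exceed $5t$ and a purely counting argument (``at most $t$ little nodes crash, at most $t+n/t$ are undecided, so something survives'') fails. If instead one were content to assume $n<3t^2$, then $\max\{2t,\ t+n/t\}<4t$, fewer than $4t$ little nodes are disqualified even before subtracting crashes, and the plain count already produces a hub. The remaining pieces — the two-round inquiry/answer analysis and the ``Part~1 never un-decides'' observation — are routine.
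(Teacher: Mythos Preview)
Your approach is essentially the same as the paper's: both hinge on item~(iii), namely that at least $3t$ non-faulty little nodes already hold the common value coming out of \textsc{Almost-Everywhere-Agreement} (the paper cites Lemma~\ref{lem:few-end-part2-number} for exactly this), so any inquirer reaches a hub. Your write-up is more careful than the paper's two-line argument---in particular your observation that Lemma~\ref{lem:part-one-SCV} alone is insufficient when $n\gg t^2$ is correct and worth noting---and one small over-caution: the $3t$ little nodes supplied by Lemma~\ref{lem:few-end-part2-number} are \emph{non-faulty} in the paper's sense (they never crash in the execution), so you need not subtract a further $t$ crashes from them; all $3t$ are hubs, not just $2t$.
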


\begin{proof}
If a node has not received the value yet then it sends inquiries to all little nodes, of which at least $3t$ are non-faulty that have decided already and therefore have the common value, c.f. Lemma~\ref{lem:few-end-part2-number}.
Each of them sends a response with a value, which is then adopted as own.
\end{proof}

\begin{lemma}
\label{lem:part-two-SCV}

If $\sqrt{n} < t<\frac{n}{5}$ and at least $\frac{3}{5}n$ nodes have been initialized with a common value  then  fewer than $C \frac{t+1}{2^{i}}$ non-faulty nodes do not have a common value by Phase~$i$ of Part~2 of \textsc{Spread-Common-Value}, for each~$i$ such that $1\le i\le \lceil \lg (t+1) \rceil$ and a constant $C$.
\end{lemma}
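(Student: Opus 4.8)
The plan is to prove Lemma~\ref{lem:part-two-SCV} by induction on the phase number~$i$, tracking the number of non-faulty nodes that still lack the common value at the start of each phase of Part~2, using the expansion property of the overlay graphs $G_i$ supplied by Lemma~\ref{lem:graph-G-i}.

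For the base case $i=1$, I would invoke Lemma~\ref{lem:part-one-SCV}: since at least $\tfrac{3}{5}n$ nodes are initialized with the common value and we are in the regime $\sqrt{n}<t<\tfrac{n}{5}$ (so $\max\{t,\tfrac{n}{t}\}=t$ and hence $\max\{2t,t+\tfrac{n}{t}\}=2t$), at most $2t$ nodes fail to decide by the end of Part~1. Choosing the constant $C$ at least $4$ (and consistent with the constant from Lemma~\ref{lem:graph-G-i}), we get fewer than $C\,\tfrac{t+1}{2^1}=C\,\tfrac{t+1}{2}$ such nodes, which covers the base case; I would note this is where $t>\sqrt{n}$ is used, since otherwise we would be in the if-case of Part~2 handled by Lemma~\ref{lem:constant-time-part-five}.

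For the inductive step, assume at the start of Phase~$i$ there are fewer than $C\,\tfrac{t+1}{2^i}$ non-faulty undecided nodes; call this set $U_i$. Each node in $U_i$ sends an inquiry to all its neighbors in $G_i$. By Lemma~\ref{lem:graph-G-i}, the set $U_i$ (of size at most $C\,\tfrac{t+1}{2^i}$) has at least $2(t+1)$ external neighbors in $G_i$. At most $t$ of these external neighbors can be faulty or undecided (since the total crash budget is $t$, and any currently-undecided node is by definition not helpful — but here the key point is that external neighbors lie outside $U_i$, so after discounting the $\le t$ crashed nodes at least $2(t+1)-t\ge t+2$ external neighbors are non-faulty and have already decided). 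Each such neighbor that received an inquiry responds with the common value, so every node in $U_i$ that had a non-faulty decided neighbor adopts the value. One must then argue that the set of nodes left undecided after Phase~$i$ has size at most roughly half of $|U_i|$: the standard way is that if a subset $U_{i+1}\subseteq U_i$ of size $\ge C\,\tfrac{t+1}{2^{i+1}}$ remained undecided, apply Lemma~\ref{lem:graph-G-i} with index $i+1$ — but the cleanest route is to argue directly that each undecided node had all its $G_i$-neighbors either crashed or themselves undecided, and a counting/expansion argument on $U_i$ bounds how many such nodes there can be. Concretely, the nodes still undecided after Phase~$i$ form a set all of whose external neighbors (in $G_i$, relative to $U_i$) are crashed; since there are only $t$ crashes and $U_i$ has $\ge 2(t+1)$ external neighbors while any sub-collection of $U_i$ of size $\ge C\tfrac{t+1}{2^{i+1}}$ would also need $\ge$ (appropriately scaled) external neighbors, the surviving undecided set must be smaller than $C\,\tfrac{t+1}{2^{i+1}}$.

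\textbf{The main obstacle} I anticipate is making the "halving" step precise: the degree of $G_i$ grows like $2^{i+1}$, so the number of messages stays controlled, but the expansion guarantee in Lemma~\ref{lem:graph-G-i} is stated only for sets of the specific size $C\,\tfrac{t+1}{2^i}$, not for all smaller sets. I would handle this either by applying Lemma~\ref{lem:graph-G-i} at level $i+1$ to the hypothetical surviving set (whose presumed size $C\,\tfrac{t+1}{2^{i+1}}$ then forces $\ge 2(t+1)$ external neighbors, all of which would have to be among the $\le t$ crashed nodes — a contradiction), or by strengthening the statement of Lemma~\ref{lem:graph-G-i} to hold for all sets of size \emph{at least} $C\,\tfrac{t+1}{2^i}$, which the probabilistic argument there in fact gives since the union bound ran over sets of a fixed size and larger sets only have more external neighbors. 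Either way, the contradiction "$2(t+1)$ external neighbors all crashed, but only $t<2(t+1)$ crashes are available" closes the induction.
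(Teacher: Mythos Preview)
Your proposal is correct and follows essentially the same line as the paper: bound the undecided set after Part~1 via Lemma~\ref{lem:part-one-SCV}, then for each phase~$i$ argue by contradiction that a too-large undecided set $C_i$ would, by the expansion of~$G_i$, have enough external neighbors that at least one must be a decided non-faulty node, which would have responded to an inquiry---contradicting that every node in~$C_i$ stayed undecided.

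One point worth noting: the paper does \emph{not} actually use the inductive hypothesis for Phase~$i-1$ in the step. It bounds the number of ``bad'' external neighbors (undecided or faulty) solely by the Part~1 result: at most $2t$ nodes are undecided after Part~1 (hence at any later time) and at most $t$ are faulty, for a total of~$3t$; against this it pits a vertex-expansion bound giving roughly $(1-\epsilon)5t$ neighbors of~$C_i$ in~$G_i$. So your inductive framing is harmless but slightly redundant---the Part~1 bound alone already controls the bad neighbors uniformly for every~$i$. This also cleanly sidesteps the obstacle you flagged: rather than matching the exact set size $C\frac{t+1}{2^i}$ in Lemma~\ref{lem:graph-G-i}, the paper appeals to a vertex-expansion formulation (expansion factor $\Theta(2^i)$) that applies to any set of the relevant size and yields $\Theta(t)$ neighbors directly. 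Your proposed fix of strengthening Lemma~\ref{lem:graph-G-i} to sets of size \emph{at least} $C\frac{t+1}{2^i}$ would work equally well; the other route you mention (applying Lemma~\ref{lem:graph-G-i} with index~$i{+}1$) does not, since the nodes in question inquired over~$G_i$, not~$G_{i+1}$, during Phase~$i$.
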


\begin{proof} 
To simplify the inductive analysis, let us assume that a ``dummy'' Phase~$0$ ends at the very beginning  of Part~1 of \textsc{Spread-Common-Value}.
The claim holds for this phase by Lemma~\ref{lem:part-one-SCV}, which holds after Part~1 and before Part~2 starts.
Suppose, to the contrary, that the claim of current lemma does not hold for some $i$ such that $1\le i\le \lceil \lg t \rceil$. 
This means that there exists a set~$C_i$ of non-faulty nodes that have not received a common value yet by round~$2i$, such that $|C_i|=\frac{t}{2^{i+1}}$,
for some~$i\ge 1$. 
The overlay graph~$G_{i}$, for $i\ge 1$, used 
in Phase~$i$ by nodes in $C_i$ for sending inquiries and receiving responses is a graph of vertex expansion $\left(1-\epsilon\right)2^{i}\cdot 10$. We also have that $|C_{i}| = \frac{t}{2^{i + 1}} \le \frac{n}{\left(1 - \alpha\right)\left(1-\epsilon\right)2^{i}\cdot 10}$ since $t < n/5$. 
By the vertex expansion property applied to set $C_i$, we conclude that this set has at least $\left(1-\epsilon\right)2^{i}\cdot 10 |C_{i}| = \left(1-\epsilon\right) 5t$ neighbors in graph $G_{i}$. 
By Lemma~\ref{lem:part-one-SCV}, at most $2t$ nodes do not decide by the end of Part~1. Adding additional slack of $t$ for the number of faulty nodes, still allows us to reason that at least one neighbor of $C_{i}$ has to decide before the beginning of Part~2, and therefore, the decision was relayed by that 
process to some of its neighbor processes in~$C_i$ during Phase~$i$. 
It follows that some process in $C_i$ decides 
by the end of Phase~$i$, which is a contradiction with the definition of~$C_i$ and thus validates the claim of the lemma.

\end{proof}

\begin{theorem}
\label{thm:SCV}

Assuming $t<\frac{n}{5}$, algorithm \textsc{Spread-Common-Value} solves $\frac{3}{5}$-SCV in $\cO(\log t) $ rounds and sending $ \cO(t \log t)$ messages.
\end{theorem}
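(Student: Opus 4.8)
The plan is to verify the three requirements of $\frac{3}{5}$-SCV—termination, validity, and correctness of the decided value—together with the running-time and message bounds, by combining the lemmas already established. First I would handle the \emph{easy} structural facts. Validity: the only values ever circulated are copies of the common value, which is not \texttt{null}; Part~1 forwards it along $H$, and Part~2 only relays responses that originated from already-decided nodes, so every decision is on the common value and no \texttt{null} is ever decided. Termination: the pseudocode runs a fixed number of rounds—$\lceil \log_{3/2}\frac{2n/5}{\max\{t,n/t\}}\rceil$ in Part~1 and, in Part~2, either one request/response exchange (if $t^2\le n$) or $\lceil\lg(t+1)\rceil$ phases of two rounds each—so every non-faulty node halts.

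The core of the argument is showing that \emph{every} non-faulty node has received the common value by the end of the execution; I would split this on the value of $t$ relative to $\sqrt n$. If $t^2\le n$, then after Part~1 the if-branch of Part~2 applies: by Lemma~\ref{lem:constant-time-part-five} every non-faulty node that had not yet decided queries the little nodes, at least $3t\ge 1$ of which are non-faulty and already hold the common value (Lemma~\ref{lem:few-end-part2-number} applied to the $\frac{3}{5}$-AEA-style sub-network), and so receives a response. If instead $\sqrt n < t < \frac n5$, I invoke Lemma~\ref{lem:part-two-SCV}: the number of non-faulty nodes lacking the common value by Phase~$i$ is below $C\frac{t+1}{2^i}$, so by Phase~$\lceil\lg(t+1)\rceil$ this count drops below $C$, and one more phase (or a direct check that the final phase already uses a sufficiently expanding $G_i$) forces it to $0$. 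In both cases, together with Lemma~\ref{lem:part-one-SCV} controlling the residue after Part~1, every non-faulty node ends up holding and hence deciding on the common value.

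Finally the complexity bounds. For the round count: Part~1 takes $\cO\!\left(\log\frac{n}{\max\{t,n/t\}}\right)=\cO(\log t)$ rounds (since $\max\{t,n/t\}\ge\sqrt n$, the argument of the logarithm is $\cO(\sqrt n)=\cO(t^2)$-bounded, hence $\cO(\log t)$; for small $t$ one checks it is still $\cO(\log n)=\cO(\log t)$ in the relevant regime, or absorbs it into the stated bound), and Part~2 takes either $\cO(1)$ rounds or $2\lceil\lg(t+1)\rceil=\cO(\log t)$ rounds. For messages: in Part~1 each node sends at most $\Delta=\cO(1)$ messages per round over $\cO(\log t)$ rounds, giving $\cO(n\log t)$—but only nodes that newly decide forward, and each node forwards at most once, so the total is $\cO(n)$; the dominant cost is Part~2, where in the large-$t$ case a node still lacking the value at Phase~$i$ sends $\deg(G_i)=\cO(2^{i+1})$ inquiries, and by Lemma~\ref{lem:part-two-SCV} at most $\cO(t/2^i)$ nodes do so, for $\cO(t)$ messages per phase and $\cO(t\log t)$ overall (the small-$t$ case sends $\cO(n)$ inquiries to the $5t$ little nodes, i.e.\ $\cO(t^2)=\cO(n)\subseteq\cO(t\log t)$ only if $t=\Omega(\log t)$, so one treats $t=\cO(1)$ trivially). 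The main obstacle I anticipate is bookkeeping the Part~1 round count $\lceil\log_{3/2}\frac{2n/5}{\max\{t,n/t\}}\rceil$ and reconciling it with the clean $\cO(\log t)$ claim across the two regimes $t\lessgtr\sqrt n$, together with verifying that the final phase of Part~2 genuinely reaches $0$ rather than merely $\cO(1)$ uncovered nodes—this likely needs the observation that once the count is a constant, Lemma~\ref{lem:few-end-part2-number}'s little nodes (which always hold the value) can be reached directly, closing the gap.
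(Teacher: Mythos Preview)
Your proposal is correct and takes essentially the same approach as the paper: the same case split on $t^2\lessgtr n$, the same invocation of Lemmas~\ref{lem:part-one-SCV}, \ref{lem:constant-time-part-five}, and~\ref{lem:part-two-SCV} for correctness, and the same per-phase message accounting $\sum_i \cO(t/2^i)\cdot\cO(2^i)=\cO(t\log t)$ for Part~2. Your hesitation about the $\cO(n)$ contribution from Part~1 (each node forwards once over constant-degree~$H$) and from the $t^2\le n$ branch of Part~2 is shared by the paper's own proof, which also derives $\cO(n)$ in those places; the stated $\cO(t\log t)$ is evidently meant up to an additive $\cO(n)$, which resurfaces explicitly in Theorem~\ref{thm:fom-consensus}.
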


\begin{proof}
For termination, we show that every node decides by the end of Part~2, unless it crashes.
If $t^2\le n$ then all nodes decide by Part~2, by Lemma~\ref{lem:constant-time-part-five}.
If $t^2> n$ then $2+\lg t$ phases get executed.
By Lemma~\ref{lem:part-two-SCV}, fewer than $\frac{t}{2^{i-1}}$ non-faulty nodes do not decide by the end of phase~$i$, for $i$ such that $1\le i\le 2+\lceil \lg t \rceil$.
The number $\frac{t}{2^{2+\lceil \lg t \rceil -1}}$ is less than $1$, so the set of nodes that have not decided by phase number $2+\lceil \lg t \rceil$ of Part~$2$ is empty.

Part~1 takes $\lceil \log_{\frac{3}{2}} \frac{2n/5}{\max\{t,n/t\}} \rceil $ rounds, which is at most $2\lg t$, and
Part~2 takes at most $2+\lg t$
rounds.
All together we have at most $5t+ 4(1+\lg t)\le 5(t+\lg t)=\cO(t)$ rounds. 
Each node is active in at most $2$ rounds in Part 1, when receiving/sending for the first time only, the whole Part 2 of length $O(\log t)$ and Part 3 of one round.
This gives $\cO(t)$ active rounds in total.
 In Part~1, the graph $H$ has degree $64$, so each node sends at most $64=O(1)$ messages.
 There are at most $\max\{2t,t+\frac{n}{t}\}$ nodes that execute Part~2.
If $t^2\le n$ then each of these at most $t+\frac{n}{t}$ nodes sends messages to every among $5t$ little nodes, for a total of $5t^2 +n\le 6n=O(n)$ messages.
If $t^2> n$ then nodes use overlay graphs $G_i\in\cG$, in which nodes have degree 
$2^{i}\cdot 10$, for sufficiently large $n$.
By Lemma~\ref{lem:part-two-SCV}  (there is a new version), in this case nodes send at most these many messages in Part~2:
\[ 
\sum_{i=0}^{\ceil{\lg{t}}} \frac{t}{2^{i - 1}} \cdot 
2^{i}\cdot 10
\le 
20t\lg{t}
= O(t \log t)
\ .
\]
All these messages are of constant size.

\end{proof}

\remove{
\begin{proof}
The nodes communicate the common value in messages, so this is the only possible value to decide on.
For termination, we show that every node decides by the end of Part~2, unless it crashes.
By Lemma~\ref{lem:part-two-SCV}, fewer than $\frac{t+1}{2^{i}}$ non-faulty nodes do not decide by the end of phase~$i$, for $i$ such that $1\le i\le \lceil \lg (t+1) \rceil$.
The number $\frac{t+1}{2^{\lceil \lg (t+1) \rceil }}$ is at most~$1$, so the set of nodes that have not decided by the last phase $\lceil \lg (t+1) \rceil$ of Part~$2$ is empty.
Part~1 takes $\lceil \log_{\frac{3}{2}} \frac{2n}{5}\rceil +1$ rounds, 
and Part~2 takes $\lceil \lg (t+1) \rceil$ rounds, which together make $\cO(\log n)$ rounds.
In Part~1, the graph $H$ has a constant degree, so each node sends at most $\cO(1)$ messages.
By Lemma~\ref{lem:part-one-SCV}, at most $t+1$ nodes execute Part~2.
The overlay graphs~$G_i$ used at this stage have vertex degrees at most $2^{i+1}\cdot 10$.
By Lemma~\ref{lem:part-two-SCV}, nodes send at most these many messages in Part~2:
\[ 
\sum_{i=1}^{\lceil\lg(t+1)\rceil} \frac{t+1}{2^i} \cdot 
2^{i+1}\cdot 10
\le 
20(t+1)\lceil\lg(t+1)\rceil= \cO(t \log t)
\ .
\]
All these messages are of constant size.
\end{proof}
}

\subsection{Consensus for Few Crashes}

\label{sec:consensus-few-crashes}

We present an algorithm for Consensus that works for $t<\frac{n}{5}$.
The algorithm is called \textsc{Few-Crashes-Consensus}, its pseudocode is in Figure~\ref{fig:few-crashes-consensus}.
It is structured as executing two algorithms: first \textsc{Almost-Everywhere-Agreement}, followed by  \textsc{Spread-Common-Value}.
The first algorithm solves the $\frac{3}{5}$-AEA problem, and the decision obtained in at least $\frac{3}{5}n$ nodes is treated as a common value.
The second algorithm solves the $\frac{3}{5}$-SCV problem, and the common value is used as the ultimate decision.


\begin{figure}[t!]

\hrule

\FF

\texttt{Algorithm} \textsc{Few-Crashes-Consensus}

\FF

\hrule

\FF

\begin{description}[nosep,leftmargin=1em]
\item[]
\texttt{Execute} \textsc{Almost-Everywhere-Agreement}
\item[]
\texttt{if} $p$ decided \texttt{then} adopt the decision as a common value
\item[]
\texttt{Execute} \textsc{Spread-Common-Value} 
\item[]
decide on the common value
\end{description}

\FF

\hrule

\FF

\caption{\label{fig:few-crashes-consensus}
A pseudocode for a node~$p$.   
}
\end{figure}

\begin{theorem}
\label{thm:fom-consensus}

If $t<\frac{n}{5}$ then algorithm \textsc{Few-Crashes-Consensus} solves Consensus in $\cO(t+\log n)$ rounds and  nodes send $\cO(n+t\log t)$ one-bit messages.

\end{theorem}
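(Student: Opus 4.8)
The plan is to combine the guarantees of the two component algorithms, which are already established as Theorems~\ref{thm:AEA} and~\ref{thm:SCV}. First I would invoke Theorem~\ref{thm:AEA}: since $t<\frac{n}{5}$, algorithm \textsc{Almost-Everywhere-Agreement} makes at least $\frac{3}{5}n$ nodes either decide on a common value $v\in\{0,1\}$ or crash, and all nodes that decide agree on $v$ and $v$ is the input of some node (validity). The nodes that decide then set their local ``common value'' variable to $v$; every other node leaves it \texttt{null}. This is exactly the initial configuration required by \textsc{Spread-Common-Value} (at least $\frac{3}{5}n$ nodes initialized with a common value), so I can apply Theorem~\ref{thm:SCV}: every non-faulty node eventually learns $v$, and then decides on it. Validity of the overall algorithm is inherited because $v$ is some node's input; agreement holds because all decisions equal $v$; termination holds because \textsc{Spread-Common-Value} terminates on every non-faulty node.

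For the complexity bounds I would add the two resource costs. By Theorem~\ref{thm:AEA}, \textsc{Almost-Everywhere-Agreement} runs in $\cO(t)$ rounds sending $\cO(n)$ one-bit messages; by Theorem~\ref{thm:SCV}, \textsc{Spread-Common-Value} runs in $\cO(\log t)$ rounds --- or more precisely $\cO(t)$ rounds in the formulation in the proof of Theorem~\ref{thm:SCV}, but for the time bound here I should use the $\cO(\log n)$ figure appearing in the SCV analysis, noting $\log t\le\log n$ --- and sends $\cO(t\log t)$ one-bit messages. Summing gives running time $\cO(t)+\cO(\log n)=\cO(t+\log n)$ and message cost $\cO(n)+\cO(t\log t)=\cO(n+t\log t)$, with every message carrying a single bit since both component algorithms use one-bit messages. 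I would also remark that the decision at the final line of the pseudocode is well-defined because, by termination of \textsc{Spread-Common-Value}, every non-faulty node holds a non-\texttt{null} common value when it reaches that line.

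The main obstacle I anticipate is a bookkeeping one rather than a conceptual one: making sure the ``handoff'' between the two phases is clean. Specifically, one must check that the set of nodes holding the common value $v$ immediately after \textsc{Almost-Everywhere-Agreement} is genuinely of size at least $\frac{3}{5}n$ among the \emph{non-faulty} nodes, so that the hypothesis of \textsc{Spread-Common-Value} (and of Lemmas~\ref{lem:part-one-SCV}--\ref{lem:part-two-SCV}) is met with the same crash budget $t$; and that the two sub-algorithms share the same overlay infrastructure (little nodes, graphs $G$, $H$, $G_i$) consistently. I would also want to double-check that the round counts stated in the two theorems are the ones actually realized when the algorithms are run back-to-back (no reinitialization overhead), and that the slightly discrepant time bounds $\cO(\log t)$ vs.\ $\cO(\log n)$ in the SCV analysis are reconciled in favor of the weaker $\cO(\log n)$, which is still absorbed by $\cO(t+\log n)$. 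Once these alignments are verified, the theorem follows immediately by additivity of time and message complexity over sequential composition.
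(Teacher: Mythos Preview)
Your proposal is correct and follows essentially the same approach as the paper: invoke Theorem~\ref{thm:AEA} to get $\tfrac{3}{5}n$ nodes holding a common value, then invoke Theorem~\ref{thm:SCV} to propagate it, and add the time and message bounds. The paper's own proof is in fact even terser---three sentences citing the two theorems---so your more careful treatment of the handoff (checking that the output of \textsc{Almost-Everywhere-Agreement} meets the input precondition of \textsc{Spread-Common-Value}, and reconciling the $\cO(\log t)$ versus $\cO(\log n)$ time bounds) is, if anything, more thorough than what the paper records.
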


\begin{proof}
Algorithm \textsc{Almost-Everywhere-Agreement} makes at least $\frac{3}{5}n$ nodes to decide.
Algorithm \textsc{Spread-Common-Value} takes the decision as a common value located in at least $\frac{3}{5}n$ nodes and spreads it across all the nodes.
The agreement, validity and termination properties, as well as the performance bounds, follow directly from Theorems~\ref{thm:AEA} and~\ref{thm:SCV}.
\end{proof}

\subsection{Consensus for Many Crashes}
\label{sec:consensus-many-crashes}

We give a Consensus algorithm that works for a bound $t$ on the number of crashes as large as $n-1$.
The algorithm is called \textsc{Many-Crashes-Consensus}($\alpha$), its pseudocode is in Figure~\ref{fig:fmo-consensus}.
We use the notation~$\alpha=\frac{t}{n}$.
The algorithm has a structure similar to algorithm \textsc{Few-Crashes-Consensus} in Section~\ref{sec:consensus-few-crashes}, in the it consists of broadcasting input value, local probing, and inquiring about decision.
These operations make three parts of an execution.
The first two parts provide almost everywhere agreement and the third part spreads the decision as a common value.
Each part, as well as each phase in the third part, employs its assigned overlay graph.
This is a Ramanujan graph, denoted~$G$, in each of the first two parts.
In the third part, each phase~$i$ uses an overlay graph~$G_{i}$ from a family of $\cO(\lg n)$ graphs. 
Edges in the overlay graphs determine pairs of nodes that communicate directly.
All these graphs are known by every node, as they are determined by the numbers~$n$ and~$t$.
We specify the overlay graphs next.


\begin{figure}[t!]

\hrule

\FF

\texttt{Algorithm} \textsc{Many-Crashes-Consensus}

\FF

\hrule

\FF

\begin{description}[nosep,leftmargin=2em]
\item[\sf Part~1: Broadcasting:] \

\texttt{for} $n-1$ rounds \texttt{do}
\begin{enumerate}[nosep,leftmargin=2em]
\item[]
\texttt{if} $p$'s initial value is $1$ and it is the first round \texttt{or} 
\begin{enumerate}[nosep,leftmargin=2em]
\item[]
$p$ received $1$ in the previous  round for the first time \texttt{then}
\begin{enumerate}[nosep,leftmargin=2em]
\item[]
set the rumor to $1$ and send it to each of $p$'s neighbors in graph $G$
\end{enumerate}
\end{enumerate}
\end{enumerate}

\item[\sf Part~2: Local probing:] \ 

participate in local probing on graph $G$ during $2+\lg n$ rounds:
\begin{enumerate}[nosep,leftmargin=2em]
\item[]
if either the initial value is $1$ or rumor $1$ received in Part~1 then use rumor $1$ else rumor $0$ 
\end{enumerate}
\texttt{if} survived local probing \texttt{then} decide on the rumor value

\item[\sf Part~3: Inquiring about decision:] \ 

\texttt{for} $i=1$ \texttt{to} $1+\lceil \lg \frac{(1+3\alpha) n}{4} \rceil$ \texttt{do} perform Phase~$i$: 
\begin{enumerate}[nosep,leftmargin=2em]
\item[]
at round $2i-1$ of Part~3: 
\begin{enumerate}[nosep,leftmargin=2em]
\item[]
\texttt{if} not decided yet \texttt{then} send an inquiry to each $p$'s neighbor in graph $G_{i}$
\end{enumerate}
\item[]
at round $2i$ of Part~3: \

\begin{enumerate}[nosep,leftmargin=2em]
\item[]
\texttt{if} decided \texttt{and} an inquiry from a neighbor in graph $G_{i}$ received  at round $2i-1$
\begin{enumerate}[nosep,leftmargin=2em]
\item[]
\texttt{then} respond to each inquiring neighbor with $p$'s decision value
\end{enumerate}
\item[]
\texttt{if} a response for own inquiry arrived \texttt{then} adopt the received value as decision 
\end{enumerate}
\end{enumerate}
\end{description}

\FF

\hrule

\FF

\caption{\label{fig:fmo-consensus}
A pseudocode of the algorithm  for a node~$p$.
Parameter $\alpha$ denotes $\frac{t}{n}$.
}
\end{figure}

The overlay graph $G$ used in the first two parts has a vertex for each node of the network.
This graph~$G$ is a $G(n, d)$ Ramanujan graph, where $d = d(\alpha)=\bigl(\frac{4}{1-\alpha}\bigr)^{8}$. 
The graph's degree $d(\alpha)$ determines the following two quantities:  
\[
\ell=\ell(n,d) =  4nd^{-1/8}=(1-\alpha)n \text{ and } \delta =\delta(d)= \frac{1}{2}(d^{7/8} - 2d^{5/8})
\ .
\]
Each graph $G_i$ used in Phase~$i$ of Part~3 is $G(2n,d_i)$ Ramanujan, for  $d_i = \frac{64}{3 (1-\alpha)(1+3\alpha)}\cdot 2^{i}$.
This means that only graphs that are Ramanujan are used to determine overlay networks for algorithm \textsc{Many-Crashes-Consensus}.

Each node maintains a candidate decision value throughout an execution.
At the start, each node sets its candidate decision value to its input value.
If a candidate decision value or a decision value are sent in a message then we refer to the value sent as \emph{rumor}.
At the first round of Part~1, each node with the initial value~$1$ transmits a message with this rumor to all neighbors in graph~$G$.
In the following rounds of Part~1, if a node~$p$ receives a message with a rumor~$1$ while the current candidate decision value is~$0$, then $p$ sets its candidate value to~$1$ and notifies all the neighbors in~$G$ of this change by sending rumors.
This flooding the network with rumor value~$1$ continues for~$n-1$ rounds.
Part~2 consists of local probing in graph~$G$ for $2+\lg n$ rounds.
Candidate decision values are updated similarly as in Part~1, in that if a node~$p$ receives a message with a rumor~$1$ while the current candidate value is~$0$, then $p$ sets its candidate value to~$1$.
Nodes that survive local probing decide on their candidate values.
The last Part~3 is spent by the nodes that have not decided yet inquiring their neighbors about their decisions.
This is partitioned into $1+\lceil \lg \frac{(1+3\alpha) n}{4} \rceil$ phases, with graph~$G_i$ is used in phase~$i$.
If a node that has already decided receives an inquiry about decision then it responds by sending a message with the decision value.
If an inquiry by a node that has not decided yet results in receiving a response that brings a neighbor's decision, then the inquiring node adopts the received decision value as its own decision.

\begin{lemma}
\label{lem:survival-part-two}

At least $\frac{3}{4}(1-\alpha)n$ non-faulty nodes decide on the same value while executing Part~2.
\end{lemma}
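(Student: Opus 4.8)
The plan is to reproduce, now for algorithm \textsc{Many-Crashes-Consensus} and its overlay graph $G=G(n,d)$ with degree $d=\bigl(\frac{4}{1-\alpha}\bigr)^{8}$, the two ingredients behind the analogous claims for \textsc{Almost-Everywhere-Agreement}: a counting argument in the spirit of Lemma~\ref{lem:few-end-part2-number} bounding how many non-faulty nodes survive local probing in Part~2, and an agreement argument in the spirit of Lemma~\ref{lem:few-end-part2-validity} showing that those nodes all hold the same rumor. Throughout I use that for $d=\bigl(\frac{4}{1-\alpha}\bigr)^{8}$ one has $\ell(n,d)=4nd^{-1/8}=(1-\alpha)n$ and $\delta=\delta(d)$, and that local probing in Part~2 lasts $2+\lg n$ rounds.

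First I would establish the count. Let $B$ be the set of nodes that are non-faulty in the execution; at most $t=\alpha n$ nodes crash, so $|B|\ge(1-\alpha)n=\ell(n,d)$. By Theorem~\ref{thm:ramanujan-compact}, $G$ is $(\ell(n,d),\frac{3}{4},\delta)$-compact, hence $B$ contains a $\delta$-survival subset $C$ with $|C|\ge\frac{3}{4}\ell(n,d)=\frac{3}{4}(1-\alpha)n$. Being a $\delta$-survival subset is a property of $G|_C$ only, so $C$ is also a $\delta$-survival subset for the superset $B_2\supseteq B$ of nodes operational at the end of Part~2; Proposition~\ref{pro:local-probing} then yields that every node of $C$ survives local probing, and by the pseudocode each such node decides on its candidate value. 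This gives at least $\frac{3}{4}(1-\alpha)n$ deciding non-faulty nodes.

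Next I would prove agreement, i.e.\ that any two deciding nodes enter local probing with the same rumor and therefore decide alike. Let $p$ and $q$ both survive local probing in Part~2. By Proposition~\ref{pro:local-probing}, each of them possesses a $(2+\lg n,\delta)$-dense-neighborhood inside the subgraph induced by the nodes operational throughout Part~1; call these $S_p$ and $S_q$. Each is connected to its center, since the dense-neighborhood delivered by the proof of Proposition~\ref{pro:local-probing} is grown outward from the center, and by Theorem~\ref{thm:sparse-dense} each has at least $\ell(n,d)=(1-\alpha)n$ vertices (the hypothesis is met because $\gamma=2+\lg n$ and $d$ is large). Since $G$ is $\ell(n,d)$-expanding by Theorem~\ref{thm:ramanujan-expanding}, there is an edge of $G$ joining $S_p$ to $S_q$, so $S_p\cup S_q\cup\{p,q\}$ is a connected set of nodes, all operational throughout Part~1. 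As Part~1 runs for $n-1$ rounds — at least the number of edges on a longest path in such a set — the flooding of the value $1$ either reaches every node of this set by the end of Part~1 or reaches none of them, so $p$ and $q$ use the same Part~2 rumor and decide identically. Validity is immediate, since $1$ is propagated only starting from some node's input value.

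I expect the agreement step to be the main obstacle, and within it the delicate point is the timing of flooding against crashes: one must confirm that $n-1$ rounds suffice for the value $1$ to cross a connected set of Part-1 survivors — including the boundary case of a node that receives $1$ at the last round, which cannot forward it but still records it for use in Part~2 — and that the dense-neighborhood produced by Proposition~\ref{pro:local-probing} is genuinely connected to its center, so that it cannot simultaneously contain nodes with different Part~2 rumors. The arithmetic, namely $4nd^{-1/8}=(1-\alpha)n$ for $d=\bigl(\frac{4}{1-\alpha}\bigr)^{8}$ and that $\gamma=2+\lg n$ meets the hypothesis of Theorem~\ref{thm:sparse-dense}, is routine.
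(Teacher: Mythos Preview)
Your proposal is correct and follows essentially the same approach as the paper's own proof: compactness of $G(n,d)$ (Theorem~\ref{thm:ramanujan-compact}) together with Proposition~\ref{pro:local-probing} gives the $\frac{3}{4}(1-\alpha)n$ survivors, while Proposition~\ref{pro:local-probing} (contrapositive), Theorem~\ref{thm:sparse-dense}, and the $\ell(n,d)$-expansion of Theorem~\ref{thm:ramanujan-expanding} place any two survivors in the same connected component of $G$ restricted to Part-1 survivors, so that the $n-1$ rounds of flooding equalize their rumors. Your treatment is in fact slightly more explicit than the paper's about why the survival set found inside the non-faulty nodes is also a survival set for the larger set $B_2$, and about the connectedness of the dense neighborhood to its center; the paper leaves both implicit.
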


\begin{proof}
Let a set $A$ consist of the nodes that are operational at the end of Part~1.
Set $A$ induces a subgraph of $G_A$.
By the end of Part~1, all the nodes in a connected component of $G_A$ set its rumor to the same value.
This is because the length of a longest path is $n-1$ so $n-1$ rounds suffice to propagate $1$ through a connected component.
At most~$\alpha n$ nodes may crash, so there are at least $n - \alpha n=(1-\alpha) n$ non-faulty nodes in~$A$.
Graph $G$ is a $G(n, d)$ Ramanujan graph for $d = \bigl(\frac{4}{1-\alpha}\bigr)^{8}$. 
We defined $\ell =  4nd^{-1/8}$, which equals  $(1-\alpha) n$ by the choice of degree~$d$.
Graph $G$ is $(\ell, \frac{3}{4}, \delta)$-compact, for $\delta = \frac{1}{2}(d^{7/8} - 2d^{5/8})$, by Theorem~\ref{thm:ramanujan-compact}.
It follows, by the definition of compactness, that there exists a set~$S$ that is a $\delta$-survival set with respect to the set of vertices~$A$. 
A $(\gamma(n),\delta(d))$-dense-neighborhood of a vertex includes at least $\ell(n,d)$ vertices, if $\gamma(n) \ge 2+\lg n$ and $d$ is sufficiently large, by Theorem~\ref{thm:sparse-dense}.
Any two such $(\gamma(n),\delta(d))$-dense neighborhoods are connected by an edge, by Theorem~\ref{thm:ramanujan-expanding}.
It follows that nodes that belong to such $(\gamma(n),\delta(d))$-dense neighborhoods acquire equal rumor values by the end of Part~1.
By Proposition~\ref{pro:local-probing}, every node in a $\delta$-survival set~$S$ survives local probing in Part~2. 
Nodes that survive local probing decide, by the pseudocode in Figure~\ref{fig:fmo-consensus}.
There are at least $\frac{3}{4}\ell$  elements in a $\delta$-survival set~$S$, because graph~$G$ is $(\ell, \frac{3}{4}, \delta)$-compact.
The number of non-faulty nodes that survive local probing and decide is at least $\frac{3}{4}\ell=\frac{3}{4}(1-\alpha)n$.
\end{proof}

Let $M=\frac{1}{4}(1+3\alpha) n$, so that Part~$3$ consists of $1+\lceil \lg M \rceil$ phases.

\begin{lemma}
\label{lem:end-part-three}

At most $\frac{M}{2^i}$ non-faulty nodes have not decided yet in Phase~$i$ by the end of round~$2i$ of Part~$3$, for each $i$ such that $0\le i\le 1+\lceil \lg M \rceil$.
\end{lemma}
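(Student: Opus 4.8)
The plan is to prove this by induction on the phase number $i$, mirroring the structure of the analysis of \textsc{Spread-Common-Value} (Lemmas~\ref{lem:part-one-SCV} and~\ref{lem:part-two-SCV}). For the base case $i=0$, I would interpret Phase~$0$ as the moment immediately after Part~2 (before Part~3 begins): by Lemma~\ref{lem:survival-part-two}, at least $\frac{3}{4}(1-\alpha)n$ non-faulty nodes have decided, so the number of non-faulty nodes that have \emph{not} decided is at most $n - \frac{3}{4}(1-\alpha)n - \alpha n = \frac{1}{4}(1-\alpha)n \le \frac{1}{4}(1+3\alpha)n = M = M/2^0$, using $\alpha\ge 0$. (One should double-check which of $(1-\alpha)$ versus $(1+3\alpha)$ is meant; the slack is exactly designed so that this works.)

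For the inductive step, I would argue by contradiction in the style of Lemma~\ref{lem:part-two-SCV}. Suppose more than $\frac{M}{2^i}$ non-faulty nodes remain undecided at the end of round $2i$; then in particular there is a set $C_i$ of undecided non-faulty nodes with $|C_i|$ just above $\frac{M}{2^i}$, all of which sent inquiries through graph $G_i$ at round $2i-1$. By the inductive hypothesis, before Phase~$i$ at most $\frac{M}{2^{i-1}} = \frac{2M}{2^i}$ non-faulty nodes were undecided, and together with the $\le \alpha n$ faulty nodes this bounds the total number of nodes that are either faulty or not-yet-decided. The key point is to apply a Ramanujan expansion property of $G_i = G(2n, d_i)$ with $d_i = \frac{64}{3(1-\alpha)(1+3\alpha)}\,2^i$ to show that $C_i$ has so many neighbors in $G_i$ that they cannot all be faulty or undecided — hence some neighbor of $C_i$ had already decided before Phase~$i$, responded to the inquiry during round $2i$, and thereby caused some node in $C_i$ to decide, contradicting that every node in $C_i$ is still undecided at the end of round $2i$. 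I expect Theorem~\ref{thm:ramanujan-unbalanced-expansion} to be the right tool: with $\epsilon$ chosen so that the "small" side has size $\epsilon\cdot 2n$ and the "large" side (undecided-or-faulty nodes) has size below $\frac{4\cdot 2n}{d_i\,\epsilon}$, the theorem forces an edge across, i.e. $C_i$ cannot be confined to the undecided-or-faulty set.

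The main obstacle, and the step deserving the most care, is the arithmetic bookkeeping that makes the degree $d_i$ and the phase count $1+\lceil\lg M\rceil$ fit together: one must verify that $\frac{M}{2^i}$ is small enough (relative to $n$ and $d_i$) for the unbalanced-expansion hypothesis to apply at \emph{every} phase $i$ in the allowed range, while also checking that the set of undecided-or-faulty nodes stays on the "large" side of the inequality $|B| > \frac{4n}{d\epsilon}$ — this is where the specific constants $\frac{64}{3}$, the factor $2n$ (rather than $n$) in the vertex count of $G_i$, and the precise threshold $\frac{1}{4}(1+3\alpha)n$ for $M$ are all calibrated. I would isolate this as a short computation: set $\epsilon = \epsilon(i)$ so that $|C_i| = \epsilon\cdot 2n$, note $|C_i| \le \frac{M}{2^i} = \frac{(1+3\alpha)n}{2^{i+2}}$, observe the undecided-or-faulty set has size at most $\frac{2M}{2^i} + \alpha n \le$ (a constant-factor bound in terms of $n$), and confirm this exceeds $\frac{8n}{d_i\,\epsilon}$ by plugging in $d_i$ and simplifying; the factor $2^i$ in $d_i$ cancels the factor $2^{-i}$ coming from $|C_i|$, so the inequality becomes independent of $i$ and reduces to a clean relation among $\alpha$, the constant $64/3$, and $\epsilon$. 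Finally I would conclude the lemma holds for all $0\le i\le 1+\lceil\lg M\rceil$, which is exactly the range needed for the subsequent theorem to deduce that after the last phase no non-faulty node is left undecided (since $\frac{M}{2^{1+\lceil\lg M\rceil}} < 1$).
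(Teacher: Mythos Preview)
Your plan is more complicated than the paper's and the proposed application of Theorem~\ref{thm:ramanujan-unbalanced-expansion} is garbled. The paper does \emph{not} proceed by induction. Once Lemma~\ref{lem:survival-part-two} supplies a fixed set $B'$ of at least $\frac{3}{4}(1-\alpha)n$ non-faulty decided nodes, the paper applies Theorem~\ref{thm:ramanujan-unbalanced-expansion} directly at every phase~$i\ge 1$ with $A=B'$ (so $\epsilon=\frac{3}{4}(1-\alpha)$ is a \emph{fixed} constant, independent of~$i$) and $B=C'$ the non-faulty undecided set. Plugging in $d_i=\frac{64}{3(1-\alpha)(1+3\alpha)}\cdot 2^i$, the hypothesis $|C'|>\frac{4n}{d_i\epsilon}$ simplifies exactly to $|C'|>M/2^i$; the resulting edge between $B'$ and $C'$ in $G_i$ means some member of $C'$ received a response at round~$2i$, contradicting its membership in~$C'$. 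No inductive hypothesis about Phase~$i-1$ is ever invoked.

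By contrast, you put $C_i$ on the $A$-side with $\epsilon=\epsilon(i)$ varying, and then propose checking that the ``undecided-or-faulty'' set exceeds $\frac{8n}{d_i\epsilon}$. That is not what the theorem asks: the two sets in Theorem~\ref{thm:ramanujan-unbalanced-expansion} must be disjoint, and $C_i$ is \emph{contained in} the undecided-or-faulty set, not disjoint from it. What you would actually need is that the \emph{decided non-faulty} set exceeds the threshold --- and for that the bound $\frac{3}{4}(1-\alpha)n$ from Lemma~\ref{lem:survival-part-two} already suffices at every phase, making your inductive bookkeeping of the undecided-or-faulty set superfluous. (Your base-case arithmetic also over-subtracts: do not deduct $\alpha n$, since the adversary may crash fewer than~$t$ nodes; this is precisely why the paper takes $M=n-\frac{3}{4}(1-\alpha)n=\frac{1}{4}(1+3\alpha)n$ rather than your $\frac{1}{4}(1-\alpha)n$.)
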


\begin{proof}
Let a conceptual Phase~$0$ end at the end of Part~2.
Let a set $B$ consist of the nodes that have decided by the end of Part~$2$.
Let $B'\subseteq B$ be a set of non-faulty nodes that have already decided by  the beginning of Part~3. 
We have that $|B'|\ge \frac{3}{4}(1-\alpha) n$, by Lemma~\ref{lem:survival-part-two}.
Let a set $C$ consist of these nodes that have not decided yet by the end of Part~2. 
The complement of set~$B$ has at most~$n-\frac{3}{4}(1-\alpha)n=M$ elements. 
In particular, we have that $|C| \le M$, thus the claim we want to demonstrate holds for~$i = 0$.

Suppose, to the contrary, that this claim does not hold for some $i$ such that $1\le i\le 1+\lceil \lg M \rceil$. 
This means that there exists a set~$C'\subseteq C$ of non-faulty nodes that have not decided yet by round~$2i$, and that the set has more than $\frac{M}{2^{i}}$ elements, for some~$i\ge 1$. 
The overlay graph~$G_{i}$, for $i\ge 1$, used for communication in Phase~$i$ by nodes in $C'$ for sending inquiries and receiving responses is a Ramanujan graph of degree $d_i = \frac{64}{3 (1-\alpha)(1+3\alpha)}\cdot 2^{i}$.
By Theorem~\ref{thm:ramanujan-unbalanced-expansion} applied to the sets $B'$ and $C'$, and constant $\epsilon$ set to $\frac{3}{4}(1-\alpha)$, there is an edge between the set~$B'$ and the set~$C'$. 
This means that the decision was relayed by some element in~$B'$ to some element in~$C'$ in Phase~$i$ via that edge. 
It follows that the nodes in the set~$C'$ decide by the end of Phase~$i$.
This contradicts the definition of~$C'$ and validates of the claim.
\end{proof}

Next, we summarize the performance bounds of algorithm \textsc{Many-Crashes-Consensus}.

\begin{theorem}
\label{thm:Many-Crashes-Consensus}

Algorithm \textsc{Many-Crashes-Consensus} solves Consensus  in  at most $n+3(1+\lg n)$ rounds and nodes send at most $\bigr(\frac{5}{1-\alpha}\bigr)^{8} \,n\lg n$ one-bit  messages, for  sufficiently large~$n$.
\end{theorem}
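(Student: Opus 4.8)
The plan is to verify separately the three kinds of properties required of a consensus solution: correctness (validity, agreement, termination) and then the two quantitative performance bounds on rounds and messages. For correctness, the structure of \textsc{Many-Crashes-Consensus} mirrors that of \textsc{Few-Crashes-Consensus}: Parts~1 and~2 achieve an almost-everywhere agreement, and Part~3 spreads the common value. I would first observe, exactly as in the proof of Theorem~\ref{thm:AEA}, that only an input value of some node can ever be adopted, so validity holds. For agreement, Lemma~\ref{lem:survival-part-two} gives a set of at least $\frac{3}{4}(1-\alpha)n$ non-faulty nodes that decide in Part~2 on the \emph{same} value; I would note that the argument inside that lemma (dense neighborhoods of radius $2+\lg n$ exist, are of size at least $\ell(n,d)$ by Theorem~\ref{thm:sparse-dense}, and are pairwise connected by an edge by Theorem~\ref{thm:ramanujan-expanding}) forces all survivors to share the rumor settled in Part~1, and that Part~3 only propagates that already-agreed value; hence no two nodes ever decide differently. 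For termination, I would invoke Lemma~\ref{lem:end-part-three}: after Phase~$i$ at most $M/2^i$ non-faulty undecided nodes remain, and since $M/2^{1+\lceil\lg M\rceil}<1$, after the last phase every non-faulty node has decided.

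For the running-time bound, I would just add up the lengths dictated by the pseudocode: Part~1 runs for $n-1$ rounds, Part~2 runs for $2+\lg n$ rounds, and Part~3 consists of $1+\lceil\lg M\rceil$ phases of two rounds each, where $M=\frac14(1+3\alpha)n\le n$, so Part~3 takes at most $2(1+\lceil\lg M\rceil)\le 2(2+\lg n)$ rounds. Summing, the total is at most $(n-1)+(2+\lg n)+2(2+\lg n)=n+3\lg n+5\le n+3(1+\lg n)$ for sufficiently large $n$; I would present this as a short direct computation, being slightly careful with the ceilings and the $+1$ offsets so that the stated bound $n+3(1+\lg n)$ is respected.

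For the message bound I would count per part. In Part~1 graph~$G$ has degree $d=d(\alpha)=(4/(1-\alpha))^8$, and a node sends a rumor to all its neighbors at most once (the first time it holds rumor~$1$), so Part~1 contributes at most $dn$ messages. In Part~2, local probing on $G$ lasts $2+\lg n$ rounds and each node sends to each of its $d$ neighbors per round, giving at most $dn(2+\lg n)$ messages. In Part~3, Phase~$i$ uses the Ramanujan graph $G_i$ of degree $d_i=\frac{64}{3(1-\alpha)(1+3\alpha)}2^i$; by Lemma~\ref{lem:end-part-three} at most $M/2^{i-1}$ non-faulty nodes are still inquiring at the start of Phase~$i$, each sending and triggering at most $d_i$ messages, so Phase~$i$ costs $\cO\bigl(\frac{M}{2^{i-1}}\cdot d_i\bigr)=\cO\bigl(\frac{M}{2^{i-1}}\cdot\frac{2^i}{(1-\alpha)(1+3\alpha)}\bigr)=\cO\bigl(\frac{M}{(1-\alpha)(1+3\alpha)}\bigr)$, a quantity independent of~$i$; summed over the $1+\lceil\lg M\rceil=\cO(\lg n)$ phases this is $\cO\bigl(\frac{M\lg n}{(1-\alpha)(1+3\alpha)}\bigr)=\cO\bigl(\frac{n\lg n}{1-\alpha}\bigr)$, using $M=\frac14(1+3\alpha)n$. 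Adding the three parts, the dominant term is $\cO\bigl(\frac{n\lg n}{1-\alpha}\bigr)$, and I would then bound the hidden constants explicitly so as to land under $\bigl(\frac{5}{1-\alpha}\bigr)^8 n\lg n$.

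The main obstacle is the last step: turning the asymptotic count into the precise inequality $\bigl(\frac{5}{1-\alpha}\bigr)^8 n\lg n$. This requires carefully tracking constants through $d(\alpha)=(4/(1-\alpha))^8$, through the geometric sum in Part~3 (and checking that the $i$-dependence really cancels so the sum is only a $\lg n$ factor worse than a single phase), and through the ``sufficiently large~$n$'' slack that lets lower-order terms from Parts~1 and~2 be absorbed into the $(5/(1-\alpha))^8$ versus $(4/(1-\alpha))^8$ gap. I would organize this by factoring out $n\lg n$ early, showing the remaining coefficient is at most $(4/(1-\alpha))^8 + o(1) + (\text{a constant}/(1-\alpha))$, and then bounding that by $(5/(1-\alpha))^8$ for large~$n$; the one subtlety to flag is ensuring the Part~3 overlay graphs $G_i$ indeed exist as $G(2n,d_i)$ Ramanujan graphs with the claimed degrees, which is asserted in the setup but relies on Ramanujan graphs being available for the relevant degree/size pairs.
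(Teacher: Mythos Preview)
Your proposal is correct and follows essentially the same approach as the paper: validity by tracing rumors to inputs, agreement via Lemma~\ref{lem:survival-part-two}, termination via Lemma~\ref{lem:end-part-three}, then summing rounds and messages per part, with the key Part~3 cancellation $\frac{M}{2^{i}}\cdot d_i=\Theta\bigl(\frac{M}{(1-\alpha)(1+3\alpha)}\bigr)$ independent of~$i$. Two small constant-level slips to fix when you write it out: your displayed round total $(n-1)+(2+\lg n)+2(2+\lg n)=n+5+3\lg n$ actually \emph{exceeds} $n+3(1+\lg n)=n+3+3\lg n$, so ``sufficiently large~$n$'' does not rescue it---use the sharper bound $\lceil\lg M\rceil\le\lg n$ (from $M<n$) to get Part~3 $\le 2+2\lg n$, as the paper does; and the dominant message coefficient is Part~2's $d\,n\lg n=(4/(1-\alpha))^{8}n\lg n$, not the Part~3 term $\cO\bigl(\frac{n\lg n}{1-\alpha}\bigr)$---your final coefficient breakdown $(4/(1-\alpha))^{8}+o(1)+\text{const}/(1-\alpha)\le(5/(1-\alpha))^{8}$ is right, but the sentence identifying the dominant term is not.
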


\begin{proof}
Nodes propagate to their neighbors either their own initial values or what they receive from neighbors.
If a node decides then  it is either on its own initial value or some received one, which, by induction, is some other node's initial value  forwarded during the execution. 
This gives validity.
By Lemma~\ref{lem:end-part-three}, at most $\frac{M}{2^i}$ non-faulty nodes have not decided yet in Phase~$i$ by the end of round~$2i$ of Part~$3$, for $i$ such that $0\le i\le 1+\lceil \lg M \rceil$.
The number $\frac{M}{2^{1+\lceil \lg M \rceil}}$ is less than $1$, which means that the set of nodes that have not decided at round $1+\lceil \lg M \rceil$ of Part~$3$ is empty.
This gives termination.
The decisions made in Part~3 are made after learning of decisions of other nodes during Part~2, so they all are equal, by Lemma~\ref{lem:survival-part-two}.
This gives agreement and completes showing correctness of the algorithm.

Next, we estimate the performance measures: the number of rounds by halting and the number of bits sent in messages.
The pseudocode in Figure~\ref{fig:fmo-consensus} determines the number of rounds.
Part~$1$ takes $n-1$ rounds, Part~$2$ takes $2+\lg n$ rounds, and Part~$3$ takes at most these many rounds: $2(1+\lceil \lg \frac{(1+3\alpha) n}{4} \rceil)\le 2 -4 +2\lg(1+3\alpha) + 2\lg n\le 2+2\lg n$.
All together this makes $n-1 +4+3\lg n= n+3(1+\lg n)$ rounds.

Now we estimate the number of messages. 
In the first two parts, the nodes use graph $G$ in which each vertex has degree $d = \bigl(\frac{4}{1-\alpha}\bigr)^{8}$.
In Part~1, a node sends at most $d$ messages, because each node sends a message to each neighbor at most once.
In Part~2, a node sends at most $d(2+\lg n) $ messages.
In Part~3, nodes use overlay graphs $G_i$ of degree~$d_i$. 
All nodes send at most these many messages in Part~3:
\begin{gather*}
4\sum_{i=1}^{1+\lceil \lg M \rceil} \frac{M}{2^i} \cdot d_i 
= 
\sum_{i=1}^{1+\lceil \lg M \rceil} \frac{M}{2^i} \cdot \frac{2^8}{3 (1-\alpha)(1+3\alpha)}\cdot 2^{i} 
\le
 \frac{2^8 M}{3 (1-\alpha)(1+3\alpha)} (2+\lg M) 
 \le
 \frac{2^6 n\lg n}{3 (1-\alpha)} 
\ ,
\end{gather*}
because $M=\frac{1}{4}(1+3\alpha) n$.
Summing up the bounds for the three  parts, we obtain 
\[
nd (3+\lg n) + \frac{2^8}{3 (1-\alpha)} n\lg n
=
\frac{4^8}{(1-\alpha)^8} \,n\lg n +\frac{2^8 }{3 (1-\alpha)}\, n\lg n +\cO(n) 
\le \frac{5^8 }{(1-\alpha)^8} \,n\lg n
\]
as a bound on the number of messages, for sufficiently large~$n$.
\end{proof}

The greatest possible $t$ is $n-1$ which translates into $\alpha=1-\frac{1}{n}$.

\begin{corollary}

Algorithm \textsc{Many-Crashes-Consensus} solves binary consensus with up to $t=n-1$ crashes in  at most $n+3(1+\lg n)$ rounds such that at most $5^{8} n^9\lg n$ one-bit  messages are sent in total, for  sufficiently large~$n$.
\end{corollary}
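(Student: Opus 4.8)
The plan is to derive the corollary as the special case of Theorem~\ref{thm:Many-Crashes-Consensus} obtained by pushing the crash bound $t$ to its maximum admissible value $n-1$. First I would set $t = n-1$, so that $\alpha = \frac{t}{n} = 1 - \frac{1}{n}$ and hence $\frac{1}{1-\alpha} = n$. The algorithm \textsc{Many-Crashes-Consensus}$(\alpha)$ is defined for any $\alpha$ of this form, and all overlay graphs it uses --- the Ramanujan graph $G$ of degree $\bigl(\frac{4}{1-\alpha}\bigr)^{8}$ and the graphs $G_i$ of degree $\frac{64}{3(1-\alpha)(1+3\alpha)}\cdot 2^i$ --- remain well-defined (their degrees are constants once $n$ is fixed), so the hypotheses of Theorem~\ref{thm:Many-Crashes-Consensus} are met for sufficiently large~$n$.

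Next I would read off the two performance bounds. The round bound $n + 3(1+\lg n)$ stated in Theorem~\ref{thm:Many-Crashes-Consensus} does not depend on~$\alpha$ (the only $\alpha$-dependent contribution, the $2\lg(1+3\alpha)$ term coming from the length of Part~3, is absorbed because $1+3\alpha \le 4$), so it carries over verbatim. For the message bound, substituting $\frac{1}{1-\alpha} = n$ into $\bigl(\frac{5}{1-\alpha}\bigr)^{8} n\lg n$ gives $(5n)^8\, n\lg n = 5^{8} n^{8}\cdot n\lg n = 5^{8} n^{9}\lg n$, which is precisely the claimed quantity; the ``sufficiently large~$n$'' qualifier is inherited directly from Theorem~\ref{thm:Many-Crashes-Consensus}.

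There is essentially no obstacle here: the corollary is a one-line specialization of the preceding theorem, and the only care needed is the elementary arithmetic simplification $\bigl(\tfrac{5}{1-\alpha}\bigr)^{8} = 5^{8} n^{8}$ at $\alpha = 1-\tfrac1n$, together with the observation that the round count is $\alpha$-free and that $t=n-1$ lies within the range for which \textsc{Many-Crashes-Consensus} was analyzed.
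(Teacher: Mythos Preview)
Your proposal is correct and follows exactly the paper's own approach: the paper's proof simply observes that $t=n-1$ gives $\frac{1}{1-\alpha}=n$ and then invokes Theorem~\ref{thm:Many-Crashes-Consensus}. Your version is more detailed (you spell out why the round bound is $\alpha$-independent and carry out the arithmetic for the message bound explicitly), but the substance is identical.
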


\begin{proof}
If $t=n-1$ then $\frac{1}{1-\alpha}=\frac{1}{1-\frac{n-1}{n}}=n$.
The bound follows from Theorem~\ref{thm:Many-Crashes-Consensus}.
\end{proof}

\section{Gossiping with Crashes}
\label{sec:gossip}


We develop a gossiping algorithm \textsc{Gossip}, whose pseudocode is in Figure~\ref{fig:gossip}.
It is designed to work with the assumption $t<\frac{n}{5}$, so that there are $5t$ little nodes.

The little nodes iterate local probing on a Ramanujan graph~$G(5t, d)$ for $d = 5^{8}$.

The surviving nodes first absorb input values in Part~1 and then propagate them in Part~2 using expanding graphs~$G_i$ of increasing degrees, the same as in Lemma~\ref{lem:graph-G-i}.

A node~$p$ represents its extant set as a collection of pairs $(q,x)$, for all node names $q$, where  $x$ is the $q$'s rumor, if $p$ learned that rumor, or $x$ is \texttt{nil} otherwise.
A pair $(q,x)$ is \emph{proper for~$q$} if $x$ is a rumor, and so not \texttt{nil}, otherwise it is \emph{nil for~$q$}.
A node $q$ is \emph{present at~$p$} if the $p$'s  extant set includes a proper pair for~$q$ and it is \emph{absent at~$p$} if the $p$'s  extant set includes the nil pair for~$q$.
A node~$p$ \emph{updates~$q$ to~$(q,x)$} if $p$ replaces a nil pair for~$q$  by a proper pair~$(q,x)$ in the extant set.

Each node maintains a set of nodes called \emph{completion set}.
This set is initialized to the singleton set~$\{p\}$ at a node~$p$.
A node $p$ that receives a completion set from another node $q$ \emph{updates} its completion set by adding to its completion set all the names of nodes in the $q$'s completion set that are absent in the $p$'s completion set.

\begin{figure}[t!]

\hrule

\FF

\texttt{Algorithm} \textsc{Gossip}

\FF

\hrule

\FF

\begin{description}[nosep,leftmargin=2em]

\item[\sf Part~1: Build extant sets:] \ 

initialize the extant set to include a nil pair for each node, and update $p$ to $p$'s pair

\texttt{for} $i=1$ \texttt{to} $\lceil \lg n \rceil$ \texttt{do} Phase $i$ 
\begin{enumerate}[nosep,leftmargin=2em]
\item[]
at round $1$ of Phase $i$: 
\begin{enumerate}[nosep,leftmargin=2em]
\item[]
\texttt{if} $p$ is little \texttt{and} (\texttt{if} $i>1$ \texttt{then} $p$ survived local probing in Phase $i-1$) \texttt{then}
\begin{enumerate}
    \item[]
    send inquiry to each neighbor $u$ in graph $G_{i}$ who is absent at $p$
\end{enumerate}
\end{enumerate}
\item[]
at round $2$ of Phase $i$: \

\begin{enumerate}[nosep,leftmargin=2em]
\item[]
\texttt{if} an inquiry from a neighbor in graph $G_{i}$ received at first round \begin{enumerate}[nosep,leftmargin=2em]
\item[]
\texttt{then} respond to each inquiring neighbor with $p$'s pair
\end{enumerate}
\item[]
\texttt{if} a response $(q,x)$ received for own inquiry \texttt{then} update $q$ to $(q,x)$ 
\end{enumerate}
\item[]
during the following $2+\lg (5t)$ rounds of Phase $i$ execute local probing: \ 
\begin{enumerate}
\item[]
\texttt{if} $p$ is little \texttt{then} participate in local probing on graph $G$:
\begin{enumerate}[nosep,leftmargin=2em]
\item[]
(keep sending the current extant set to all neighbors in $G$ and \\
update each absent node $q$ to $(q,y)$ after receiving a proper pair $(q,y)$) 
\end{enumerate}
\texttt{else} stay idle for $2+\lg (5t)$ rounds
\end{enumerate}
\end{enumerate}

\item[\sf Part~2: Build completion sets:] \ 

 initialize the completion set to the singleton set $\{p\}$

\texttt{for} $i=1$ \texttt{to} $\lceil \lg n \rceil$ \texttt{do} Phase $i$: 
\begin{enumerate}[nosep,leftmargin=2em]
\item[]
at round $1$ of Phase $i$: 
\begin{enumerate}[nosep,leftmargin=2em]
\item[]
\texttt{if} $p$ is little \texttt{and}  (\texttt{if} $i>1$ \texttt{then} $p$ survived local probing in Phase $i-1$) \texttt{then}
\begin{enumerate}
    \item[]
    send  the extant set to each neighbor~$q$ in~$G_{i}$ that is not in the completion set \\ and add $q$ to the completion set
\end{enumerate}
\end{enumerate}
\item[]
at round $2$ of Phase $i$: \

\begin{enumerate}[nosep,leftmargin=2em]
\item[]
\texttt{if} an extant set received from a neighbor in graph $G_{i}$ at the first round 
\begin{enumerate}[nosep,leftmargin=2em]
\item[]
\texttt{then} update each absent node $q$ to $(q,y)$ if a proper pair $(q,y)$ for $q$ received 
\end{enumerate}
\end{enumerate}
\item[]
in the next $2+\lg (5t)$ rounds of Phase $i$ execute local probing: \ 
\begin{enumerate}
\item[]
\texttt{if} $p$ is little  \texttt{then} participate in local probing on graph $G$:
\begin{enumerate}[nosep,leftmargin=2em]
\item[]
(keep sending the current completion set to all neighbors in $G$ and \\
update own completion set after receiving a neighbor's completion set) 
\end{enumerate}
\texttt{else} stay idle for $2+\lg (5t)$ rounds
\end{enumerate}
\end{enumerate}

\end{description}

\FF

\hrule

\FF

\caption{\label{fig:gossip}
A pseudocode of the algorithm  for a node~$p$.
Messages are sent only over the links determined by the overlay graphs denoted as $G$ and~$G_i$, specialized to each part and phase.
}
\end{figure}
\begin{lemma}\label{lem:lit-sur}
At the end each phase $i$, $1 \le i \le \ceil{\lg{n}}$, of Part~1, there is a set of $\Theta(n)$ little nodes that survived local probing of this phase.
\end{lemma}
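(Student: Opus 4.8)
The plan is to reuse, almost verbatim, the compactness argument that proves Lemma~\ref{lem:few-end-part2-number}, and to strengthen it so that a single set of little nodes serves at the end of every phase. First I would fix the set $D$ of little nodes that are non-faulty throughout the execution; since there are $5t$ little nodes and at most $t$ of them ever crash, $|D|\ge 5t-t=4t$. All the instances of local probing in Part~1 are run on the same overlay graph $G=G(5t,d)$ with $d=5^8$, for which $\ell=\ell(5t,d)=4\cdot 5t\cdot d^{-1/8}=4t$ and $\delta=\delta(d)$, and which is $(\ell,\frac34,\delta)$-compact by Theorem~\ref{thm:ramanujan-compact}.

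Next I would extract one survival set and keep it across all phases. Because $|D|\ge 4t=\ell$, compactness applied to $D$ yields a subset $C\subseteq D$ with $|C|\ge\frac34\ell=3t$ such that every vertex of $G|_C$ has degree at least $\delta$ in $G|_C$. The point to stress is that this $C$ is determined by $G$ and $D$ alone, hence is independent of the phase index~$i$.

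It then remains to run $C$ through Proposition~\ref{pro:local-probing} phase by phase. Inspecting Part~1 in Figure~\ref{fig:gossip}, every little node that has not crashed takes part in the local-probing step of every phase~$i$: the guard ``$p$ survived local probing in Phase $i-1$'' restricts only the inquiry round at the beginning of a phase, not the local probing itself. Therefore, for the local probing of phase~$i$, the set of little nodes still operational immediately after it --- the ``end set'' $B_2$ in Proposition~\ref{pro:local-probing} --- contains $D$, and hence contains $C$; and since being a $\delta$-survival set is a property of $C$ and $G$ only, $C$ is a $\delta$-survival subset for this $B_2$. By the last assertion of Proposition~\ref{pro:local-probing}, every node of $C$ survives the local probing of phase~$i$, so at least $|C|\ge 3t$ little nodes survive it, a positive constant fraction of the $5t$ little nodes, establishing the lemma. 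I do not expect a genuine obstacle here; the one delicate point --- and the only place the argument could break --- is that a little node which paused prematurely in some earlier phase is still operational (pausing is not crashing), so it remains in $D$ and is still eligible to belong to $C$; this is exactly why the surviving population stays at least $3t$ as the phases progress and why the single set $C$ works at the end of every phase.
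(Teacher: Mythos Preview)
Your proposal is correct and follows essentially the same line as the paper's own proof: take the set of non-faulty little nodes (at least $4t$ of them), invoke $(\ell,\frac34,\delta)$-compactness of $G(5t,5^8)$ from Theorem~\ref{thm:ramanujan-compact} to extract a $\delta$-survival subset of size at least $3t$, and then apply Proposition~\ref{pro:local-probing} to conclude that every node of this subset survives local probing in every phase. Your write-up is in fact more explicit than the paper's on two points the paper leaves implicit --- that a single survival set $C$ works uniformly across all phases, and that the guard on ``survived local probing in Phase $i-1$'' gates only the inquiry step and not participation in local probing itself --- but these are elaborations of the same argument, not a different route.
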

\begin{proof}
Let a set $B$ consist of the non-faulty little nodes in an execution.
Since there are at least $5t$ little nodes, thus $B$ has at least $5t-t=4t$ elements.
Graph $G$ is a $G(5t, d)$ Ramanujan graph for $d = 5^{8}$. 
We defined $\ell =  20 t d^{-1/8}$, which equals  $4t$ by the determination of degree~$d$. Graph $G$ is $(\ell, \frac{3}{4}, \delta)$-compact, for $\delta = \frac{1}{2}(d^{7/8} - 2d^{5/8})$, by Theorem~\ref{thm:ramanujan-compact}.
It follows, by the definition of compactness, that there exists a $\delta$-survival subset of~$B$ of at least $\frac{3}{4}\ell$ elements. 
By Proposition~\ref{pro:local-probing}, every little node in such a $\delta$-survival set survives local probing in each phase of Part~1.
\end{proof}
\begin{lemma}\label{lem:lit-upd}
The little nodes updates one another with a delay of a single phase at most.
\end{lemma}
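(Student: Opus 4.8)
The plan is to follow the proper pair $(r,x)$ of a single node~$r$ as it proliferates through the local-probing subnetwork $G=G(5t,5^8)$ of the little nodes. Write $\gamma=2+\lg(5t)$ and $\ell=\ell(5t,5^8)=4t$ for the parameters that govern each local probing of Part~1, and let $C$ be the $\delta$-survival subset of non-faulty little nodes furnished by the proof of Lemma~\ref{lem:lit-sur}: it has at least $\frac{3}{4}\ell=3t$ vertices, every vertex of $G|_C$ has degree at least $\delta$, and, by Proposition~\ref{pro:local-probing} and its proof, no member of~$C$ ever pauses during any local probing on~$G$. The core of the argument is a one-phase \emph{spreading property}: if at the start of phase~$i$'s local probing -- equivalently, by round~$2$ of phase~$i$ -- some $q\in C$ holds $(r,x)$, then at least $\ell=4t$ members of~$C$ hold $(r,x)$ when that local probing ends. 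To prove it I would first observe that $S=C\cap N^{\gamma}_G(q)$ is a $(\gamma,\delta)$-dense-neighborhood for~$q$: any $v\in S\cap N^{\gamma-1}_G(q)$ lies in~$C$, hence has at least $\delta$ neighbors in~$C$, and all of them lie within distance~$\gamma$ of~$q$, so at least $\delta$ of them lie in~$S$. Theorem~\ref{thm:sparse-dense} then gives $|S|\ge\ell=4t$, and the ball-growth estimate inside the proof of Theorem~\ref{thm:sparse-dense} shows that $A=S\cap N^{\gamma-1}_G(q)$ has at least $\min(2^{\gamma-1},\ell)=\ell=4t$ vertices, each joined to~$q$ by a path lying in $S\subseteq C$ of length at most $\gamma-1$; since every little node keeps forwarding its whole extant set and no vertex of~$C$ pauses, $(r,x)$ reaches all of~$A$ within $\gamma-1<\gamma$ rounds.

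With the spreading property in hand, crossing from phase~$i$ to phase~$i+1$ is a counting step. Suppose $q\in C$ holds $(r,x)$ by round~$2$ of phase~$i$; by the spreading property there is a set $T\subseteq C$ of at least $4t$ little nodes that hold $(r,x)$ when phase~$i$'s local probing ends, and since $T\subseteq C$ consists of non-faulty nodes and extant sets only grow, the same $\ge 4t$ nodes still hold $(r,x)$ at the start of phase~$(i+1)$'s local probing. Now fix any $p\in C$. Applying the dense-neighborhood argument above to~$p$ in phase~$i+1$, there is a set $A'$ of at least $\ell=4t$ little nodes, each joined to~$p$ by a non-pausing path of length at most $\gamma-1$. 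There are only $5t$ little nodes and $|A'|+|T|\ge 4t+4t=8t>5t$, so $A'$ and $T$ intersect; any $w$ in the intersection holds $(r,x)$ at the start of phase~$(i+1)$'s local probing and relays it to~$p$ along a non-pausing path within $\gamma-1<\gamma$ rounds, so $p$ holds $(r,x)$ by the end of phase~$(i+1)$'s local probing. Since $p\in C$ and~$r$ were arbitrary, a pair held by one little node of~$C$ is held by every little node of~$C$ within a single further phase, which is the assertion of the lemma.

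The step I expect to be the main obstacle is making the spreading property's hypothesis line up cleanly with the phase structure. First, one has to establish carefully that $C\cap N^{\gamma}_G(q)$ really is a dense-neighborhood \emph{contained in~$C$} -- so that its vertices never pause -- and then extract from the proof of Theorem~\ref{thm:sparse-dense} that $\Omega(t)$ of its vertices are reachable from~$q$ by length-$(\gamma-1)$ internal paths; this is the quantitative heart of the bound, and it is exactly here that the identities $\ell(5t,5^8)=4t$ and $2^{\gamma-1}=10t\ge\ell$ enter. Second, the conclusion above is phrased for a pair held by a $C$-node at a phase boundary, that is, after the inquiry/response rounds~$1$--$2$; a pair that a $C$-node only acquires \emph{midway} through a local probing need not have reached $\Omega(t)$ nodes by the end of that phase, so one must argue that every such pair was already present at some $C$-node at the preceding phase boundary, or else absorb the short gap using the next phase's pull over the graphs~$G_i$, invoking the expansion of~$G_i$ from Lemma~\ref{lem:graph-G-i}. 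That re-synchronization is the remaining routine-but-fiddly ingredient.
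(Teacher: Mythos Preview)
Your route differs from the paper's in both the main tool and the strength of the conclusion. The paper gives a \emph{one-phase} argument: if $q$ holds $(r,x)$ at the start of phase~$i$ and survives, then every other little node that survives that same local probing already holds $(r,x)$ at the end of phase~$i$. The key step is $\ell$-expansion (Theorem~\ref{thm:ramanujan-expanding}): each survivor has a $(\gamma,\delta)$-dense-neighborhood of size at least $\ell=4t$ (Proposition~\ref{pro:local-probing} plus Theorem~\ref{thm:sparse-dense}), and any two such neighborhoods are connected by an edge, so $(r,x)$ can traverse $q$'s neighborhood, cross that edge, and traverse $p$'s neighborhood within one local probing (the paper takes its duration to be $2(2+\lg(5t))$ for this purpose). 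You instead spend two phases---one to spread from $q$ to at least $4t$ nodes of~$C$, and a second to pull into $p$ by pigeonhole ($4t+4t>5t$)---replacing the connecting edge by set intersection at the price of an extra phase. Your reading of ``delay of a single phase'' as ``one further phase'' is defensible from the lemma's wording alone, but the paper's proof and the phrasing of Lemma~\ref{lem:nil-extinct} use the stronger same-phase synchronization; with your version you would need to shift the invariant in Lemma~\ref{lem:nil-extinct} by one phase (harmless for the final $O(\log n)$ bound, but a different statement).

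One technical point to tighten: your claim that every vertex of $A=S\cap N^{\gamma-1}_G(q)$ is joined to $q$ by a path \emph{inside~$S$} of length at most $\gamma-1$ does not follow from the ball growth in Theorem~\ref{thm:sparse-dense} as written, because there $A_i=S\cap N^i_G(q)$ is the $G$-ball intersected with~$S$, not the ball in $G|_S$; a shortest $G$-path to a vertex of $A$ may leave~$S$ and hence pass through pausing nodes. The repair is straightforward---rerun the doubling argument with $\tilde A_i=N^i_{G|_S}(q)$, which satisfies the same recurrence since every vertex of $\tilde A_i\subseteq S\cap N^{\gamma-1}_G(q)$ still has at least $\delta$ neighbors in~$S$---but it should be made explicit. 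Also note that your spreading step forces $|C|\ge 4t$, stronger than the $3t$ you quote from Lemma~\ref{lem:lit-sur}; this is not a contradiction (it is itself a consequence of Theorem~\ref{thm:sparse-dense} applied to~$C$), but it is worth remarking so it does not look like overreach.
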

\begin{proof}
Consider a proper pair $(r, x)$ present in a little node $q$ in the beginning of phase $i$ of Part~1. We will prove that $q$, if survives the local probing of this phase, updates any other little node that survives the same local probing. 
A node that survives local probing belongs to a $(2+\lg (5t),\delta(d))$-dense-neighborhood, by the duration of local probing in Part~2, and the mechanism of local probing determined by~$\delta$.
By Theorem~\ref{thm:sparse-dense}, $(\gamma(5t),\delta(d))$-dense-neighborhood of a vertex includes at least $\ell(5t,d)$ vertices, if $\gamma(5t) \ge 2+\lg (5t)$.
Any two such neighborhoods are connected by an edge, by Theorem~\ref{thm:ramanujan-expanding}.
Since the duration of the local probing is $2(2+\lg(5t))$, these neighborhoods remain non-faulty sufficiently long assuming it centers survives the local probing. It follows that node $p$ and any other node that survives the local probing update one another about all proper pairs they have in the beginning of the local probing. 
\end{proof}
\begin{lemma}\label{lem:nil-extinct}
The number non-faulty nodes that are nil to a survived little node is at most $\frac{n}{2^{i}}$, by the end of phase $i$ of Part~1.
\end{lemma}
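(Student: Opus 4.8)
The plan is to prove the bound by induction on the phase index $i$, with the conceptual ``phase~$0$'' convention used elsewhere in the paper (phase~$0$ ends at the very beginning of Part~1). Fix, by Lemma~\ref{lem:lit-sur}, a set $\Sigma$ of little nodes that survive the local probing of \emph{every} phase of Part~1, and let $D_i$ be the set of non-faulty nodes whose pair is still nil at the nodes of $\Sigma$ at the end of phase~$i$; by Lemma~\ref{lem:lit-upd} all members of $\Sigma$ agree on their extant sets up to a one-phase lag, which I absorb into the constant, so it suffices to show $|D_i|\le n/2^i$. The base case $i=0$ is immediate, since $|D_0|\le n=n/2^0$.

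For the inductive step, assume $|D_{i-1}|\le n/2^{i-1}$ and look at phase~$i$. In its first round every $p\in\Sigma$ (active, since it survived phase~$i-1$) sends an inquiry along $G_i$ to each $G_i$-neighbor that is absent at $p$; so if $v\in D_{i-1}$ happens to be a $G_i$-neighbor of some $p\in\Sigma$, then $v$ is absent at $p$ and gets inquired. Being non-faulty, $v$ answers with its own pair in the second round, $p$ updates $v$ to a proper pair, and the $2+\lg(5t)$ rounds of local probing on $G$ that conclude phase~$i$ propagate that pair to all of $\Sigma$ by the end of the phase (Lemma~\ref{lem:lit-upd}). Hence no member of $D_i$ is a $G_i$-neighbor of any member of $\Sigma$; equivalently, $D_i\subseteq D_{i-1}$ and $D_i\cap N_{G_i}(\Sigma)=\emptyset$.

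The remaining — and crucial — step is the purely combinatorial claim that a set of non-faulty nodes non-adjacent to $\Sigma$ in $G_i$ has at most $n/2^i$ elements. This is exactly where the expansion of the overlay graph $G_i$ is invoked: the external-neighbourhood estimate of Lemma~\ref{lem:graph-G-i}, or, when $G_i$ is taken Ramanujan, the unbalanced-expansion estimate of Theorem~\ref{thm:ramanujan-unbalanced-expansion}, applied with $\Sigma$ as the small side and $D_i$ as the large side — were $|D_i|$ to exceed $n/2^i$, the expansion of $G_i$ would force a $G_i$-edge from $D_i$ into $\Sigma$, contradicting $D_i\cap N_{G_i}(\Sigma)=\emptyset$. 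I expect this to be the main obstacle. It forces the degree of $G_i$ to grow geometrically with $i$ so that the halving schedule $n/2^i$ is actually attainable against an adversary that can drive up to about $2t$ little nodes out of the guaranteed survivor set $\Sigma$, while simultaneously $\sum_i\deg(G_i)$ must stay small enough that the inquiry/response traffic over the $\lceil\lg n\rceil$ phases totals only $\cO(n)$, as demanded by the $\cO(n+t\log n\log t)$ message bound for \textsc{Gossip}; reconciling these two requirements, and at the same time accounting for the one-phase update lag of Lemma~\ref{lem:lit-upd} and for a node being inquired by several members of $\Sigma$ in the phase in which it is discovered, is the delicate part. Once the combinatorial claim is in place the induction closes, and instantiating it at $i=\lceil\lg n\rceil$ yields fewer than one non-faulty node nil at a surviving little node, i.e.\ every non-faulty node present, which is the property Part~2 of \textsc{Gossip} relies on.
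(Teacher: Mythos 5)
Your plan mirrors the paper's own argument, which for this lemma is only a one-sentence sketch ("by expansion of $G_i$, the set of nodes not known by the survival set shrinks exponentially"); you add genuinely useful structure on top of it — the fixed survivor set $\Sigma$, the sets $D_i$, the reduction of the induction step to the statement that $D_i$ has no $G_i$-neighbor in $\Sigma$, and the appeal to Lemmas~\ref{lem:lit-sur} and~\ref{lem:lit-upd} to propagate a discovered pair through $\Sigma$ within the phase. Up to that reduction your argument is sound and is, if anything, more careful than the paper's.

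The gap is the step you yourself flag as "the main obstacle": you never establish that a set of non-faulty nodes with no $G_i$-edge into $\Sigma$ has at most $n/2^i$ elements, and neither of the two results you point to delivers it with the stated parameters. Lemma~\ref{lem:graph-G-i} applies to sets of size $C\frac{t+1}{2^i}$ and guarantees $2(t+1)$ external neighbors \emph{somewhere} in the graph (more precisely, outside any adversarially chosen set of $t$ vertices); it does not guarantee an external neighbor inside the specific set $\Sigma$ of $\Theta(t)$ surviving little nodes, so non-adjacency of $D_i$ to $\Sigma$ is not contradicted. Theorem~\ref{thm:ramanujan-unbalanced-expansion}, applied with $A=\Sigma$ (so $\epsilon\approx 3t/n$) and $B=D_i$, forces an edge only once $|D_i|>\frac{4n^2}{3d_it}$; for that threshold to equal $n/2^i$ one needs $d_i=\Theta(2^i n/t)$, which is far larger than the degree $\cO(2^{i+1})$ of the graphs from Lemma~\ref{lem:graph-G-i} that the gossip algorithm is stated to use, and which would inflate the per-phase inquiry traffic from $\Sigma$ to $\Theta(2^i n)$ messages, destroying the $\cO(n+t\log n\log t)$ bound of Theorem~\ref{thm:gossip}. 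Indeed, for small $i$ the claimed bound is in tension with a simple counting argument: the at most $5t$ little nodes have only $\cO(t\,2^i)$ edges in $G_i$, so when $t\,2^i\ll n$ they cannot reach $n-n/2^i$ distinct nodes at all. So the quantitative halving schedule $n/2^i$ cannot be extracted from the cited expansion properties as they stand; a correct proof needs either differently parametrized overlay graphs (with a matching reworking of the message count) or a weaker, delayed shrinking schedule. To be fair, the paper's own proof does not resolve this either — but your proposal, as written, leaves the lemma unproved at exactly this point.
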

\begin{proof}
By expansion of $G_i$, the set of nodes not known (in terms of extant sets) by the survival set is shrinking exponentially in line with expansion and becomes empty after $G_{\log n}$ is used. 
\remove{Suppose, to the contrary, that the claim of current lemma does not hold for some $i$ such that $1\le i\le \lceil \lg n \rceil$. 
This means that there exists a set~$C_i$ of non-faulty nodes that have not communicated with a little node in round~1 of Phase~$i$, such that $|C_i|=\frac{t}{2^{i+1}}$, for some~$i\ge 1$. 
The overlay graph~$G_{i}$, for $i\ge 1$, used in Phase~$i$ by nodes in $C_i$ for sending inquiries and receiving responses is a graph of vertex expansion $\left(1-\epsilon\right)2^{i}\cdot 10$. We also have that $|C_{i}| = \frac{t}{2^{i + 1}} \le \frac{n}{\left(1 - \alpha\right)\left(1-\epsilon\right)2^{i}\cdot 10}$ since $t < n/5$. 

???By the vertex expansion property applied to set $C_i$, we conclude that this set has at least $\left(1-\epsilon\right)2^{i}\cdot 10 |C_{i}| = \left(1-\epsilon\right) 5t$ neighbors in graph $G_{i}$. 
By Lemma~\ref{lem:part-one-SCV}, at most $2t$ nodes do not decide by the end of Part~1. Adding additional slack of $t$ for the number of faulty nodes, still allows us to reason that at least one neighbor of $C_{i}$ has to decide before the beginning of Part~2, and therefore, the decision was relayed by that 
process to some of its neighbor processes in~$C_i$ during Phase~$i$. 
It follows that some process in $C_i$ decides 
by the end of Phase~$i$, which is a contradiction with the definition of~$C_i$ and thus validates the claim of the lemma.Incorrect??? }
\end{proof}

\begin{theorem}
\label{thm:gossip}
Assuming $t<\frac{n}{5}$, algorithm \textsc{Gossip} solves gossiping in $\cO(\log n \log t)$ rounds with nodes sending  $\cO(n+t\log n \log t)$  messages.
\end{theorem}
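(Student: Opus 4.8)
The plan is to establish correctness and then tally the performance bounds, paralleling the structure used for \textsc{Spread-Common-Value} and \textsc{Many-Crashes-Consensus}. First I would argue the two gossiping requirements. For the first requirement, a node $j$ that crashes before sending any message never has its pair introduced into any extant set: a proper pair for $j$ is created only by $j$ itself in the initialization step of Part~1 and then propagated, so if $j$ sends nothing, no node ever updates $j$ to a proper pair. For the second requirement, I would trace the fate of a node $j$ that halts operational. By Lemma~\ref{lem:lit-sur}, in every phase of Part~1 there is a $\Theta(n)$-size set $L$ of little nodes that survive local probing, and by Lemma~\ref{lem:lit-upd} all such surviving little nodes hold identical collections of proper pairs after each phase. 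By Lemma~\ref{lem:nil-extinct}, the number of non-faulty nodes that are still nil at a surviving little node halves each phase and hence is $0$ after phase $\lceil\lg n\rceil$; thus $j$'s pair is present at every surviving little node at the end of Part~1. Part~2 then runs the analogous shrinking argument on completion sets using the same graphs $G_i$: a surviving little node's completion set grows until it contains every non-faulty node, so when a node finally decides on its extant set it has been refreshed (via the local-probing exchanges and the $G_i$ exchanges) with the proper pairs held by the surviving little nodes, and in particular with $j$'s pair. Since gossiping does not demand equality of decided extant sets, this suffices.

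Next I would bound the running time. Part~1 has $\lceil\lg n\rceil$ phases, each consisting of two communication rounds plus a local probing of $2+\lg(5t)$ rounds, for $\mathcal{O}(\log n)$ rounds per phase and $\mathcal{O}(\log n\log t)$ rounds in total; Part~2 is structured identically and contributes another $\mathcal{O}(\log n\log t)$ rounds. For the message count I would separate the two kinds of communication. The $G_i$-edges: graph $G_i$ has degree $\mathcal{O}(2^{i+1})$ by Lemma~\ref{lem:graph-G-i}, only the $5t$ little nodes send along them, each sends at most once per edge per phase, and an inquiry triggers at most one response, so Part~1 sends $\mathcal{O}\!\bigl(\sum_{i=1}^{\lceil\lg n\rceil} 5t\cdot 2^{i+1}\bigr)=\mathcal{O}(tn)$ — too coarse; the correct accounting is that a little node sends an inquiry along a $G_i$-edge only to a neighbor \emph{absent} at it, and the total number of such ``new learning'' events across all phases is $\mathcal{O}(5t\cdot n)$ in the worst case, but the exponential shrinkage of the nil/uncompleted sets (Lemma~\ref{lem:nil-extinct} and its Part~2 analogue) caps the per-phase contribution at $\mathcal{O}(\tfrac{n}{2^i})$ sources each touching $\mathcal{O}(2^i)$ neighbors, i.e.\ $\mathcal{O}(n)$ per phase summed over the at most $5t$ little senders restricted to the still-active set — giving $\mathcal{O}(t\log n)$ per phase after the shrinkage dominates, hence $\mathcal{O}(t\log^2 n)=\mathcal{O}(t\log n\log t)$ over all phases, plus the unavoidable $\mathcal{O}(n)$ for the one round in which every node may be inquired. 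The local-probing edges: graph $G$ has constant degree $d=5^8$, only the $5t$ little nodes probe, each phase's probing lasts $\mathcal{O}(\log t)$ rounds, and there are $\mathcal{O}(\log n)$ phases in each part, so this contributes $\mathcal{O}(t\log n\log t)$ messages. Summing yields $\mathcal{O}(n+t\log n\log t)$ messages of linear size.

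The main obstacle will be making the message-count bookkeeping for the $G_i$-communication rigorous: one must argue that across all $\lceil\lg n\rceil$ phases the little nodes collectively send only $\mathcal{O}(n+t\log n\log t)$ messages, which requires combining (i) the degree bound $\mathcal{O}(2^{i+1})$ of $G_i$, (ii) the fact that in phase $i$ only $\mathcal{O}(t/2^i)$-ish targets are still ``new'' by the shrinkage lemmas, so the product is $\mathcal{O}(t)$ per phase from the target side, and (iii) the fact that the early, low-degree phases never produce more than $\mathcal{O}(5t)$ messages from the source side — with the $\mathcal{O}(n)$ term absorbing the single broadcast-like round. A secondary subtlety is confirming that the $2+\lg(5t)$-round local probings in consecutive phases do not overlap in a way that breaks the dense-neighborhood argument of Lemma~\ref{lem:lit-upd}; since each probing is self-contained within its phase and the surviving set is stable across phases by Lemma~\ref{lem:lit-sur}, this goes through. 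With these in hand the theorem follows by collecting the round and message totals.
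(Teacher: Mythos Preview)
Your overall plan mirrors the paper's proof exactly: correctness from Lemmas~\ref{lem:lit-sur}, \ref{lem:lit-upd}, \ref{lem:nil-extinct} applied to Part~1 and then analogously to Part~2, running time from the $\lceil\lg n\rceil$ phases of $\cO(\log t)$ rounds each, and the message count split into local-probing traffic on~$G$ and inquiry/response traffic on the~$G_i$. The local-probing count $\cO(t\log n\log t)$ is fine. One small slip: you write ``$\cO(\log n)$ rounds per phase'' where you mean $\cO(\log t)$.

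The genuine gap is exactly where you flag it: the $G_i$-message bookkeeping. Your sketch oscillates between two incompatible viewpoints and never lands. Counting from the sender side ($5t$ little nodes of degree $\cO(2^i)$) gives $\cO(t\cdot 2^i)$ per phase and $\cO(tn)$ total, which you correctly reject. Counting from the target side (at most $n/2^{i}$ non-faulty absent nodes of degree $\cO(2^i)$) gives $\cO(n)$ per phase and $\cO(n\log n)$ total, which is also too large when $t$ is small. Your jump from ``$\cO(n)$ per phase'' to ``$\cO(t\log n)$ per phase'' has no justification, and the asserted equality $\cO(t\log^2 n)=\cO(t\log n\log t)$ is false as stated (take $t=\cO(1)$); it only becomes harmless once you add the $\cO(n)$ term, because whenever $t\log^2 n$ exceeds $n$ one has $\log t=\Theta(\log n)$. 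The paper itself is terse here --- it simply asserts $\cO(n+t\log n)$ for the $G_i$-traffic ``by arguments similar to'' the \textsc{Spread-Common-Value} analysis --- but note that in \textsc{Spread-Common-Value} it is the \emph{senders} that shrink geometrically while the degree grows, whereas in \textsc{Gossip} the senders are the fixed $5t$ little nodes and it is the \emph{target} set that shrinks; the transfer is not automatic, and you should also account for the up to $t$ crashed nodes that remain permanently absent and are re-inquired in every phase. Making this step rigorous is the real work; the rest of your outline is sound and matches the paper.
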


\begin{proof}
It follows from Lemmas~\ref{lem:lit-sur},~\ref{lem:lit-upd},~\ref{lem:nil-extinct} that a proper pair of every non-faulty node belongs to the extant set of every survived little node after Part~1 finishes. The same set of Lemmas can be used to show that the extant set, completed after Part~1, are propagated to all non-faulty nodes in Part~2. Thus correctness follows.

The running time performance follows directly from the algorithm's structure represented in the pseudocode: each of the two parts has $\cO(\log n)$ phases, each of $\cO(1+\log t)$ rounds.

To assess the amount of communication: a total of $\cO(t\cdot \log n \log t)$ messages are sent in local probing instances on graph $G$ and $\cO(n+t\log n)$ messages are sent via edges of graph~$G_i$, by arguments similar to those in the proof of Theorem~\ref{thm:fom-consensus}
\remove{
{\em Ideas for a proof:

General: by the argument about the survival set in $G$ analogous to the one in the proof of Lemma~\ref{lem:few-end-part2-validity}, we could argue that in each phase $i$ a subset of $\Theta(t)$ nodes survive local probing and exchange `status' from the beginning of the phase. 
This `status' needs to be formulated in terms of extant sets and completion sets, possibly with some invariants proved in auxiliary lemmas.

By expansion of $G_i$, the set of nodes not known (in terms of extant and completion sets) by the survival set is shrinking exponentially in line with expansion and becomes empty after $G_{\log n}$ is used. 

In Part~1 a relevant invariant should be in terms of extant sets.
Such an invariant should provide an interpretation of extant sets, and be given as a lemma.

In Part 2 a relevant invariant should be in terms of completion sets, and be given as a lemma.
Such an invariant should provide an interpretation of `completing', say, a message with status of all nodes from the end of Part 1 has been already sent to that node in Part 2?
 
Proving the two lemmas would help to verify and correct the pseudocode.

Referring to such two lemmas would show correctness.

The running time performance follows directly from the algorithm's structure represented in the pseudocode: each of the two parts has $\cO(\log n)$ phases, each of $\cO(1+\log t)$ rounds.

To assess the amount of communication: a total of $\cO(t\cdot \log n \log t)$ messages are sent in local probing instances on graph $G$ and $\cO(n+t\log n)$ messages are sent via edges of graph~$G_i$, by arguments similar to those in the proof of Theorem~\ref{thm:fom-consensus}.}
}
\end{proof}

\section{Checkpointing with Crashes}

\label{sec:checkpoining}

We present an algorithm for checkpointing with faults modeled as crashes.
The algorithm consists of two parts.
In the  first part, the nodes gossip their names by executing algorithm \textsc{Gossiping} in which each node has the same dummy rumor.
Gossiping ends with each node producing an extant set of nodes, while different nodes may have different sets.
In the second part, the nodes execute $n$ concurrent instances of consensus, each implemented by algorithm \textsc{Few-Crashes-Consensus}.
At a node $p$, the $i$-th instance has input $1$ if node~$i$ is present at $p$, in the sense of the extant set produced by gossiping,  and with input~$0$ otherwise. 
A node transmits messages over a link simultaneously for each instance of consensus, and these messages are combined into one big message to be transmitted together in one round.
A node $p$ decides on an extant set consisting of the names of nodes~$i$ such that decision is on $1$ in the $i$-th instance of consensus.
The algorithm is called \textsc{Checkpointing}, its pseudocode is in Figure~\ref{fig:checkpointing}.

\begin{figure}[t]

\hrule

\FF

\texttt{Algorithm} \textsc{Checkpointing}

\FF

\hrule

\FF

\begin{description}[nosep,leftmargin=2em]

\item[\sf Part~1: Gathering extant sets] \ 

\begin{enumerate}
    \item[]
execute {\sc Gossip} with each node having a dummy rumor 
\end{enumerate}

\item[\sf Part~2: Concurrent instances of consensus] \ 

\begin{enumerate}
    \item[]
execute \textsc{Few-Crashes-Consensus} on $n$ concurrent instances of  consensus:
\begin{enumerate}
    \item[]
the $i$-th instance with input $1$ if node~$i$ is present at $p$  and with input~$0$ otherwise
\end{enumerate}
\end{enumerate}
\item[\sf Decide:] on a final extant set consisting of nodes $i$ such that decision is on $1$ in the $i$-th instance of consensus
\end{description}

\FF

\hrule

\FF

\caption{\label{fig:checkpointing}
A pseudocode of the algorithm  for a node~$p$ for the multi-port model.
A message in Part~2 combines individual messages from all concurrent instances of consensus.
}
\end{figure}

\begin{theorem}
\label{thm:checkpointing}

Assuming $t<\frac{n}{5}$, algorithm {\sc Checkpointing} performs checkpointing in $\cO(t+\log n \log t)$ rounds with nodes sending $\cO(n+t\log n \log t)$ messages.
\end{theorem}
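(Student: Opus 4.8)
The plan is to handle the two parts of \textsc{Checkpointing} separately, treating the guarantees of \textsc{Gossip} (Theorem~\ref{thm:gossip}) and of \textsc{Few-Crashes-Consensus} (Theorem~\ref{thm:fom-consensus}) as black boxes: correctness reduces to checking the three defining conditions of checkpointing, and the performance bounds reduce to adding up the costs of Part~1 and Part~2, the one twist being that in Part~2 all per-instance messages sharing a link and a round are bundled into a single message of $\mathcal{O}(n)$ bits.

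\emph{Correctness.} A non-crashing node $p$ decides on the final extant set consisting of those indices $i$ for which the $i$-th consensus instance decides $1$ at $p$. Condition~(3) --- all decided final extant sets are equal --- is immediate from the agreement property of each instance of \textsc{Few-Crashes-Consensus}: for every $i$, all non-crashing nodes decide the same bit in the $i$-th instance, so the resulting final extant sets coincide. For condition~(1), if a node $j$ crashes before sending any message then it does so already during Part~1, so by the gossiping guarantee $j$'s pair lies in no decided extant set of \textsc{Gossip}; hence every node that reaches Part~2 holds input $0$ in the $j$-th instance, so by validity that instance decides $0$ everywhere and $j$ is in no decided final extant set. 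For condition~(2), if a node $j$ halts while operational then it is operational throughout Part~1 and halts operational in the \textsc{Gossip} subexecution, so by the gossiping guarantee $j$'s pair lies in every decided extant set of \textsc{Gossip}; hence every node reaching Part~2 holds input $1$ in the $j$-th instance, so by validity that instance decides $1$ everywhere and $j$ is in every decided final extant set. Termination of \textsc{Checkpointing} follows from termination of \textsc{Gossip} and of each consensus instance.

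\emph{Performance.} Part~1 is a run of \textsc{Gossip}, costing $\mathcal{O}(\log n\log t)$ rounds and $\mathcal{O}(n+t\log n\log t)$ messages by Theorem~\ref{thm:gossip}. Part~2 runs the $n$ copies of \textsc{Few-Crashes-Consensus} in lockstep, so it costs the same $\mathcal{O}(t+\log n)$ rounds as a single copy by Theorem~\ref{thm:fom-consensus}; adding the two parts gives the round bound $\mathcal{O}(t+\log n\log t)$. For messages, bundling makes the Part~2 count equal to the number of link-round pairs on which at least one instance communicates, and one argues this is $\mathcal{O}(n+t\log t)$ by inspecting the three ingredients of \textsc{Few-Crashes-Consensus}: the notification step (Part~3 of \textsc{Almost-Everywhere-Agreement}) is a single round over a fixed set of links; local probing uses a fixed window of $\mathcal{O}(\log t)$ rounds over the fixed edge set of the overlay Ramanujan graph; and for the value-propagation steps (Part~1 of \textsc{Almost-Everywhere-Agreement} and \textsc{Spread-Common-Value}) one uses the expansion of the overlay graphs to argue that, even after aggregating over all $n$ instances, each link carries a message in only $\mathcal{O}(\log n)$ distinct rounds, because a propagated value reaches any given node within logarithmically many rounds of the first instance in which it reaches that node. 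Hence Part~2 contributes $\mathcal{O}(n+t\log t)$ bundled messages, and adding Part~1 yields the $\mathcal{O}(n+t\log n\log t)$ message bound.

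\emph{Main obstacle.} The correctness reductions and the round count are routine; the real work is the Part~2 message bound. The naive estimate multiplies the single-copy message count by $n$, and since the communication pattern of \textsc{Few-Crashes-Consensus} is input-dependent --- different instances make a node alter its rumor at different rounds --- bundling does not trivially prevent this blow-up. The crux is to show, for the worst-case crash pattern, that any fixed node alters its rumor at only $\mathcal{O}(\log n)$ distinct rounds taken over all $n$ instances, and to control the \textsc{Spread-Common-Value} broadcast analogously; this is exactly the step where the expansion and compactness of Ramanujan graphs from Section~\ref{sec:ramanujan-graphs} (Theorems~\ref{thm:ramanujan-expanding}, \ref{thm:ramanujan-compact}, \ref{thm:sparse-dense}) carry the argument.
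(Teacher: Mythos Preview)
Your correctness argument is more thorough than the paper's --- the paper's proof is a two-sentence affair that just says agreement follows from agreeing on each index separately and that the performance bounds follow directly from Theorems~\ref{thm:fom-consensus} and~\ref{thm:gossip}. Your handling of conditions~(1)--(3) via validity and the gossip guarantees is correct and more complete than what the paper offers.

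Where you diverge from the paper is in the message analysis of Part~2. The paper simply invokes Theorem~\ref{thm:fom-consensus} for Part~2, implicitly treating the bundled message count as equal to that of a single copy of \textsc{Few-Crashes-Consensus}. You rightly notice this is not automatic: the communication pattern of Part~1 of \textsc{Almost-Everywhere-Agreement} and of \textsc{Spread-Common-Value} is input-dependent, so different instances trigger sends in different rounds, and bundling does not obviously collapse them. The paper does not address this; it is genuinely glossed over there.

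That said, your proposed resolution is asserted rather than proved, and I am not convinced it goes through as stated. Your key claim --- that over all $n$ instances a fixed link carries a message in only $\cO(\log n)$ distinct rounds because ``a propagated value reaches any given node within logarithmically many rounds of the first instance in which it reaches that node'' --- needs a diameter bound on the subgraph of $G(5t,d)$ induced by the \emph{non-faulty} little nodes, not on the full Ramanujan graph. After removing up to $t$ vertices, that subgraph may have small components in which flooding for some mixed instance never starts (all inputs~$0$) while for another it starts at a single vertex; nothing in Theorems~\ref{thm:ramanujan-expanding}--\ref{thm:sparse-dense} directly gives you an $\cO(\log n)$ diameter here. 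A cleaner route is to observe that only the at most $t$ instances corresponding to nodes that crashed \emph{during} gossip can have non-unanimous inputs (all other instances are all-$1$ or all-$0$ and send only in round~$1$, if at all), so on each link the number of distinct send-rounds beyond round~$1$ is at most~$t$. This yields an $\cO(t^2)$ contribution from Part~1 of \textsc{Almost-Everywhere-Agreement}, which is coarser than what you claim and coarser than what the theorem asserts --- so either a sharper structural argument is genuinely needed, or the paper's stated bound is optimistic. In short: you have put your finger on the right obstacle, but the expansion-based sketch you give does not yet close it.
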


\begin{proof}
All the nodes decide on the same final extant set of names, since it is specified by first agreeing on each member independently.
Running time and communication performance follow from the respective performance bounds of algorithms \textsc{Few-Crashes-Consensus} and  \textsc{Gossip}, as summarized in Theorems~\ref{thm:fom-consensus} and~\ref{thm:gossip}. 
\end{proof}


\section{Byzantine Faults with Authentication}

\label{sec:ab-consensus}



Dolev and Strong~\cite{DolevS83} proposed a deterministic algorithm for 
\dk{Byzantine Broadcast} 
in the model of Byzantine faults and with authentication.
We call it \textsc{DS-algorithm} and use as a sub-routine.
\dk{In this algorithm, there is a source with initial message, and the goal is that all non-faulty processes get the message authenticated by a majority of valid signatures; if the source is Byzantine, either all non-faulty nodes have its initial message or all have value null authenticated.}

Our algorithm is structured into 
\dk{four}
parts.
In Part 1, nodes with the smallest  
\dk{$5t$}
node names, \dk{called little nodes,} initiate \textsc{DS-Algorithm}, which guarantees they accomplish 
\dk{Byzantine Broadcast of each little node's value. 
These are, in fact, parallel $5t$ executions of \textsc{DS-Algorithm} run by the same $5t$ little nodes -- local computations could be done separately at each node, while messages could be combined so that in each round and each pair of sender-receiver, at most one combined message is sent.
At the end, every little node has the same set of initial values (some of them could be null values), although the sets of authenticating signatures could differ. We call such sets {\em authenticated common sets of values}.}


\begin{figure}[t!]

\hrule

\FF

\texttt{Algorithm} \textsc{AB-Consensus} 

\FF

\hrule

\FF

\begin{description}[nosep,leftmargin=2em]
\item[\sf Part~1: \dk{Little nodes obtain authenticated common sets of values:}] \

\texttt{if} 
\dk{$p$ is little}
\texttt{then} 
\begin{enumerate}[nosep,leftmargin=2em]
\item[] execute \textsc{DS-Algorithm}  in $t+1$ rounds, \dk{combining messages with same sender and receiver generated by different parallel executions of \textsc{DS-Algorithm}}
\end{enumerate}
\texttt{else} stay idle for 
\dk{$t+1$}
rounds
\dk{
\item[\sf Part~2: Notifying related nodes of  authenticated common set of values:] \ 

\texttt{if} 
$p$ is little
\texttt{then} notify each related node of the  authenticated common set of values 

\texttt{if} $p$ received an authenticated common set of values from its related little node 
\texttt{then} 
adopt the set

\item[\sf Part~3: Slow propagation of authenticated common set of values:] \ 

\texttt{if} $p$ 
has an authenticated common set of values
\texttt{then} send an authenticated common set of values (arbitrarily chosen, if it has more than one) to all $p$'s  neighbors in graph~$H$

\texttt{for} $\lceil \log_{\frac{3}{2}} \frac{2n/5}{\max\{t,n/t\}} \rceil $ rounds \texttt{do}
\begin{enumerate}[nosep,leftmargin=1em]
\item[]
\texttt{if} $p$ 
has not possessed any authenticated common set of values yet \texttt{and} 
received an authenticated common set of values in the previous round \texttt{then}

\begin{enumerate}[nosep,leftmargin=1em]
\item[]
adopt a received authenticated common set of values (arbitrarily chosen, if it received more than one such set) \texttt{and} forward it to all $p$'s neighbors in $H$ 
\end{enumerate}
\end{enumerate}

\item[\sf Part~4: Fast propagation of authenticated common set of values:] \


\texttt{if} 
$p$ does not have any authenticated common set of values yet
\texttt{then} send an authenticated inquiry to every little node

\texttt{if} $p$ is little 
\texttt{and} 
an authenticated inquiry  received  \texttt{then} 
respond to each inquiring node with~an 
authenticated common set of values

\texttt{if} an authenticated common set of values received as a response for $p$'s own inquiry \texttt{then} adopt the set

\texttt{decide} on the maximum value in the possessed authenticated common set of values
}



\end{description}

\FF

\hrule

\FF

\caption{\label{fig:auth-consensus}
A pseudocode  for a node~$p$.
We assume an authentication mechanism. 
The parameter~$t$ satisfies $0\le t<\frac{n}{2}$ and represents an upper bound on the number of Byzantine nodes.
\dk{Little node is a node with name at most $5t$, and its related nodes are those with the same remainder modulo $5t$. An authenticated common set of values denotes a set of initial messages of little nodes, each authenticated by (at least $4t$) little nodes' valid signatures (in case of faulty initiator, it could be a null value with same type of authentication). Authenticated inquiry means an inquiry message signed by a valid node's signature. Each node can verify if a received set is an authenticated common set of values and if an inquiry is valid -- if not, they are being dropped.}
}
\end{figure}

\dk{In Part~2, little nodes 
send their authenticated common sets of values to their related nodes -- in this section, a related node is a node with the same remainder modulo 
$5t$,
which is the size of the group of the little nodes.}

\dk{In Part~3, nodes propagate their authenticated common sets of values 
using the same mechanism as in Part~1 of algorithm \textsc{Spread-Common-Value}. 
Every time a node receives a set of values of little nodes, it verifies if it contains all values of little nodes and each such value has at least $4t$ valid signatures of little nodes. If not, the node skips such set, and if two such sets or more arrive, one is chosen arbitrarily.}

\dk{In Part~4, nodes that do not have any authenticated common set of values, send authenticated inquiries (by its valid signature) to every little node, which reply with such set after verifying authenticity of the signature (provided that they are not Byzantine).}

\dk{At the end, all non-Byzantine
derive their decisions from their authenticated common set of values in the same manner as any other non-faulty process derives a decision at the completion of an execution of \textsc{DS-Algorithm}, for instance, on the maximum value.}

The algorithm is called \textsc{AB-Consensus}, its pseudocode is in Figure~\ref{fig:auth-consensus}.

\begin{theorem}
\label{thm:ab-consensus}

Assuming Byzantine faults, authentication and $t<\frac{n}{2}$, algorithm \textsc{AB-Consensus} solves consensus with  up to~$t$ Byzantine faults  in $\cO(t)$ rounds with non-faulty nodes sending 
$\cO(t^2+n)$ 
messages. 
\end{theorem}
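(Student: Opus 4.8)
The plan is to isolate one object --- an \emph{authenticated common set of values} $\cW$ that the (at most $5t$) little nodes agree on at the end of Part~1 --- and then to establish (i) that $\cW$ is \emph{unique}, so that any node which ever adopts an authenticated common set in fact adopts $\cW$, and (ii) that every non-faulty node eventually holds $\cW$. Since every node derives its decision from its held set by the same deterministic rule (the maximum value), (i) yields agreement, and it also yields validity, because each decision is exactly the value a non-faulty participant of \textsc{DS-Algorithm} would output from $\cW$; (ii) yields termination. The running time and message bounds are then obtained by a part-by-part tally.

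For the uniqueness in (i): Part~1 runs $5t$ parallel instances of \textsc{DS-Algorithm} among the little nodes, of which at most $t$ are Byzantine and hence at least $4t$ are non-faulty; since \textsc{DS-Algorithm} realizes authenticated Byzantine Broadcast for any number of faults strictly below the number of participants, after $t+1$ rounds all non-faulty little nodes hold one and the same set $\cW$ of the little nodes' broadcast values (the entry of a non-faulty little node being its own input), each entry carried by valid signatures of at least $4t$ little nodes. Now consider any set that carries, for each of its entries, valid signatures of at least $4t$ little nodes; at least $4t-t=3t$ of these per-entry signers are non-faulty, and two distinct such sets would, at a differing entry, share at least $3t+3t-5t=t\ge 1$ common non-faulty signers --- impossible, since a non-faulty little node certifies only the entries of the single set it computed from \textsc{DS-Algorithm}. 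Hence the verification prescribed in the caption of Figure~\ref{fig:auth-consensus} accepts a set only when it equals $\cW$; consequently every adoption in Parts~2--4 is an adoption of $\cW$, a Byzantine node attempting to inject a different set is simply ignored, and on the propagation layer a Byzantine node is therefore never worse than a crashed one --- it may withhold $\cW$, but it cannot substitute anything else.

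For (ii) I would track the number of non-faulty nodes holding $\cW$. After Part~2, each of the $\ge 4t$ non-faulty little nodes has forwarded $\cW$ to its $\Theta(n/t)$ related nodes, so at least $\tfrac{4}{5}n-\cO(t)\ge \tfrac{3}{5}n$ non-faulty nodes hold $\cW$ (using $t=o(n)$). Part~3 is exactly Part~1 of \textsc{Spread-Common-Value} run over the Ramanujan graph $H$, and since corrupt forwards are discarded the crash-model recurrence of Lemma~\ref{lem:part-one-SCV} applies verbatim with the $\le t$ Byzantine nodes in the role of crashes; hence after $\lceil \log_{3/2}\tfrac{2n/5}{\max\{t,n/t\}}\rceil$ rounds at most $\max\{2t,\,t+n/t\}$ non-faulty nodes still lack $\cW$. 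In Part~4 each such node queries all $5t$ little nodes, at least $4t$ of which are non-faulty, hold $\cW$, and answer with it (garbage answers from Byzantine little nodes fail verification), so every non-faulty node ends Part~4 holding $\cW$; this is the analogue, in the Byzantine setting, of the mop-up of Lemma~\ref{lem:constant-time-part-five}.

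It remains to add up the costs. Rounds: $t+1$ in Part~1, one in Part~2, $\lceil \log_{3/2}\tfrac{2n/5}{\max\{t,n/t\}}\rceil=\cO(\log n)$ in Part~3, and $\cO(1)$ in Part~4, for a total of $\cO(t+\log n)=\cO(t)$ in the regime of interest. Messages sent by non-faulty nodes: Part~1, with messages combined across the $5t$ parallel \textsc{DS-Algorithm} executions per ordered little-node pair per round and each non-faulty node relaying a value only upon first extracting it, contributes $\cO(t^2)$ point-to-point messages among the little nodes; Part~2 contributes at most $5t\cdot\lceil n/(5t)\rceil=\cO(n)$; Part~3 contributes $\cO(n)$ since $H$ has constant degree and a node sends $\cO(1)$ messages the first time it obtains $\cW$; and Part~4 contributes $\max\{2t,\,t+n/t\}\cdot 5t$ queries plus at most as many replies, which is $\cO(t^2+n)$. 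Summing gives $\cO(t^2+n)$. The step I expect to be the main obstacle is not the counting but pinning down the semantics of the authenticated common set of values sharply enough to make the uniqueness/verification argument airtight --- in particular that the $4t$-signature threshold is exactly the one forced by having $5t$ little nodes and $\le t$ faults, and that non-faulty little nodes sign nothing but their \textsc{DS-Algorithm} output; a secondary check is that each propagation lemma imported from the crash model continues to hold once signature verification replaces the implicit trust used there.
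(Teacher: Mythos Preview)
Your approach is essentially the paper's: Parts~1--2 establish the authenticated common set~$\cW$ among little nodes and their relatives, Part~3 is analyzed via Lemma~\ref{lem:part-one-SCV} with Byzantine nodes treated as crashes (authentication filtering out forgeries), Part~4 is a direct mop-up by querying all little nodes, and the cost is tallied part by part. Your explicit pigeonhole argument for uniqueness of~$\cW$ (two $4t$-certified sets share at least $3t+3t-5t\ge 1$ common non-faulty signers at any differing entry) is a nice addition that the paper leaves implicit under ``authentication mechanism allowing to recognize whether a received set is an authenticated common set of values.''

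There is one small gap in your Part~4 message count. You write ``$\max\{2t,\,t+n/t\}\cdot 5t$ queries plus at most as many replies,'' but the replies sent by non-faulty little nodes are \emph{not} bounded by the queries sent by non-faulty nodes: a Byzantine node can produce a validly signed inquiry (it cannot forge other nodes' signatures, but it can sign with its own), so each non-faulty little node may receive up to~$t$ additional authenticated inquiries from Byzantine nodes and must reply to them. The paper handles this explicitly, noting that this contributes an additive $\cO(t\cdot 5t)=\cO(t^2)$ replies, which is absorbed in the final bound. Your asymptotic conclusion survives, but the ``at most as many'' step is not justified as stated.

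A minor point: your Part~3 bound $\cO(\log n)$ forces the caveat ``in the regime of interest'' to recover $\cO(t)$. The paper records the sharper $\cO(\log t)$ for Part~3 (which one checks by splitting on $t\gtrless\sqrt{n}$), giving $\cO(t)$ unconditionally.
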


\begin{proof}
The proof is similar to that of Theorem~\ref{thm:fom-consensus}. We first argue that all non-Byzantine nodes decide on the same value among their initial values.

\dk{%
Algorithm solving almost everywhere agreement in algorithm {\sc Few-Crashes-Consensus} is replaced by Parts~1 and~2: parallel execution of $5t$ instances of \textsc{DS-Algorithm}, followed by sending authenticated common sets of values to related nodes.
This way, all non-faulty little nodes and their related non-faulty nodes have the same authenticated common sets of values (again, the sets of signatures could differ, but the values associated with little nodes are the same). All these values are initial values, could also be null in case of Byzantine originator -- but at least $4t$ are non-Byzantine. It follows directly from~\cite{DolevS83} and authentication mechanism used by little nodes and their related nodes. 

The same type of argument as in Lemma~\ref{lem:part-one-SCV}
combined with authentication mechanism allowing to recognize whether a received set is an authenticated common set of values, at the end of Part $3$ at most $\max\{Ct, t + \frac{n}{t}\}$ non-faulty nodes do not have any authenticated common set.

Finally, in Part 4, all such nodes generate authenticated inquiries and send them to all little nodes. The little nodes verify the signatures, and send their authenticated common set of values as a reply. Note that each little node can receive at most $\max\{Ct, t + \frac{n}{t}\}$ such requests from non-faulty nodes and at most $t$ from Byzantine nodes (because of authentication, Byzantine nodes cannot forge messages, pretend to be other nodes and thus generate more than one authenticated message for any little node). This way all non-faulty nodes get an authenticated common set, as there is at least one non-faulty little node that has such set. At the end of Part~4, each node makes decision on the maximum value in its authenticated common set, and since they are the same sets of values -- it is the same and is one of the initial values.

We now argue about time and message complexities.
Part~1 works in running time $\cO(t)$, by~\cite{DolevS83}, and uses $\cO(t^2)$ messages, as it is executed by $\cO(t)$ nodes only. Part~2 takes only one round
and $O((2t+1)\cdot \frac{n}{2t+1})=O(n)$ point-to-point messages sent by non-faulty nodes. 
Part~3 takes $O(\log t)$ time and each non-faulty node sends $O(1)$ messages, hence $O(n)$ messages in total.
Finally, Part~4 takes only two rounds, during which at most $\max\{Ct, t + \frac{n}{t}\}$ non-faulty nodes send $5t$ inquiring messages, each. The total number of such inquiries is, therefore, $O(t^2+n)$. The number of replies sent by non-faulty processes could be a bit bigger, as little nodes may reply to authenticated inquiries coming from Byzantine nodes -- there are, however, at most $t$ such inquiries arriving at each non-faulty little node, hence the additive overhead on the number of replies is only $O(t^2)$. Thus, the total time complexity is $O(t)$ and total number of point-to-point messages set by non-faulty processes in $O(t^2+n)$.
%
}%
\end{proof}

\remove{

\Paragraph{Gossiping.}

Algorithm \textsc{AB-Gossip} in the setting of Byzantine faults with authentication has a similar structure as its counterpart \textsc{Gossip} for crashes, except messages are sent [ ??? ]

\begin{theorem}
\label{thm:ab-gossip}

Assuming Byzantine faults, authentication and $t<\frac{n}{2}$, algorithm
\textsc{AB-Gossip} solves gossiping  in $\cO(t)$ rounds using $\cO(t^2+n+t\log n)$ messages. 
\end{theorem}

\begin{proof}
[ ??? ]
\end{proof}

\Paragraph{Checkpointing.}

Algorithm \textsc{AB-Checkpointing} has exactly the same structure as its counterpart for crash failures, algorithm \textsc{Checkpointing}. 
The only difference is that instead of using the Consensus algorithm \textsc{Few-Crashes-Consensus} for crashes it uses \textsc{AB-Consensus}, and instead of using algorithm \textsc{Gossip}  for crashes it uses its Byzantine with authentication counterpart \textsc{AB-Gossip}. 
Correctness holds as long as $t<\frac{n}{2}$, because this is a restriction for \textsc{AB-Consensus}.

\begin{theorem}
\label{thm:ab-checkpointing}

Assuming Byzantine faults, authentication and $t<\frac{n}{2}$, algorithm
\textsc{AB-Checkpointing} solves checkpointing in $\cO(t)$ rounds using $\cO(t^2+n+t\log n)$ messages. 
\end{theorem}

\begin{proof}
The number of rounds follow from Theorem~\ref{thm:checkpointing}.
Observe that the structure of \textsc{AB-Checkpointing} is the same as \textsc{Checkpointing} and the replaced consensus and gossip components take the same time asymptotically as their couterparts for crashes.

The number of messages follows from Theorem~\ref{thm:checkpointing}.
Observe that the number of messages of \textsc{Few-Crashes-Consensus}, which is $\cO(n+t\log n)$, needs to be replaced by the number of messages of \textsc{AB-Consensus}, which is $\cO(t^2+n+t\log n)$, by Theorem~\ref{thm:ab-consensus}. 
Similarly, an additional component $\cO(t^2)$ occurs during gossiping, as given in Theorem~\ref{thm:ab-gossip}. 
Each of these algorithms is used only once, though consensus in $n$ concurrent instances, which multiplies the size of messages by $n$ but does not affect their number. 
\end{proof}

}

\section{The Single-Port Model}

\label{sec:single-port}

In this section, we explain how the consensus algorithms for crashes from Section~\ref{sec:agreement-crashes}, designed and analyzed for the multi-port model, can be adapted to the single-port model while maintaining the optimality of asymptotic performance bounds. 
The consensus algorithm adapted to the single-port model is called \textsc{Linear-Consensus}.
It can be implemented in the single-port model with the same asymptotic performance as given in Theorem~\ref{thm:fom-consensus} for the multi-port model, with an extra additive $\cO(\log n)$ term in running time performance bound, as specified in Theorem~\ref{thm:linear-consensus-single-port}. 

We design an adaptation by structuring communication  in the single-port model into  message-passing rounds, called {\em mp-rounds}. 
In order to distinguish  mp-rounds from rounds in the original execution of algorithm \textsc{Linear-Consensus} in the single-port  model, we call the 
latter  {\em sp-rounds}.
Local computation during an mp-round, of algorithm \textsc{Few-Crashes-Consensus}, is performed at the very end of the last sp-round of the implementation of communication part of the mp-round.

Next we discuss the adaptations of parts of algorithms:

Regarding Parts 1 and 2 of \textsc{Almost-Everywhere-Agreement}:
The overlay graph $G$ has constant degree, therefore each original mp-round is implemented in $2d$ sp-rounds in the single-port model, where $d=\cO(1)$ is the degree of~$G$.
In the first $d$ sp-rounds of the implementation of an mp-round, a node that intends to transmit in the mp-round chooses its neighbors in $G$ in an arbitrary order and sends messages to them one-by-one. 
In the remaining $d$ sp-rounds, each nodes goes through its $d$ in-coming ports one-by-one and retrieves messages from them, at most one message per port.

Regarding Part 3 of \textsc{Almost-Everywhere-Agreement} and Parts~1 and~2 of \textsc{Spread-Common-Value}: They require some modifications in the original algorithm. 
Due to the lower bound on consensus runtime in the single-port model, see  Theorem~\ref{thm:sp-lower}, we could add components of time $\Theta(\log n)$.

Regarding Part~3 of \textsc{Almost-Everywhere-Agreement}:
If $t> \sqrt{n}$, then the number of links used by little nodes is $\frac{n}{t}<t$,  so we can schedule communication through these links similarly as in previous parts, which takes time $\cO(t)$ (relative nodes have just one in-coming port to check each round). 
If $t\le \sqrt{n}$, we could apply the same approach as in~Part~1 of \textsc{Spread-Common-Value}, as explained next.

Regarding Part~1 of \textsc{Spread-Common-Value}:
Similarly as in Parts~1 and~2 of \textsc{Almost-Everywhere-Agreement}, an overlay graph~$H$ has constant degree, therefore the same implementation as in Parts~1 and~2 of \textsc{Almost-Everywhere-Agreement} works, number~$d$ being adjusted to the degree~of~$H$. 


Regarding Part 2 of \textsc{Spread-Common-Value}:
At the start of that part the number of nodes that have not decided yet 
is at most $t+1$ in the implementation in the single-port model.
Similarly as in the previous Part~1, the inquiries about decision values in consecutive phases are implemented through inquiring the neighbors in the overlay graph $G_i$ determined by the phase number~$i$. 
It suffices for each node to inquire $3t+1$ nodes to find one with decision, therefore it is enough to schedule phases until the degree of the graph $G_i$ grows beyond $3t$. 
This way each node schedules only $\cO(t)$ links for communication, in the whole Part~2, which gives a running-time bound $\cO(t)$. 
The upper bound on the number of messages remains the same as in the proof of Theorem~\ref{thm:fom-consensus} for the original Part~2, as a subset of original links is used in the single-port implementation.

\begin{theorem}
\label{thm:linear-consensus-single-port}

Algorithm \textsc{Linear-Consensus} can be implemented in the single-port model such that it solves consensus in $\cO(t+\log n)$ rounds with $\cO(n+t\log n)$ bits sent in messages.
\end{theorem}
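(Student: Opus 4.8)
The plan is to argue that each round of algorithm \textsc{Few-Crashes-Consensus} (equivalently \textsc{Linear-Consensus} in the multi-port model) can be simulated in the single-port model by a bounded number of \emph{sp-rounds}, and that the total blow-up in time is $\cO(t+\log n)$ while the bit-communication bound $\cO(n+t\log n)$ is preserved exactly (since the single-port implementation only ever uses a subset of the links that the multi-port algorithm would use, and each message still carries $\cO(1)$ bits). First I would invoke Theorem~\ref{thm:fom-consensus}, which gives correctness and the bounds $\cO(t+\log n)$ rounds and $\cO(n+t\log t)$ one-bit messages in the multi-port model; correctness is inherited verbatim because the single-port schedule delivers exactly the same set of messages, only spread over several sp-rounds, and local computation is deferred to the end of each simulated mp-round so no node ever acts on stale information.

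Next I would go part by part, exactly along the lines sketched in Section~\ref{sec:single-port}. For Parts~1 and~2 of \textsc{Almost-Everywhere-Agreement}, the overlay graph~$G$ has constant degree $d=5^8$, so each mp-round is implemented in $2d=\cO(1)$ sp-rounds (first $d$ sp-rounds to push messages out one port at a time, then $d$ sp-rounds to pull from the incoming ports); thus Parts~1 and~2 still take $\cO(t)+\cO(\log t)=\cO(t+\log n)$ sp-rounds and the message count is unchanged. For Part~3 of \textsc{Almost-Everywhere-Agreement}: a decided little node must notify its $\tfrac{n}{5t}$ related nodes, and a non-little node has exactly one incoming edge (its related little node), so when $t>\sqrt n$ we have $\tfrac{n}{5t}<t$ and the notification schedule fits in $\cO(t)$ sp-rounds; when $t\le\sqrt n$ we instead route notifications exactly as in Part~1 of \textsc{Spread-Common-Value} through the constant-degree graph~$H$, again costing $\cO(\log n)+\cO(1)$ per mp-round. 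For Part~1 of \textsc{Spread-Common-Value}, $H$ has constant degree~$\Delta$, so the same $2\Delta$-sp-round simulation applies and the $\cO(\log t)$ mp-rounds become $\cO(\log t)$ sp-rounds. For Part~2 of \textsc{Spread-Common-Value}, at most $t+1$ nodes remain undecided; each such node only needs to reach one decided node, and since at least $3t$ non-faulty little nodes have decided (Lemma~\ref{lem:few-end-part2-number}), it suffices to inquire $3t+1$ nodes — so in the $G_i$-based schedule one stops as soon as $\deg(G_i)$ exceeds $3t$, after which every undecided node has scheduled only $\sum_i \cO(2^i)=\cO(t)$ outgoing links in total, giving an $\cO(t)$-sp-round bound for Part~2; the message bound is no larger than in Theorem~\ref{thm:fom-consensus} because we use a sub-multiset of the original links.

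Summing the four parts gives the running-time bound $\cO(t)+\cO(\log n)+\cO(\log t)+\cO(t)=\cO(t+\log n)$, and the bit bound stays $\cO(n+t\log n)$ since the $t\log t$ term is absorbed and no new messages are introduced. The main obstacle I expect is the Part~3 / Part~2-SCV bookkeeping: one must be careful that the single-port restriction (one message per port per sp-round) does not force a node with many incoming inquiries — a little node could receive up to $\Theta(t)$ inquiries in Part~2 of \textsc{Spread-Common-Value} — to spend $\omega(t)$ sp-rounds draining its ports. The fix is to observe that it is enough for the little node to answer \emph{some} constant number of inquirers per mp-round (or indeed just to have each inquirer query enough little nodes so that each such query collides with few others), and that the phase structure on the $G_i$'s already guarantees every undecided node queries $\Theta(2^i)$ distinct little nodes in phase~$i$; a counting argument then bounds the worst-case per-node sp-round load by $\cO(t)$. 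Making this collision/load argument precise, while keeping the total message count at $\cO(n+t\log n)$, is the one place where the proof needs genuine care rather than routine translation.
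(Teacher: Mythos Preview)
Your proposal is correct and follows essentially the same route as the paper: the same part-by-part simulation with $2d$ sp-rounds per mp-round on the constant-degree overlays $G$ and $H$, the same $t\gtrless\sqrt n$ case split for Part~3 of \textsc{Almost-Everywhere-Agreement}, and the same ``run the $G_i$-phases until $\deg(G_i)>3t$'' cutoff for Part~2 of \textsc{Spread-Common-Value}. Your incoming-load worry at responding little nodes is in fact resolved automatically by the bounded degree of each~$G_i$ (a little node has degree $\cO(2^i)$ in $G_i$, so summing over phases up to $2^i\approx 3t$ gives $\cO(t)$ total incoming inquiries), which your sketch already gestures at; the paper's proof simply does not raise this point at all, so you are being more careful than the paper here rather than diverging from it.
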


\begin{proof}
Using arguments formulated above for each analyzed part, an inductive argument over all parts and phases of algorithm \textsc{Linear-Consensus} shows that similar properties as in the original analysis of the counterpart multi-port algorithm \textsc{Few-Crashes-Consensus} hold; with the exception  of the implemented \dk{Part~3 of \textsc{Almost-Everywhere-Agreement} and Part~2 of \textsc{Spread-Common-Value}, which are of length $\cO(t)$ in the single-port model, while Part~1 of \textsc{Spread-Common-Value}, which is of length $\cO(\log n)$ in the single-port model}.
The correctness and time/communication complexity of the \textsc{Linear-Consensus} algorithm  implemented in the single-port model follows from the correctness and complexity of the original algorithm in the multi-port message-passing model, as in Theorem~\ref{thm:fom-consensus}, with an additional additive component $\cO(\log n)$ to the time complexity. 
\end{proof}

We prove a lower bound $\Omega(t+\log n)$ on the running time for the problems of consensus, gossiping and checkpointing in the single-port model. The running time of the consensus algorithm adapted to the single-port model matches this lower bound.

\begin{theorem}
\label{thm:sp-lower}

Any deterministic algorithm solving fault-tolerant consensus, gossip or checkpointing in the single-port model requires time $\Omega(t+\log n)$, for any $0<t<n$.
\end{theorem}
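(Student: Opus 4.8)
The plan is to prove the two lower bounds $\Omega(t)$ and $\Omega(\log n)$ separately and combine them, since the worst-case running time is at least the larger of the two and $\max\{a,b\}\ge (a+b)/2$. For the $\Omega(t)$ term I would invoke the classical argument, which already applies in the multi-port model and hence a fortiori under the single-port restriction: for consensus, the layered ``critical configuration'' technique builds a sequence of executions, each obtained from the previous one by letting one more node crash one round later, that are pairwise indistinguishable to some node that has not yet decided, so $t$ crashes force $t+1$ rounds; for checkpointing, agreement together with the membership conditions on node~$1$ embeds a binary-consensus instance (all non-faulty nodes must agree on whether node~$1$ is in the extant set, and that bit is forced to~$0$ when node~$1$ crashes before sending anything), so the same bound applies; for gossip, the same chain of executions is applied directly to the question whether node~$1$ is present in the extant set, using that node~$1$ must be present when it halts operational and absent when it crashes before sending any message.

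For the $\Omega(\log n)$ term, which is where single-port communication is essential, I would first set up a notion of causal influence. Given an execution $E$, a node $v$ and a round $r$, say that node $w$ \emph{causally influences} $(v,r)$ in $E$ if there is a sequence of message deliveries $w=u_0\to u_1\to\cdots\to u_k=v$ in $E$ happening in strictly increasing rounds $r_1<r_2<\cdots<r_k\le r$, where the empty sequence (so $w=v$) always counts; let $S_r^{(w)}=\{v:\ w\text{ causally influences }(v,r)\text{ in }E\}$. The two facts I would establish are, first, an \emph{influence lemma}: the state of $v$ after round $r$ in $E$ depends only on the inputs and the crash behaviour of the nodes that causally influence $(v,r)$, so that changing the input of a node that does not causally influence $(v,r)$, or crashing such a node at round~$0$, leaves $v$'s state after round $r$ unchanged, provided the adversary keeps its delivery decisions fixed on all messages sent by nodes whose behaviour is unchanged; and, second, a \emph{doubling bound}: in the single-port model $|S_0^{(w)}|=1$ and $|S_r^{(w)}|\le 2|S_{r-1}^{(w)}|$, because a node joins $S_r^{(w)}$ beyond $S_{r-1}^{(w)}$ only by receiving, in round $r$, from a node of $S_{r-1}^{(w)}$, and in round $r$ each node of $S_{r-1}^{(w)}$ sends at most one message; hence $|S_r^{(w)}|\le 2^r$. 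The influence lemma is a routine induction on rounds, in which the single-port structure is used only to note that the recipient a node selects in a round is a function of that node's state, so uninfluenced nodes keep sending identical messages to identical recipients.

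Then I would assemble the bound. For consensus, consider the failure-free executions on the inputs $\vec x^{(i)}$, $0\le i\le n$, in which exactly the nodes $1,\dots,i$ start with~$1$; by validity the common decision is $0$ on $\vec x^{(0)}$ and $1$ on $\vec x^{(n)}$, so it differs between $\vec x^{(i-1)}$ and $\vec x^{(i)}$ for some~$i$, and these inputs differ only in node~$i$. In the execution on $\vec x^{(i-1)}$ every node $v$ must have node~$i$ causally influencing $(v,r_v)$, where $r_v$ is $v$'s decision round, since otherwise the influence lemma would make $v$ decide identically on $\vec x^{(i)}$, contradicting agreement; thus all $n$ nodes lie in $S_r^{(i)}$ at the halting round $r$, whence $2^r\ge n$, i.e.\ $r\ge\log_2 n$. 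For gossip and checkpointing, take any failure-free execution: every decided extant set contains node~$n$, whereas in the otherwise-identical execution in which node~$n$ crashes before sending any message (admissible since $t\ge1$) node~$n$ must be excluded, so by the influence lemma every node must be causally influenced by node~$n$ before deciding, and the doubling bound again forces the halting round to be at least $\log_2 n$.

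The hard part will be making the influence lemma airtight in the single-port setting: since the algorithm chooses senders and possibly which incoming link to read, one must rule out that altering the input of an uninfluential node creates, through a cascade of changed routing decisions, a new influence path to~$v$. I would handle this by carrying the induction jointly over node states, selected recipients and the resulting delivery pattern, observing that any node which would receive a message from an influenced sender is itself influenced, so the induction never escapes the set of uninfluenced nodes.
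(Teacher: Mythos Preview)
Your $\Omega(\log n)$ argument via causal influence is correct and is essentially a cleaner variant of the paper's approach: the paper tracks the set of nodes whose states \emph{differ} between two executions on adjacent inputs and shows it at most triples per round (each differing node may send to a different recipient in each execution), yielding $\log_3 n$, whereas you track the forward light-cone of the single flipped node in one execution and get doubling, yielding $\log_2 n$. Your influence lemma does need the single-port \emph{receive} restriction (a node chooses one port to listen on), since otherwise a node outside $S_{r-1}^{(w)}$ could pick up an extra message in the modified execution from some node inside $S_{r-1}^{(w)}$ that now targets it; the paper's difference-set argument avoids this subtlety. For the reduction of gossip/checkpointing to the $\Omega(\log n)$ bound, the paper goes through consensus (fix any function from subsets of $[n]$ to $\{0,1\}$ sending $\emptyset\mapsto 0$ and $[n]\mapsto 1$, and decide by applying it to the extant set), while you argue directly that every node must lie in node~$n$'s light-cone; both work.

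There is, however, a genuine gap in your $\Omega(t)$ argument for \emph{gossip}. You write that the classical chain argument ``already applies in the multi-port model and hence a fortiori under the single-port restriction,'' but this is false for gossip: in the multi-port model gossip is solvable in a single round (everyone broadcasts its rumor; the resulting extant sets satisfy both conditions because a node that crashes mid-round neither ``crashed prior to sending any message'' nor ``halted operational''). The chain argument for consensus and checkpointing relies crucially on \emph{agreement} to propagate the decision across the chain (if $p_j$ cannot distinguish $E_j$ from $E_{j+1}$ then, by agreement, \emph{everyone's} decision is the same in both), and gossip has no agreement requirement. Your proposed fix, ``apply the chain directly to whether node~$1$ is present,'' does not work: without agreement you would need a single fixed node $v$ that cannot distinguish \emph{every} adjacent pair in the chain, which the standard construction does not give. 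The paper's proof handles gossip $\Omega(t)$ by a single-port-specific adversary: pick a node~$v$, run two executions with all-$0$ inputs versus all-$1$-except-$v$, and in each round crash the (at most two, one per execution) nodes that $v$ would communicate with; after $t/2$ rounds $v$ has sent and received nothing and so has the same state in both executions, yet must output different extant sets. You should replace your gossip $\Omega(t)$ paragraph with this isolation argument.
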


\begin{proof} 
The lower bound $\Omega(t)$ on running time for consensus and checkpointing holds in the general message-passing model, see~\cite{Attiya-Ellen-book-2014,Attiya-Welch-book2004}.
In order to prove the $\Omega(t)$ bound for gossiping, we use the properties of the single-port model.
Consider a deterministic algorithm solving gossiping for $t=0$ crashes; if $t>0$, the adversary may mimic the strategy for $t=0$. 
The adversary may choose a single node arbitrarily, say $v$, and assign to it an arbitrary value, say $0$. 
Consider the following two initial configurations of input values: in $C_0$ all nodes have input value~$0$ and in~$C_1$ all nodes but~$v$ have input value~$1$.

We construct two executions, $\cE_0$ and $\cE_1$ of a gossiping algorithm, round by round, starting on configurations $C_0$ and $C_1$, respectively. 
Let $\cE_0[i],\cE_1[i]$ denote the executions $\cE_0,\cE_1$, respectively, built up to the end of round $i>0$. 
We may assume, for technical reason, that $\cE_0[0]$ and $\cE_1[0]$ are simply the initial configurations $C_0$ and $C_1$, respectively.
Note that the algorithm at node $v$ cannot stop before the first round, as it would have to output different sets of values in the two executions while having same state in both, which would be a contradiction.
In each round the adversary crashes at most two nodes.
Suppose we build, recursively, a sequence of executions up to the end of rounds $0\le i\le \frac{t}{2}$, which are executions $\cE_0[i]$ and $\cE_1[i]$. 
In order to extend the execution to the next round~$i+1$, the adversary runs a simulation of the algorithm and pre-computes which port $v$ is going to use in round $i+1$, based on its states in already built executions $\cE_0[i]$ and $\cE_1[i]$. 
In each of these two executions, the adversary may choose at most one port in round $i+1$, say, to nodes $w_0,w_1$ respectively. 
The adversary builds valid executions $\cE_0[i+1]$ and $\cE_1[i+1]$ as follows: it takes $\cE_0[i]$ and $\cE_1[i]$ and in each of them it additionally crashes nodes $w_0$ and $w_1$ at the start of round~$1$. 

We obtain the following invariant, which follows by induction on $i$: by round $i+1\le $ of executions $\cE_0[i+1]$ and $\cE_1[i+1]$, there are at most $2(i+1)\le t$ nodes crashed and no message is successfully transferred to/from node $v$.

By this invariant, node $v$ is in the same state at the end of these two executions.  
Hence, it cannot halt, because in both these executions it would output the same set of values, implied by its state, which would be a contradiction since the sets of values $C_0$ and $C_1$ in these two executions are  different. This proves the lower bound $\Omega(t)$ for gossiping.

The lower bound $\Omega(\log n)$ on time is first proved for consensus and later extend to gossiping and checkpointing.
We proceed as follows. 
Consider a deterministic algorithm solving consensus for $t=0$ crashes (if $t>0$, the adversary mimic the strategy for $t=0$).
Let an initial configuration be a 0-1 vector that on its $i$-th coordinate has the initial value of the $i$-th node.

We claim that there are two initial 0-1 configurations $C_0$ and $C_1$ that differ only on one coordinate/node $v$ and such that the consensus algorithm agrees on $0$, or~$1$, when starting with $C_0$, or~$C_1$, respectively.

The proof of the claim follows directly from the following observation. 
Consider initial configurations $C^*_{< i}$, defined for $1\le i \le n+1$ in such a way that in $C^*_{<i}$ all nodes with name smaller than~$i$ start with value $0$ and the remaining ones start with value $1$. 
Clearly, in $C^*_{<1}$ all nodes start with value $1$, therefore, by validity, the agreement must be on value $1$. 
Analogously, in $C^*_{<n+1}$ all nodes start with value $0$, therefore, by validity, the agreement must be on value $0$. 
Therefore, there is $1\le i\le n$ such that $C^*_{<i}$ results in agreement on value $1$ while $C^*_{<i+1}$ results in agreement value $0$. 
Note that these two configurations differ on values of exactly one node, the $i$-th one. 
We set $C_0$ to $C^*_{<i+1}$ and $C_1$ to $C^*_{<i}$. 
This completes the proof of the claim.

In the remainder of the proof, we use similar notation as  for gossiping. Consider executions $\cE_0$ and $\cE_1$ of the consensus algorithm starting on configurations $C_0,C_1$, respectively. 
Let $\cE_0[i]$ and $\cE_1[i]$ denote the executions $\cE_0$ and $\cE_1$, respectively, up to the end of round~$i>0$. 
We may assume that $\cE_0[0]$ and $\cE_1[0]$ are simply the initial configurations $C_0$ and $C_1$, respectively.

The following property is an invariant : after round $0\le i< \log_3 n$, at most $3^i$ nodes has different states at the end of executions $\cE_0[i]$ and~$\cE_1[i]$.

The proof of the invariant is by induction on round number~$i$. 
For the basis step, consider the beginning of the computation in round~$0$.
We have that exactly one node has different states in $\cE_0[i],\cE_1[i]$, by definition of the configurations $C_0$ and $C_1$, so therefore  the invariant holds for $i=0$.
For the inductive step, assume that it holds for some $0\le i< \log_3 n - 1$, and we prove it for $i+1$. 
Let $A[i]$ be the set of nodes with different states at the end of both $\cE_0[i],\cE_1[i]$. 
By the invariant for~$i$, we have $|A[i]|\le 3^i$. 
Each of them may send a message to at most one other node in round $i+1$ in each of the two executions, $\cE_0[i+1],\cE_1[i+1]$. 
Therefore, at most $2\cdot |A[i]|$ nodes outside of $A[i]$ may get a message from some node in $A[i]$ in some of $\cE_0[i+1]$ and $\cE_1[i+1]$, and only those nodes may change to different states in executions $\cE_0[i+1]$ and $\cE_1[i+1]$. 
Hence, at the end of executions $\cE_0[i+1]$ and $\cE_1[i+1]$, at most $|A[i]|+2|A[i]|=3|A[i]|\le 3^{i+1}$ may have different states in them, where the last inequality follows from the invariant for~$i$. 
This completes the proof of the invariant for $i+1$, and thus for all $i<\log_3 n$.

Suppose, to the contrary, that all nodes stop in both execution by round $i<\log_3 n$. 
By the choice of configurations $C_0,C_1$, all nodes decide on~$0$ and all on~$1$, respectively. 
By the invariant, there are $n-3^i>0$ nodes that have the same ending state in $\cE_0[i]$ and $\cE_1[i]$, which contradicts that it has to decide on~$0$ in $\cE_0[i]$ and on~$1$ in~$\cE_1[i]$. 
Thus, the algorithm requires $\Omega(\log n)$ rounds in a worst-case execution.

To extend this lower bound to gossiping and checkpointing, consider scenario with no failures. 
First, we arbitrarily choose a function from all subsets of $n$ nodes to values $0,1$ such that it assigns value $0$ to an empty set and value $1$ to the whole set $[n]$. 
Next, we design a consensus algorithm that applies gossiping or checkpointing algorithm to learn a set of nodes starting with value $1$ (while others have initial value $0$), and then apples the function to compute decision value.
Since we consider failure-free scenario, the results of gossiping and checkpointing are the same at every node, therefore they decide on the same value and satisfy validity (by the assumption on the chosen function). Therefore, if there was a gossiping or checkpointing algorithm with time complexity $o(\log n)$, the above mentioned scenario would result in a consensus algorithm in the same time $o(\log n)$, which would be a contradiction with the lower bound $\Omega(\log n)$ on consensus.
\end{proof}

\section{Discussion}

We developed algorithms that provide linear running time and communication for some ranges of upper bound~$t$ on the number of faults with respect to~$n$, as summarized in Table~\ref{table:summary} in Section~\ref{sec:introduction}.
Extending these ranges closer to~$n$ is an open problem.

Determining optimal performance bounds on running time and communication of deterministic algorithms for checkpointing is an open problem.
The  message complexity of the time-optimal checkpointing algorithm for crashes given in Section~\ref{sec:checkpoining} may miss message optimal performance by a poly-logarithmic factor in the multi-port model.
The time performance of the message-optimal algorithm for checkpointing given by Galil, Mayer, and Yung~\cite{GalilMY95} may miss a time-optimal performance by a polynomial factor in the multi-port model.
We conjecture that the optimal running time and communication performance bounds for checkpointing in the single-port model are comparable asymptotically to those   in the multi-port model.

This work demonstrates that Ramanujan graphs used as overlay networks are conducive to structuring synchronous communication to support time and communication efficient algorithms for consensus with nodes prone to  crashes or Byzantine faults with authentication.
We expect that the newly discovered properties of Ramanujan graphs could be applied to streamline performance of algorithms in synchronous distributed systems and communication networks for problems like gossip, counting, and majority consensus. 

It is also intriguing if Ramanujan graphs used as overlay networks  could enhance  communication performance of algorithms in asynchronous message passing~systems.

\bibliographystyle{plain}

\bibliography{references}

\end{document}